\journal{ }
\numberwithin{equation}{section}
\newtheorem{theorem}{Theorem}[section]
\newtheorem{lemma}[theorem]{Lemma}
\newtheorem{corollary}[theorem]{Corollary}
\newtheorem{proposition}[theorem]{Proposition}
\newtheorem{problem}{RH problem}[section]
\newtheorem{main theorem}{Theorem}
\numberwithin{figure}{section}
\begin{document}
	
	
	
	\begin{frontmatter}
		\title{Soliton resolution and  asymptotic stability of $N$-loop-soliton  solutions  for the Ostrovsky-Vakhnenko equation}
		

		\author[inst2]{Ruihong Ma}

		\author[inst2]{Engui Fan$^{*,}$  }
		
		\address[inst2]{ School of Mathematical Sciences and Key Laboratory of Mathematics for Nonlinear Science, Fudan University, Shanghai,200433, China\\
			* Corresponding author and e-mail address: faneg@fudan.edu.cn  }
		
		
		
		

		\begin{abstract}

			The  Ostrovsky-Vakhnenko (OV)  equation
			\begin{align*}
				&u_{txx}-3\kappa u_x+3u_xu_{xx}+uu_{xxx}=0
			\end{align*}
			is a 	 short wave model of  the well-known  Degasperis-Procesi   equation
			and admits a  $3\times 3$ matrix  Lax pair.
			In this paper,  we study the soliton resolution and  asymptotic stability of $N$-loop soliton  solutions for the  OV equation with Schwartz initial data that supports soliton solutions.
			It is shown that the solution of the Cauchy problem can be characterized via a $3\times 3$ matrix  Riemann-Hilbert (RH)  problem in a new scale.
			Further by deforming the RH problem into   solvable models  with $\bar\partial$-steepest descent method, we obtain the soliton resolution  to the OV equation
			in two  space-time regions $x/t>0$ and $x/t<0$. This  result also  implies  that $N$-loop soliton solutions of the OV equation are asymptotically stable.
			\\[5pt]
			{\bf Keywords:}  Ostrovsky-Vakhnenko equation,
			Riemann-Hilbert problem, $\bar\partial$-Steepest descent method,  Soliton resolution, Asymptotic  stability\\[3pt]
			{\bf   Mathematics Subject Classification:} 35P25; 35Q51; 35Q15; 35A01; 35G25.
		\end{abstract}
	\end{frontmatter}
	
	\tableofcontents
	\section{Introduction}
	In this paper, we study soliton resolution and asymptotic stability of $N$-loop soliton solutions  for the Ostrovsky-Vakhnenko   (OV) equation on the line
	\begin{align}
		&u_{txx}-3\kappa u_x+3u_xu_{xx}+uu_{xxx}=0,\label{eq:1.1}\\
		&u(x,0)=u_0(x),\quad (x,t)\in\mathbb{R}\times\mathbb{R}^+, \label{inieq:1.1}
	\end{align}
	where $\kappa>0$ is a parameter and  the initial value  $u_0$ is assumed in the Schwartz space $  \mathcal{S}(\mathbb{R})$ satisfying $1-u_{0xx}>0$.
	The OV equation \eqref{eq:1.1} arises in the theory of propagation of surface waves in deep water \cite{KL14}.
	For $\kappa=0$ and $\kappa=-1/3$, the OV equation \eqref{eq:1.1} respectively reduces to the derivative Burgers equation
	\begin{equation}
		(u_t+uu_{xx})_{xx}=0,
	\end{equation}
	and the differentiated Vakhnenko equation  \cite{EJ93}
	\begin{equation}\label{eq:1.4}
		(u_t+uu_{x})_x+u=0.
	\end{equation}

	The OV equation  \eqref{eq:1.1} is also  called  the short wave model of  the well-known  Degasperis-Procesi (DP) equation 
	\begin{equation}\label{eq:1.2}
		u_t-u_{xxt}+3\kappa u_x+4uu_x=3u_xu_{xx}+uu_{xxx},
	\end{equation}
	since under a scaling  transformation
	\begin{equation*}
		x\to \epsilon x,\quad t\to  t/\epsilon,\quad  u\to \epsilon^2 u,
	\end{equation*}
	then the DP equation \eqref{eq:1.2} becomes  OV equation  \eqref{eq:1.1} as $\epsilon\to0$.
	The DP equation was first discovered  by Degasperis and Procesi in   \cite{AM}.
	Afterward it was found that the DP equation  arises for modeling the propagation of shallow water waves over a flat bed in so-called
	moderate amplitude regime  \cite{RS,RI1, AD}.
	In recent years, there has been  attracted a lot of attention on  the DP equation \eqref{eq:1.2}
	due to its integrable structure and pretty mathematical
	properties \cite{J1,Y1}.   Liu and Yin proved the global existence and blow-up phenomena for the DP equation  \cite{YZ}.
	Constantin,  Ivanov and Lenells developed  the inverse scattering transform method \cite{ARIJ}  and  dressing transform method \cite{CI}  for the DP equation.
	Lenells proved  that the solution of  the initial-boundary value problem for the DP equation on the half-line
	can be expressed in term of  the solution of a RH problem  \cite{J2}.
	Later Boutet de Monvel,  Lenells and Shepelsky  successfully extended the Deift-Zhou  nonlinear steepest descent method  \cite{DP93}
	to the Cauchy problem \cite{AD2}  and  initial-boundary value  problem \cite{AJD}   to  obtain the long time asymptotics for  the   DP equation.
	Hou, Zhao, Fan and Qiao  constructed the algbro-geometric solutions  for the DP hierarchy
	\cite{HZF}.    Feola,   Giuliani and    Procesi developed the KAM theory close to an elliptic fixed point for quasi-linear Hamiltonian perturbations of the
	DP  equation on the circle    \cite{FGP}.

	In recent years, much work also has been done to study the various mathematical properties of the OV equation \eqref{eq:1.1}. For example, as an integrable system associated with
	$3\times 3$ matrix spectral problem,  the initial-boundary value problem for the OV equation on the half-line was investigated via the Fokas unified method \cite{JE16}. A bi-Hamiltonian formulation for the OV equation was established by using its higher-order symmetry and a new transformation to the Caudrey-Dodd-Gibbon-Sawada-Kotera equation \cite{JS13}. The shock solutions and singular soliton solutions, such as peakons, cuspons, and loop solitons, for the OV equation were constructed by developing the discontinuous Galerkin method \cite{QY19}. The well-posedness of the Cauchy problem for the OV equation and its relatives such as reduced Ostrovsky equation, generalized Ostrovsky equation  in Sobolev spaces has been widely studied using analytic techniques \cite{MD12,KK11,LA06,AY10}. It has been shown that the Cauchy problem for the OV equation has a unique global solution for initial data $u_0\in H^3(\mathbb{R})$  with $1-u_{0xx}(x) >0$  \cite{RD14}.
	
	Soliton resolution entails the remarkable property that  the solution  decomposes   into the sum of a finite $N$ number of separated solitons and a radiative part as $t\to\infty$.  The research of soliton resolution has been extensively studied in various wave equations.  For example,  soliton resolution along a sequence of times was investigated for the focusing energy critical wave equation \cite{TH17}. The  $\bar\partial$-steepest descent analysis  \cite{KTRPD1} is employed to analyze the soliton resolution for the NLS, derivative NLS  and mKdV equations  respectively \cite{MR18,RP18,GJ21}.
	In addition, the study  focused on the soliton resolution and large time behavior of solutions to the Cauchy problem for the Novikov equation with a nonzero background
  \cite{YE23}. These works contribute to the understanding of soliton resolution and its implications in different contexts.

	The Cauchy problem for integrable systems can be solved via the inverse scattering transform, which expresses the solution
	in terms of the solution of a matrix RH problem. However, the analysis process becomes difficult for higher-order
	matrix spectral problem.
	Recently Boutet de Monvel et al.  \cite{AD15} developed a RH  method to  the OV equation  \eqref{eq:1.1} with $3\times 3$ matrix spectral problem
	and  obtained its   smooth solitary wave solutions   under the reciprocal transformation
	\begin{equation}
		y(x,t):=x-\int_x^\infty((1-u_{ss}(s) )^{1/3}-1)ds.
	\end{equation}
	In the original variable $(x,t)$, the soliton solution of the OV equation \eqref{eq:1.1} is a multivalued function with a loop shape.
	Under the non-monotonic change of variable from $x\to y$, the soliton in the $(y,t)$ variable has a bell shape,
which is typical for solitons in integrable nonlinear evolution equations.
	In addition, for  the case without solitons,   the long-time  asymptotics for
	the OV equation  \eqref{eq:1.1} in  two asymptotic  regions ${\rm I}: x/t<0 $ and ${\rm II}: x/t>0$
	 were  derived  \cite{AD15}.
	
	In our work, we concern with the soliton resolution and  asymptotic stability of $N$-loop-soliton solutions for  the OV equation \eqref{eq:1.1}.
	Our results  have  the following two different  features.
	Firstly, we  investigate   long-time asymptotic behavior  for the OV equation \eqref{eq:1.1}  with  the initial data $u_0$
	which allows appearance  of  solitons.
	Secondly, we aim to   verify    the soliton resolution  for the OV equation \eqref{eq:1.1} via a $\bar\partial$-steepest descent method.
	This suggests that, at large  time,    solution of the Cauchy problem  (\ref{eq:1.1})-(\ref{inieq:1.1}) of the OV equation   will eventually break down into a combination of a radiation component and a finite number of  solitons.
	This  result   also implies  the asymptotic stability of $N$-soliton solutions of the OV equation \eqref{eq:1.1}, which
	so far has not been thoroughly explored.

	\subsection{Main results}
	
	The main result of this paper  is now presented as follows.
	\begin{theorem}\label{th1.1}
		Let
		$ \mathcal{  D} =\{r(z),\{\xi_n,c_n, \}_{n=1}^{6N}\}$  be the scattering data generated from associated  $u_0 \in \mathcal{S}(\mathbb{R})$.
		Denote by
		$u_{sol}(x,t;\hat{\mathcal{D}})$  as  the  soliton solutions   associated with   reflectionless scattering data
		$\hat{\mathcal{D}}=\{0,\{\xi_n,\hat c_n\}_{n=1}^{6N}\} $
		with
		\begin{equation}
			\hat c_n= c_n\exp\left( \frac{i}{\pi}\int_{{I}}\frac{\log(1-|r(s)|^2)}{s-\xi_n}ds\right).
		\end{equation}
		Then the solution $u(x,t)$ to the  Cauchy problem (\ref{eq:1.1})-(\ref{inieq:1.1}) has the following asymptotics behavior:
		\begin{itemize}
			\item[$\blacktriangleright$]{\bf In the region I: $x/t<0$}, we have
			\begin{align}
				&u(x,t)
				 =  {u}_{sol}(y,t;  {\mathcal{D}})+t^{-1/2}f_t(y,t)+\mathcal{O}(t^{-\frac{3}{4}}),\nonumber\\
				& x:= x(y,t)=y+g(y,t)+t^{-1/2}f(y,t)+\mathcal{O}(t^{-3/4}),\nonumber
			\end{align}
where   $g(y,t)$ is expressed as shown in \eqref{eqg} and
$f(y,t)$ is an interaction between dispersion and soliton given by \eqref{eqf}.			
			
			\item[$\blacktriangleright$]{\bf In the region II: $x/t>0$}, we have
			 \begin{align}
				&u(x,t)
				 =u_{sol}(y,t; {\mathcal{D}}  )+\mathcal{O}(t^{-1}),\nonumber\\
				&x:=x(y,t)=y+g(y,t)+\mathcal{O}(t^{-1}), \nonumber
			\end{align}
			where   $g(y,t)$ can be derived  by \eqref{eqg}.
		\end{itemize}
	\end{theorem}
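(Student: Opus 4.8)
The plan is to follow the now-standard $\bar\partial$-steepest descent machinery, but adapted to the $3\times 3$ Lax pair of the OV equation and, crucially, to the reciprocal change of variables $y(x,t)$. First I would set up the basic inverse scattering picture: starting from the $3\times 3$ spectral problem associated with \eqref{eq:1.1}, I would construct the Jost solutions, define the scattering matrix and the reflection coefficient $r(z)$, locate the discrete spectrum $\{\xi_n\}_{n=1}^{6N}$ (the multiplicity $6N$ coming from the symmetries of the $3\times 3$ problem, three sheets times the involution), and record the residue/norming constants $c_n$. The key point is to phrase the inverse problem as a $3\times 3$ matrix RH problem whose jump contour is determined by the sign structure of the relevant phase function $\theta(z;y/t)$; here one must work in the $y$-scale rather than the $x$-scale, so the reconstruction formula recovers $u$ as a function of $(y,t)$ together with the relation $x=x(y,t)$ obtained by integrating $(1-u_{ss})^{1/3}-1$, which is why the final statement comes in the coupled form $u(x,t)=u_{sol}(y,t;\mathcal D)+\cdots$, $x=y+g(y,t)+\cdots$.

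Next I would carry out the deformation. Step one: conjugate the RH problem by a scalar function $\delta(z)$ solving a scalar RH problem with jump $1-|r|^2$ on the relevant part of the real axis; this is where the shift $c_n\mapsto\hat c_n$ with the explicit integral $\frac{i}{\pi}\int_I\frac{\log(1-|r(s)|^2)}{s-\xi_n}\,ds$ enters, since $\delta$ evaluated at $\xi_n$ produces exactly that factor. Step two: open lenses along the steepest-descent contours emanating from the stationary phase points of $\theta$, using a $\bar\partial$-extension of $r$ (and of $1/(1-|r|^2)$) rather than rational/analytic approximation, so that the new unknown satisfies a mixed RH--$\bar\partial$ problem. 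Step three: split off the pure soliton part — the discrete data, now with shifted norming constants $\hat{\mathcal D}$ — as an explicitly solvable $3\times 3$ RH problem $M^{sol}$ whose reconstruction yields $u_{sol}(y,t;\mathcal D)$ (the notation in the theorem with $\mathcal D$ rather than $\hat{\mathcal D}$ is because the soliton parameters are those of the original data, merely transported). Step four: the local parametrix near the (generically two in region I, absent or one in region II) stationary points is built from parabolic cylinder functions, contributing the $t^{-1/2}$ term $f_t(y,t)$; in region II the phase has no real stationary point in the support of $r$, so that contribution is absent and the error improves to $\mathcal O(t^{-1})$. Step five: the remaining $\bar\partial$-problem is shown, by the standard Beals--Coifman / solid Cauchy operator estimate, to have a solution that is $I+\mathcal O(t^{-3/4})$ in region I and $I+\mathcal O(t^{-1})$ in region II.

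The genuinely delicate points, beyond bookkeeping, are: (i) controlling the $3\times 3$ matrix structure throughout — in particular verifying the symmetry reductions that pin down the contour and the $6N$ discrete points, and checking that the Schwartz assumption $1-u_{0xx}>0$ guarantees $r\in H^1$-type regularity and absence of spectral singularities on the contour so that $\delta$ and the local parametrices are well defined; (ii) tracking the reciprocal transformation through every deformation: since each step changes the reconstruction of $u$ as a function of $y$, one must simultaneously keep account of how $x=x(y,t)$ picks up the functions $g(y,t)$ (from the soliton and the $\delta$-conjugation) and $t^{-1/2}f(y,t)$ (from the parabolic-cylinder parametrix), which requires differentiating/integrating the asymptotic expansions in a controlled way and is the source of the explicit formulas \eqref{eqg}, \eqref{eqf}. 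I expect (ii) — propagating the $y$-versus-$x$ dictionary consistently and extracting the interaction term $f$ — to be the main obstacle; the rest is a careful but essentially routine adaptation of the $\bar\partial$-steepest descent analysis already performed for the DP equation in \cite{AD15} and for related models in \cite{MR18,RP18,GJ21,YE23}. Finally, reading off that $u_{sol}(y,t;\mathcal D)$ is, up to the computed corrections, the $N$-loop-soliton profile in the original variables yields the claimed asymptotic stability as an immediate corollary.
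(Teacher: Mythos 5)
Your proposal follows essentially the same route as the paper: conjugation by the scalar $\delta$ (which produces the shift $c_n\mapsto\hat c_n$), $\bar\partial$-lens opening, factorization into an outer pure-soliton model plus parabolic-cylinder local parametrices at the stationary phase points plus a small-norm error problem, and a remaining $\bar\partial$-problem estimated at $\mathcal{O}(t^{-3/4})$ in region I and $\mathcal{O}(t^{-1})$ in region II, with the reconstruction carried through the reciprocal variable $y$ to produce the coupled formulas for $u$ and $x(y,t)$. The only discrepancy is bookkeeping: by the three-fold symmetry the paper works with six phase points $\varkappa_{nj}=(-1)^j\omega^n\varkappa$ on the full contour in region I (the orbit of your two real points) and with no phase points at all in region II, which does not alter the analysis you describe.
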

	
	As a  corollary of Theorem \ref{th1.1}, we obtain   the asymptotic stability of an $N$-loop soliton at the same time.
	
	\begin{corollary}\label{th1.2}
		Given $N$-soliton  $u_{sol}(x,t;	 \hat{\mathcal{D}})$
		with  $ \xi^{sol}_n,n=1,\dots,6N$. There exist positive constants $\eta_0=\eta_0(u_{sol})$, $T=T(u_{sol})$, and $K=K(u_{sol})$ such that for any initial data $u_0\in\mathcal{S}(\mathbb{R})$ with
		$$\eta_1:=|u_0-u_{sol}(x,0; \hat{\mathcal{D}})|\leq\eta_0,$$
	 such that
		\begin{equation}
		\sum_{j=1}^{6N}(| \xi_n-\xi_n^{sol}|+|c_n-\hat c_n|)\leq K\eta_0,
		\end{equation}
		then the solution of the Cauchy problem \eqref{eq:1.1}-(\ref{inieq:1.1})  asymptotically separates into a sum of $N$ one-loop soliton solution
		\begin{equation}
			\underset{x\in\mathbb{R}}{\sup}\,\left|u(x,t)-\sum_{n=1}^{6N}\mathcal{Q}_{sol}(x,t,\xi_n)\right|\leq K\eta_0 t^{-1},\quad t>T,
		\end{equation}
		where single-soliton $\mathcal{Q}_{sol}(x,t,\xi_n)$ is given by  \eqref{eq:1.10}.
	\end{corollary}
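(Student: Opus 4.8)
The plan is to deduce the corollary directly from the soliton-resolution asymptotics of Theorem \ref{th1.1} in the region II: $x/t>0$, where the error is $\mathcal{O}(t^{-1})$ and no dispersive $t^{-1/2}$ term is present. First I would set up the correspondence between the two sets of data: the prescribed pure-soliton data $\hat{\mathcal{D}}$ with discrete spectrum $\{\xi_n^{sol}\}_{n=1}^{6N}$ and norming constants $\hat c_n$, and the scattering data $\mathcal{D}=\{r(z),\{\xi_n,c_n\}_{n=1}^{6N}\}$ generated from an initial datum $u_0$ that is $\eta_1$-close to $u_{sol}(x,0;\hat{\mathcal{D}})$ in $\mathcal{S}(\mathbb{R})$. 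By the Lipschitz continuity of the direct scattering map on Schwartz data (the map $u_0\mapsto\mathcal{D}$ is analytic away from spectral singularities, and the number and rough location of the $6N$ discrete eigenvalues is stable under small perturbations because $1-u_{0xx}>0$ is an open condition), one gets the estimate $\sum_{n}(|\xi_n-\xi_n^{sol}|+|c_n-\hat c_n|)\le K\eta_0$ as claimed, with the reflection coefficient controlled by $\|r\|\le K\eta_0$ as well.

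Next I would invoke Theorem \ref{th1.1} in region II: along the characteristic directions $x/t>0$ carrying the solitons, $u(x,t)=u_{sol}(y,t;\mathcal{D})+\mathcal{O}(t^{-1})$ with the change of variables $x=y+g(y,t)+\mathcal{O}(t^{-1})$. The key point is that the soliton part built from the \emph{full} data $\mathcal{D}$ differs from the sum of individual one-loop solitons $\sum_{n=1}^{6N}\mathcal{Q}_{sol}(x,t,\xi_n)$ by a quantity that decays: as $t\to\infty$ the $6N$ solitons separate spatially (distinct velocities $\xi_n^2$, say), so the multi-soliton $u_{sol}$ decomposes into well-separated bumps, each asymptotically equal to a single $\mathcal{Q}_{sol}$ up to the standard phase shift and an exponentially small overlap error $\mathcal{O}(e^{-ct})$. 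Then one replaces the norming constants $c_n$ in each $\mathcal{Q}_{sol}$ by $\hat c_n$, at the cost of $K\eta_0$ per soliton uniformly in $x$ (since $|c_n-\hat c_n|\le K\eta_0$ and $\mathcal{Q}_{sol}$ depends smoothly on this parameter), and similarly shifts $\xi_n\to\xi_n^{sol}$. Collecting these, and noting that the $\mathcal{O}(t^{-1})$ from Theorem \ref{th1.1} already has an implied constant that depends continuously on the data and hence is $\le K$ in this $\eta_0$-neighborhood, yields
\begin{equation*}
\sup_{x\in\mathbb{R}}\Bigl|u(x,t)-\sum_{n=1}^{6N}\mathcal{Q}_{sol}(x,t,\xi_n)\Bigr|\le K\eta_0\,t^{-1},\qquad t>T.
\end{equation*}

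There are two technical obstacles worth flagging. The first, and the main one, is controlling the passage from the $(y,t)$-picture back to the original $(x,t)$-variable uniformly: Theorem \ref{th1.1} gives asymptotics in the reciprocal variable $y$, and one must argue that the change of variables $x=y+g(y,t)+\mathcal{O}(t^{-1})$ is, for $t>T$, a bijection of $\mathbb{R}$ onto $\mathbb{R}$ whose inverse is equi-Lipschitz, so that the $\sup$ over $x$ and the $\sup$ over $y$ are comparable and the error estimate transfers without loss; this uses $1-u_{xx}>0$, i.e. positivity of the Jacobian, propagated in time. The second is the uniformity in the data: one needs the implied constants in Theorem \ref{th1.1} to be locally bounded on the space of scattering data, which follows because every estimate in the $\bar\partial$-steepest descent analysis is made in terms of $\|r\|$ and $\max_n(|\operatorname{Im}\xi_n|^{-1},|c_n|)$, all of which vary continuously and stay bounded in an $\eta_0$-ball. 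Choosing $T$ large enough that the soliton separation dominates and that the inverse change of variables is well-defined, and $\eta_0$ small enough that the eigenvalues remain simple and inside their prescribed regions, completes the argument.
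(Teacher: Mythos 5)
Your proposal is correct and follows essentially the same route the paper intends: the paper states Corollary \ref{th1.2} without a separate proof, treating it as an immediate consequence of Theorem \ref{th1.1} in the region $x/t>0$ (where the error is $\mathcal{O}(t^{-1})$ and no $t^{-1/2}$ radiation term appears), combined with continuity of the direct scattering map near the reflectionless data and the asymptotic separation of the $N$-soliton $u_{sol}(\cdot\,;\mathcal{D})$ into single loop solitons $\mathcal{Q}_{sol}$, which is exactly your argument. Your added remarks on the invertibility of the reciprocal change of variables $x=y+g(y,t)+\mathcal{O}(t^{-1})$ and on the uniformity of the implied constants over an $\eta_0$-neighborhood of the scattering data are details the paper leaves implicit, and they are consistent with its framework.
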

	
	\subsection{Plan  of the proof}
	In Section \ref{sec2},  starting for the spectral problem (\ref{eq:2.1}), we  analyze  the analyticity, symmetries, and asymptotic behavior of the Jost functions and scattering data.  Subsequently,  we establish a  RH problem  \ref{RH2.1}   associated with the Cauchy problem (\ref{eq:1.1})-(\ref{inieq:1.1}), which is used 	   to capture  the leading-order asymptotics of the solution for the OV equation
	   with the $\bar\partial$-steepest descent  analysis.
	    In Section  \ref{sec33},  we    show  soliton resolution for the OV equation in
	     the asymptotic   region  $x/t<0$
	 in which   there are  six phase points in the jump contours.
	The main contribution to the RH problem
	 comes from jumps near phase points and solitons.
	In Section \ref{sec44} ,  we show  soliton resolution for the OV equation
	    in the  asymptotic region  $x/t>0$
	 in which      the jump contours admit no phase points.

	\section{Inverse  Scattering Transform and RH Problem} \label{sec2}

	In this section, we  state some  basic
	results on  the  inverse scattering transform and the RH problem  associated with the Cauchy problem
	(\ref{eq:1.1})-(\ref{inieq:1.1}).   The details   can be found in
	\cite{AD15}.

	\subsection{Spectral analysis on  Lax pair }\label{sec2.2}
	The OV equation \eqref{eq:1.1}  admits a scalar    Lax pair \cite{AD15,AJ02}
	\begin{subequations}\label{eq:2.1}
		\begin{align}
			&\psi_{xxx}=\lambda(-u_{xx}+\kappa)\psi,\\
			&\psi_t=\frac{1}{\lambda}\psi_{xx}-u\psi_x+u_x\psi.
		\end{align}
	\end{subequations}
	Define a vector-valued function
	$\Phi:=\Phi(x,t;z)=\begin{pmatrix}
		\psi,\psi_x,\psi_{xx}
	\end{pmatrix}^T$, then  (\ref{eq:2.1}) can be written in matrix form
	\begin{align}\label{eq:2.2}
		&\Phi_x=\begin{pmatrix}
			0&1&0\\
			0&0&1\\
			z^3q^3&0&0
		\end{pmatrix}\Phi,\quad\Phi_t=\begin{pmatrix}
			u_x&-u&z^{-1}\\
			1&0&-u\\
			-z^3uq^3&1&-u_x
		\end{pmatrix}\Phi,
	\end{align}
	where $z$ is defined by
	$\lambda=z^3$ and $q=(1-u_{xx})^{1/3}>0$.	 Without loss of generality, we take $\kappa=1$ in the Lax pair (\ref{eq:2.1})  in   our paper.
	
	In order to have better control of the behavior of solutions of system \eqref{eq:2.2} for large $z$, define
		\begin{align}\label{eq:2.3}
			&D(x,t)=\begin{pmatrix}
				q^{-1} &0&0\\
				0&1&1\\
				0&0&q
			\end{pmatrix},\,\,P(z)=\begin{pmatrix}
				1&1&1\\
				\lambda_1&\lambda_2&\lambda_3\\
				\lambda_1^2&\lambda_2^2&\lambda_3^2\\
			\end{pmatrix},
		\end{align}
	where $\omega=e^{2\pi i/3}$, $\lambda_j =z\omega^j, j=1,2,3$.
	Then under the   transformation
	\begin{equation}\label{eq:2.4}
		\tilde{\Phi}=P^{-1}D^{-1}\Phi,
	\end{equation}
 the system \eqref{eq:2.2} is changed  into another Lax pair
	\begin{subequations}\label{eq:2.5}
		\begin{align}
			&\tilde{\Phi}_x-q\Lambda\tilde{\Phi}=U\tilde{\Phi},\\
			&\tilde{\Phi}_t+(uq\Lambda -\Lambda^{-1} )\tilde{\Phi}=V\tilde{\Phi},
		\end{align}
	\end{subequations}														
	where 																																															
	\begin{subequations}
		\begin{align*}
			&\Lambda=\begin{pmatrix}
				\lambda_1 &0&0\\
				0&\lambda_2&0\\
				0&0&\lambda_3
			\end{pmatrix},\,\,\,\,\,U=\frac{q_x}{3q}\begin{pmatrix}
				0&1-\omega^2&1-\omega\\
				1-\omega&0&1-\omega^2\\
				1-\omega^2&1-\omega&0
			\end{pmatrix},\\
			&V=-uU+\frac{1}{3z}\left\{3\left(\frac{1}{q}-1\right)I+\left(q^2-\frac{1}{q}\right)\begin{pmatrix}
				1&1&1\\
				1&1&1\\
				1&1&1
			\end{pmatrix}\right\}\begin{pmatrix}
				\omega^2&0&0\\
				0&\omega&0\\
				0&0&1
			\end{pmatrix}.
		\end{align*}
	\end{subequations}
	
	The transformation (\ref{eq:2.4}) makes $U$ and $V$  be  bounded at $z=\infty$, which is appropriate for controlling the behavior of its solutions for large $z$. Observing the system (\ref{eq:2.5}),   we define a  reciprocal   transform
	\begin{align}
		&y(x,t):=x-\int_x^\infty(q(s,t)-1)ds,\label{eq:2.7}
	\end{align}
	and   a  3$\times$3 matrix  function
	\begin{equation}\label{eq:2.9}
		\Psi=\tilde{\Phi}e^{-Q},
	\end{equation}
	where  $Q=y(x,t)\Lambda +t\Lambda^{-1} $. Then (\ref{eq:2.5}) reduces to the system
	\begin{subequations}\label{eq:2.10}
		\begin{align}
			&\Psi_x -[Q_x,\Psi ]=U\Psi,\\
			&\Psi_t -[Q_t,\Psi ]=V\Psi,
		\end{align}
	\end{subequations}
	whose  solutions   can be formed by
	\begin{equation}\label{eq:2.15}
		\Psi_{jl}=I_{jl}+\int_{\infty_{jl}}^x e^{-\lambda_j(z)\int_x^s q(\eta,t)d\eta}[U\Psi (s,t;z)]_{jl}e^{\lambda_l(z)\int_x^s q(\eta,t)d\eta}ds,
	\end{equation}
	where  $\infty_{jl}$ defined as
	\begin{equation}\label{eq:2.11}
		\infty_{jl}=\begin{cases}
			+\infty,\quad {\rm Re}\,\lambda_j(z)\ge{\rm{ Re}} \,\lambda_l(z),\\
			-\infty,\quad {\rm Re}\,\lambda_j(z)<{\rm{ Re}}\,\lambda_l(z).
		\end{cases}
	\end{equation}

	Define  the rays
	\begin{equation}\label{eq:2.12}
		 l_n= e^{\frac{\pi i}{3}(n-1)}\mathbb{R}^+, \  \ n=1,\cdots,6,
	\end{equation}
	which  divide  the $z$-plane into six sectors
	\begin{equation}
		\Omega_{n}=\left\{z\,|\,\frac{\pi}{3}(n-1)<\arg\, z<\frac{\pi}{3}n\right\},\quad n=1,\dots,6.
	\end{equation}
	
	It has been demonstrated that the  Jost function $\Psi(z)$ defined by equation  \eqref{eq:2.9}  possesses the subsequent properties  \cite{AD15}.
	
	\begin{proposition}\label{pro2.2}
		The system of  integral equations \eqref{eq:2.15}  determines 3$\times$3-matrix valued solution $\Psi(z)$ of system \eqref{eq:2.10} having the following properties
		\begin{itemize}
			\item $\det \Psi(z)\equiv1$ and $\Psi(z)\to I$ as $z\to\infty$.
			\item The function $\Psi(z)$ is analytic in $\mathbb{C}\setminus\Sigma$, where $\Sigma=\bigcup_{n=1}^6 l_n$.
			\item  $\Psi(z)$ satisfies the symmetry relations:
			$$\Psi(z)=\Gamma_1\overline{\Psi(\bar{z})}\Gamma_1=\Gamma_2\overline{\Psi(\bar{z}\omega^2)}\Gamma_2=\Gamma_3\overline{\Psi(\bar{z}\omega)}\Gamma_3=\Gamma_4\Psi(z\omega)\Gamma_4^{-1},$$
			with
			$$\Gamma_1=\begin{pmatrix}
				0&1&0\\
				1&0&0\\
				0&0&1
			\end{pmatrix},\Gamma_2=\begin{pmatrix}
				0&0&1\\
				0&1&0\\
				1&0&0
			\end{pmatrix},\Gamma_3=\begin{pmatrix}
				1&0&0\\
				0&0&1\\
				0&1&0
			\end{pmatrix},\Gamma_4=\begin{pmatrix}
				0&0&1\\
				1&0&0\\
				0&1&0
			\end{pmatrix}.$$
		\end{itemize}
	\end{proposition}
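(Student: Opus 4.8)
The plan is to verify the three bulleted assertions in turn, treating the integral system \eqref{eq:2.15} as a Volterra equation and exploiting the explicit structure of $U$, $\Lambda$, and the transformations \eqref{eq:2.3}–\eqref{eq:2.9}. First, for analyticity: I would fix a sector $\Omega_n$ and observe that on $\Omega_n$ the ordering of $\mathrm{Re}\,\lambda_j(z)$ among $j=1,2,3$ is constant, so the choice of base points $\infty_{jl}$ in \eqref{eq:2.11} is constant throughout $\Omega_n$; the exponential factors $e^{-\lambda_j\int_x^s q}\,[\,\cdot\,]\,e^{\lambda_l\int_x^s q}$ are then bounded for $s$ between $x$ and $\infty_{jl}$ because $q-1\in\mathcal S(\mathbb R)$ keeps $\int_x^s q(\eta,t)\,d\eta$ comparable to $s-x$ with the correct sign. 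A standard Neumann-series / successive-approximation argument on the Volterra operator then gives existence, uniqueness, continuity, and — since the integrand depends analytically on $z$ in the interior of $\Omega_n$ and the series converges uniformly on compact subsets — analyticity of $\Psi(z)$ on $\mathbb C\setminus\Sigma$, with $\Psi(z)\to I$ as $z\to\infty$ read off directly from \eqref{eq:2.15} (the inhomogeneous term is $I$ and the integral term decays). The boundedness of $U$ and $V$ at $z=\infty$, which is exactly what the dressing \eqref{eq:2.3}–\eqref{eq:2.4} was designed to achieve, is what makes this large-$z$ limit clean.

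Next, $\det\Psi(z)\equiv 1$: I would use Abel's formula. From \eqref{eq:2.10}, $\Psi$ satisfies a linear ODE in $x$ whose coefficient matrix, after accounting for the commutator term $[Q_x,\Psi]$, has the property that the relevant trace vanishes — concretely $\mathrm{tr}\,U=0$ (visible from the displayed $U$, which has zero diagonal), so $\partial_x\det\Psi = (\mathrm{tr}\,U)\det\Psi = 0$, and then one evaluates the constant by taking $x\to\infty$ along the appropriate direction where $\Psi\to I$, giving $\det\Psi\equiv 1$. One must be slightly careful that different entries $\Psi_{jl}$ use different base points $\infty_{jl}$, so "$\Psi\to I$" is understood in the piecewise sense consistent with \eqref{eq:2.11}; but the determinant is still seen to equal $1$ in the limit.

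For the symmetry relations: each of the four identities should be checked at the level of the linear system \eqref{eq:2.10} together with the normalization, invoking uniqueness of the Volterra solution. The map $z\mapsto z\omega$ permutes $\{\lambda_1,\lambda_2,\lambda_3\}=\{z\omega,z\omega^2,z\}$ cyclically, and one verifies that $\Gamma_4$ realizes exactly this permutation on $\Lambda$, i.e. $\Gamma_4\Lambda(z\omega)\Gamma_4^{-1}=\Lambda(z)$, while also $\Gamma_4 U\Gamma_4^{-1}=U$ and $\Gamma_4 V\Gamma_4^{-1}=V$ (the last using that $U$, $V$ are built from cyclic matrices); hence $\Gamma_4\Psi(z\omega)\Gamma_4^{-1}$ solves the same system with the same base-point structure and the same normalization as $\Psi(z)$, so they coincide. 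The three conjugation symmetries are handled the same way: since $q$, $u$, $u_x$ are real, complex conjugation sends a solution at $\bar z$ (resp. $\bar z\omega^2$, $\bar z\omega$) to a solution of the conjugated system, and the matrices $\Gamma_1,\Gamma_2,\Gamma_3$ are precisely the permutations that identify $\overline{\Lambda(\bar z)}$, $\overline{\Lambda(\bar z\omega^2)}$, $\overline{\Lambda(\bar z\omega)}$ with $\Lambda(z)$ and simultaneously fix $U$ and $V$ under conjugation; uniqueness again forces equality. The one genuinely delicate point — and the main obstacle — is bookkeeping the base points $\infty_{jl}$ under these reflections and rotations: one must check that the sector-dependent rule \eqref{eq:2.11} is itself invariant under each symmetry (i.e. the symmetry maps the region of a given ordering of $\mathrm{Re}\,\lambda_j$ to the region with the correspondingly permuted ordering), so that the transformed function really does satisfy the \emph{same} integral equation and not merely the same differential equation. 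Once that combinatorial compatibility is confirmed, all the assertions follow; the remaining computations are the routine verifications $\mathrm{tr}\,U=0$, $\Gamma_i U\Gamma_i^{-1}=U$, $\Gamma_i V\Gamma_i^{-1}=V$, and the action of each $\Gamma_i$ on $\Lambda$, which I would omit or relegate to the cited reference \cite{AD15}.
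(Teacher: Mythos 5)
Your proposal is correct in approach, but note that the paper itself offers no proof of Proposition \ref{pro2.2}: it simply quotes these properties from the cited reference \cite{AD15}, and your sketch is essentially the standard argument used there (sector-wise Volterra/Neumann iteration for existence, analyticity and the normalization $\Psi\to I$; Liouville's formula with $\mathrm{tr}\,U=0$ and the trace of the commutator vanishing for $\det\Psi\equiv1$; uniqueness plus equivariance of $\Lambda$, $U$, $V$ and of the base-point rule \eqref{eq:2.11} under the $\Gamma_i$ for the symmetries). The only slip is notational: since $V$ depends on $z$ through the factor $\frac{1}{3z}$ and $\mathrm{diag}(\omega^2,\omega,1)$, the identity you need is $\Gamma_4 V(z\omega)\Gamma_4^{-1}=V(z)$ (and analogously $\Gamma_i\overline{V(\cdot)}\Gamma_i=V(z)$ for the conjugation symmetries), not $\Gamma_4 V\Gamma_4^{-1}=V$ literally; with that reading, and the base-point compatibility you correctly single out as the delicate point, the argument goes through.
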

	
	 $\Psi(z)$ is continuous boundary value $\Psi_\pm(z)$ on
	$ \Sigma$,  which  are  related by
	\begin{equation}\label{eq:2.27}
		\Psi_+(z)=\Psi_-(z) e^{Q }S_0(z) e^{-Q },\quad z\in\Sigma,
	\end{equation}
	where  the scattering matrix $S_0(z)$  is   given by
	\begin{align}
&S_0(z)=\begin{pmatrix}
				1&\overline{r(z)} &0\\
				-r(z)&1-|r(z)|^2&0\\
				0&0&1
			\end{pmatrix}, \,\,\,\,\,\,\,\,\,\ z\in l_1\cup l_4, \nonumber \\
			&S_0(z)=\begin{pmatrix}
				1&0&0\\
				0&1 & \overline{r(\omega z)} \\
				0&-r(\omega z)&1-|r(\omega z)|^2
			\end{pmatrix},  \,\,\,\ z\in l_2\cup l_5, \nonumber\\
	&S_0(z)=\begin{pmatrix}
				1-|r(\omega^2 z)|^2 &0&-r(\omega^2z)\\
				0&1 & 0 \\
				\overline{r(\omega^2 z)}&0 &1
			\end{pmatrix}, \ z\in l_3\cup l_6,	\nonumber
	\end{align}
	and reflection coefficient $r(z)$ is determined by the initial data  $u_0$.

\subsection{Distribution of poles}

	According to \cite{AD15}, $\Psi(z)$ has at most finite number of simple poles,
and the poles for  distinct columns  $\Psi_1(z),\Psi_2(z)$ and $\Psi_3(z)$ may lie on the rays $i\mathbb{R},  e^{i\pi/6}\mathbb{R}$ and  $ e^{-i\pi/6}\mathbb{R}$, respectively.
There are two types of  poles, denoted by the sets $\{z_n\}_{n=1}^{N_1}$ and $\{\ell_n\}_{n=1}^{N_2}$, which may be placed along $e^{i\pi/6}\mathbb{R}^+$, see Figure \,\ref{F2111}. It is convenient to denote
	$$\xi_n:=z_n,\,n=1,\dots,{N}_1, \ \ \xi_{{N}_1+n}:=\ell_n,\,n=1,\dots,{N}_2, $$
and $N=N_1+N_2.$	By symmetry property from Proposition \ref{pro2.2},   the other poles of $\Psi(z)$ are
\begin{align}
&\xi_{n+N}=\omega\bar\xi_n,\,\,\,\,\,\,\,\,\,\,\xi_{n+2N}=\omega\xi_n,\,\,\,\,\,\, \xi_{n+3N}=\bar\xi_{n+2N}, \nonumber \\
&\xi_{n+4N}=\bar\xi_{n+N}, \ \ \xi_{n+5N}=\bar\xi_n, \,\,\,\,\quad n=1,\cdots,N.\nonumber
\end{align}	
	Therefore, the discrete spectrum is
$$
	\mathcal{Z}=\mathcal{Z}^+\cup\bar{\mathcal{Z}}^+,\,\,\,\, \mathcal{Z}^+=\{\xi_n\,\}_{n=1}^{3N}\subset\mathbb{C}^+.
$$
	We denote norming constants $c_n\in\mathbb{C}^*$ for $\xi_n\in e^{i\pi/6}\mathbb{R}^+$, then
	other norming constants are given by
\begin{align}\nonumber
&c_{n+N}=\bar c_n\omega, \,\,\,\,\,\,\,\,\,\, c_{n+2N}= c_n\omega,   \,\,\,\,\, c_{n+3N}=\bar c_{n+2N}, \\
& c_{n+4N}=\bar c_{n+N},\,\,\,\,
 c_{n+5N}=\bar c_n , \quad\,\ n=1,\cdots,N.
\end{align}		
For initial data $u_0\in \mathcal{S}(\mathbb{R})$, the collection
	\begin{equation}\label{eq:2.34}
		\mathcal{D}=\{r(z),\{\xi_n,c_n\}_{n=1}^{6N}\}
	\end{equation}
	is called the scattering data.
	The time evolution of the scattering data $\mathcal{D}$ is given by
	\begin{equation*}
		\mathcal{D}(t)=\{r(z)e^{2it \theta(\xi_n)},\{\xi_n,c_n(t)e^{2it \theta(\xi_n)}\}_{n=1}^{6N}\}.\end{equation*}
where  $\theta(\xi_n)=-\frac{\sqrt{3}}{2}\Big(\frac{y}{t} \xi_n-\frac{1}{\xi_n}\Big)$.
	The  map $\mathcal{D}(t)\mapsto u(x,t)$ seeks to recover the solution of equation \eqref{eq:1.1}-\eqref{inieq:1.1} from its scattering data.

		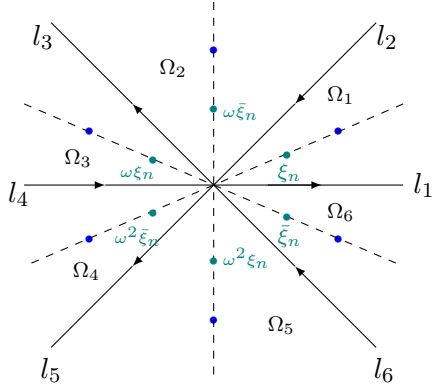
\begin{figure}[htp]
		\begin{center}
			\begin{tikzpicture}[scale=0.36]
				[node distance=1cm]
				\draw [-latex](-7,0)--(-4,0);
				\draw [-latex](-4,0)--(4,0);
				\draw[-](2,0)--(7,0)node[right]{$l_1$} ;	
				\draw [-latex](6,-6)--(3,-3);
				\draw [-latex](3,-3)--(-3,3);
				\draw[-](-3,3)--(-6,6);	
				\draw [-latex](6,6)--(3,3);
				\draw [-latex](3,3)--(-3,-3);
				\draw[-](-3,-3)--(-6,-6);	
				\draw(-6,-6)node[below] {$l_5$};
				\draw(-6.3,6.3)node[below] {$l_3$};
				\draw(6.4,6.3)node[below] {$l_2$};
				\draw(-7.2,0.5)node[below] {$l_4$};
				\draw(6.3,-6)node[below] {$l_6$};
				\node [blue]at(-4.6,2) {$\scriptscriptstyle\bullet$};
					\node [blue]at(4.6,2) {$\scriptscriptstyle\bullet$};
				\node [blue]at(-4.6,-2) {$\scriptscriptstyle\bullet$};
				\node [blue]at(4.6,-2) {$\scriptscriptstyle\bullet$};
				\node [blue]at(0,5) {$\scriptscriptstyle\bullet$};
				\node [blue]at(0,-5) {$\scriptscriptstyle\bullet$};
				\draw(4.7,4)node[below] {$\scriptstyle\Omega_1$};
			\draw(-5,1.7)node[below] {$\scriptstyle\Omega_3$};
			\draw(-4.7,-2.5)node[below] {$\scriptstyle\Omega_4$};
			\draw(4.7,-0.3)node[below] {$\scriptstyle\Omega_6$};
			\draw(-1.5,5)node[below] {$\scriptstyle\Omega_2$};
			\draw(2.5,-4.5)node[below] {$\scriptstyle\Omega_5$};
				\draw [dashed](7,3)--(-7,-3);
				\draw [dashed](-7,3)--(7,-3);
				\draw[dashed](0,-7)--(0,7);
				\node [teal]at(2.7,1.1) {$\scriptscriptstyle\bullet$};
				\node [teal]at(-2.25,0.9) {$\scriptscriptstyle\bullet$};
				\node[teal]at(-2.25,-1.05){$\scriptscriptstyle\bullet$};
				\node[teal]at(2.7,-1.2){$\scriptscriptstyle\bullet$};
				\node[teal]at(0,2.8){$\scriptscriptstyle\bullet$};
				\node[teal]at(0,-2.8){$\scriptscriptstyle\bullet$};
				\node [teal]at (2.8,0.45) {$\scriptstyle \xi_n$};
				\node [teal]at (-2.8,0.45) {$\scriptscriptstyle \omega\xi_n$};

				\node [teal]at (2.8,-1.9) {$\scriptstyle \bar{\xi}_n$};
				
				\node [teal]at (-2.8,-1.9) {$\scriptscriptstyle\omega^2\bar{\xi}_n$};
				
				\node [teal]at (1.2,-2.7) {$\scriptscriptstyle \omega^2\xi_n$};
				
				\node [teal]at (1,2.7) {$\scriptscriptstyle \omega \bar\xi_n$};
			\end{tikzpicture}
		\end{center}
		\caption{\footnotesize  The $z$-plane is divided into six analytical  domains $\Omega_n, n=1,...,6$ by
			six   rays  $l_n$ and the distribution of two kinds of poles are denoted by  $z_n$ $(\textcolor{blue}{\bullet})$ and $\ell_n$ ($\textcolor{teal}{\bullet}$).}
		\label{F2111}
	\end{figure}

	\subsection{Set up of a RH problem}\label{sec2.6}
	
By using  the variable $y=y(x,t)$ by \eqref{eq:2.7}, we define
	\begin{equation}
		M(y,t;z):=\Psi(x(y,t),t;z),
	\end{equation}
	which satisfies  following RH problem.
	\begin{problem}\label{RH2.1}
		Find a matrix  function $M(z):\mathbb{C}\setminus(\Sigma\cup\mathcal{Z})\to SL_3(\mathbb{C})$ with the following properties
		
		$\blacktriangleright$ \emph{ { The normalization condition:}}
		$$M(z)=I+\mathcal{O}(z^{-1})\quad z\to\infty.$$
		
		$\blacktriangleright$\emph{ {The symmetry proposition:}}	
		$$M(z)=\Gamma_1\overline{M(\bar z)}\Gamma_1=\Gamma_2\overline{M(\omega^2\bar z)}\Gamma_2=\Gamma_3\overline{M(\omega \bar z)}\Gamma_3.$$

		$\blacktriangleright$ 	 ${M}(z)$  satisfies  the  jump relation:\begin{equation}\label{eq:235}
			 M_+( z) = M_-( z) J( z),\quad z\in \Sigma,
		\end{equation}
		where
		\begin{align*}
			J( z)&=\begin{pmatrix}
				1&\bar{r}(z)e^{-2it\theta(z)}&0\\
				-r(z)e^{2it\theta(z)}&1-|r(z)|^2&0\\
				0&0&1
			\end{pmatrix},\qquad\qquad\,\,\,\,\,  z\in\mathbb{R},\\
			&=\begin{pmatrix}
				1-|r(\omega^2z)|^2&0&-r(\omega^2z)e^{2it\theta(\omega^2z)}\\
				0&1&0\\
				\bar{r}(\omega^2z)e^{-2it\theta(\omega^2z)}&0&1
			\end{pmatrix},\quad z\in\omega\mathbb{R},\\
			&=\begin{pmatrix}
				1&0&0\\
				0&1&	\bar{r}(\omega z)e^{-2it\theta(\omega z)}\\
				0&	-r(\omega z)e^{2it\theta(\omega z)}&1-|r(\omega z)|^2
			\end{pmatrix},\qquad\,\,\,\, z\in\omega^2\mathbb{R},
		\end{align*}
		where  $\theta(z)=-\frac{\sqrt{3}}{2}\Big(\xi z-\frac{1}{z}\Big)$ with $\xi=y/t$  is  the phase function.
		
		$\blacktriangleright$  $M( z)$  has simple poles   $\xi_n\in\mathcal{Z}^+, \ \bar\xi_n\in \bar{\mathcal{Z}}^+, \ n=1,\dots 3N$ at which
		\begin{align}\nonumber
		&	\underset{z=\xi_n\phantom{+44}}{\rm\,Res}\,M( z) =\underset{z\to \xi_n}{\lim}M(z){V}(z), \\\nonumber
			&\underset{z= \xi_{n+N}\phantom{2}
   }{\rm\,Res}\,M( z) =\underset{z\to \xi_{n+N}}{\lim}M(z)\Gamma_3\bar{V}(z)\Gamma_3, \\	\nonumber
				&\underset{z=\xi_{n+2N}}{\rm\,Res}\,M( z) =\underset{z\to \xi_{n+2N}}{\lim}M(z)\Gamma_1{V}(z)\Gamma_1, \\\nonumber
			&\underset{z=\bar\xi_n\phantom{+2n}}{\rm\,Res}\,M( z) =\underset{z\to\bar\xi_n\phantom{+2}}{\lim}M( z)\Gamma_4^{-1}\bar{{V}}(z) \Gamma_4,\\\nonumber
				&\underset{z=\bar\xi_{n+N}\phantom{n}
    }{\rm\,Res}\,M( z) =\underset{z\to\bar\xi_{n+N}}{\lim}M(z)\Gamma_4^{-1}\Gamma_3\bar{V}(z)\Gamma_3\Gamma_4,\\\nonumber
				&\underset{z=\bar\xi_{n+2N}}{\rm\,Res}\,M( z) =\underset{z\to \bar\xi_{n+2N}}{\lim}M(z)\Gamma_4^{-1}\Gamma_1\bar {V}(z)\Gamma_1\Gamma_4,
		\end{align}
		where ${V}(\xi_n)$ is the nilpotent matrix defined by
			\begin{align*}
	&{V}(\xi_n)=\begin{pmatrix}
			0&0&0\\
			c_n e^{2it\theta(\xi_n)}&0&0\\
			0&0&0
		\end{pmatrix},	\  \  n=1,\dots, N_1, \\
	&{V}(\xi_n)=\begin{pmatrix}
			0&0&0\\
			0&0&c_n e^{-2it\theta(\omega \xi_n)}\\
			0&0&0
		\end{pmatrix}, \  \  n= N_1+1,\dots,N.
	\end{align*}
	\end{problem}
	
The   solution $u(x,t)$ of the OV equation \eqref{eq:1.1} can be described as the limit $z\to0$ with respect to the solution $M(z)$ of the RH problem \ref{RH2.1} in the following way
	\begin{align}\label{eq:2.39}
		u(x,t)= x_t(y,t),
	\end{align}
	where
\begin{align}
	&x:= x(y,t)=y+\frac{\sum_{j=1}^3{ M}'_{j3}(0)}{\sum_{j=1}^3{M}_{ j3}(0)}.
	\label{eq:2.40}
\end{align}



\subsection{Classification of asymptotic regions}\label{sec2.8}

 According to the numbers of stationary phase   points on the contour $\Sigma$,
 the classification of asymptotic regions is given as follows.

	\begin{enumerate}
		\item[$\blacktriangleright$] For the range $\xi<0$, there  exist six  phase   points on $\Sigma$:
		$$\varkappa_{nj} := (-1)^j \omega^n\varkappa, \ j=0,1;n=0,1,2,$$
		where  $\varkappa=1/\sqrt{|\xi|}$, see Figure \ref{sig2}. The soliton resolution for the OV equation  in this case will be presented in Section \ref{sec33}.
\begin{figure}
	\subfigure[  $e^{2it\theta(z)}\to0$ as $t\to \infty$ in  the purple region, $e^{-2it\theta(z)}\to0$ as $t\to \infty$ in  the white region]{\label{a}
		\begin{minipage}[t]{0.25\linewidth}
			\centering
			\includegraphics[width=1.5in]{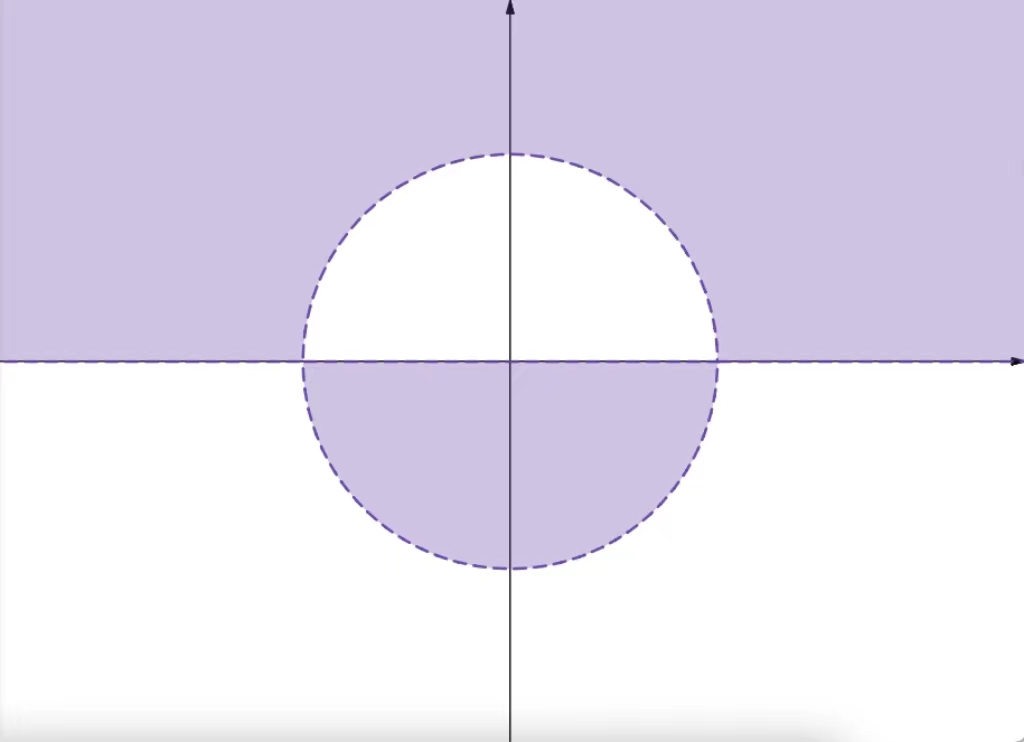}
		\end{minipage}
	}\qquad\,\,\,\,\,
	\subfigure[  $e^{2it\theta(\omega z)}\to0$ as $t\to \infty$ in  the purple region, $e^{-2it\theta(\omega z)}\to0$ as $t\to \infty$ in  the white region]{\label{c}
		\begin{minipage}[t]{0.25\linewidth}
			\centering
			\includegraphics[width=1.5in]{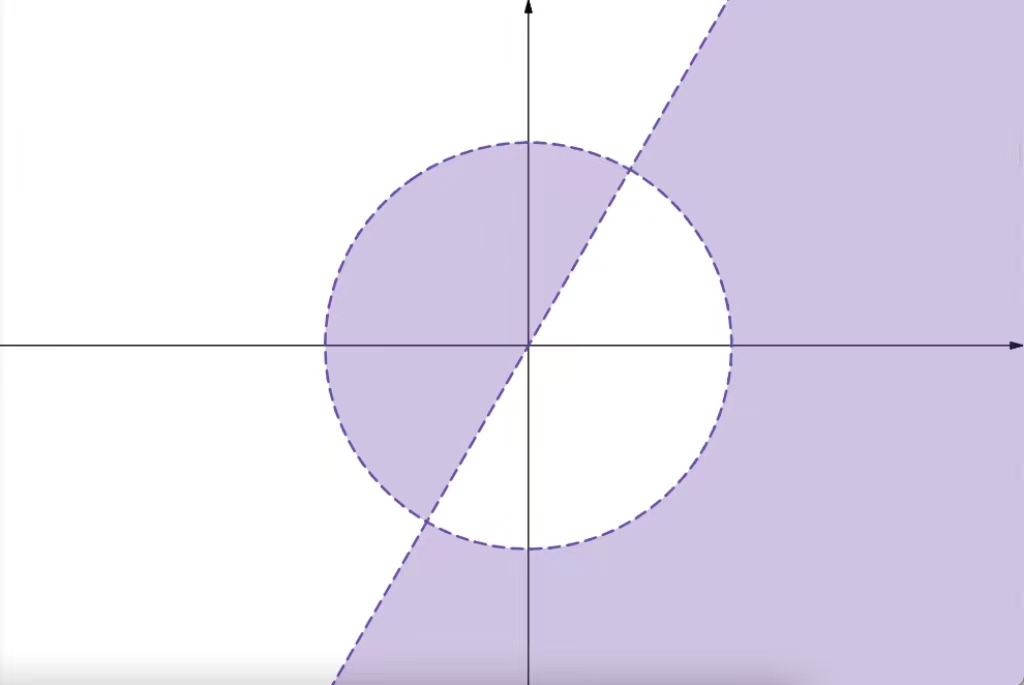}
		\end{minipage}
	}\qquad\,\,\,\,\,
	\subfigure[  $e^{2it\theta(\omega^2 z)}\to0$ as $t\to \infty$ in  the purple region, $e^{-2it\theta(\omega^2 z)}\to0$ as $t\to \infty$ in  the white region]{\label{c}
		\begin{minipage}[t]{0.25\linewidth}
			\centering
			\includegraphics[width=1.5in]{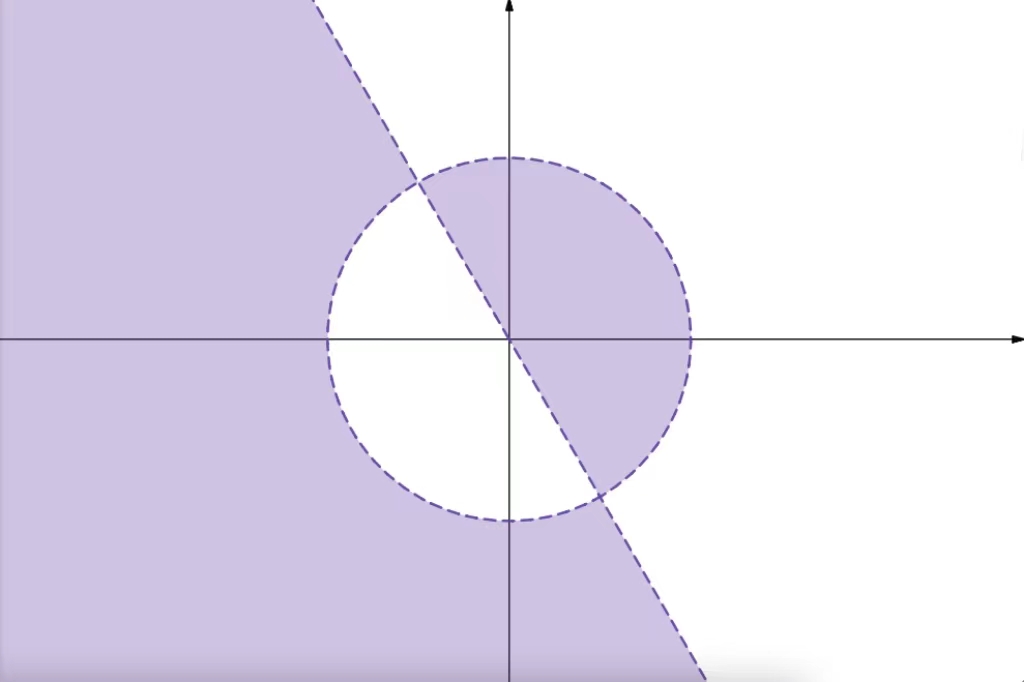}
		\end{minipage}
	}
	\caption{There are six  phase  points on the contour $\Sigma$  and signature tables   for the case  $\xi<0$}
	\label{sig2}
\end{figure}
\item[$\blacktriangleright$]   For the range $\xi>0$, there is no phase  point on $\Sigma$, see Figure \ref{sig1}.
The soliton resolution for the OV equation  in this case will be presented in Section \ref{sec44}.		
	\end{enumerate}

\subsection{Conjugation}\label{sec2.8}
Consider the residue  at $\xi_n$  in RH problem \ref{RH2.1}, in order
 to distinguish two types of zeros, we denote
 \begin{align*}
 	&{\Lambda}_1=\{n\in\mathcal{N}_1\,:\,{\rm Im}\,\theta(\xi_n)<0
 	\},\qquad
 	{\Lambda}_3=\{n\in\mathcal{N}_1\,:\,{\rm Im}\,\theta(\xi_n)>0\},\\
 	&{\Lambda}_2=\{n\in\mathcal{N}_2\,:\,{\rm Im}\,\theta(\omega \xi_n)>0
 	\},\quad\,\,\,	{\Lambda}_4=\{n\in\mathcal{N}_2\,:\,{\rm Im}\,\theta(\omega \xi_n)<0
 	\}
 	.
 \end{align*}
 where
 \begin{equation*}
 	{\mathcal{N}_1}=\{ 1,\dots,N_1\},\quad {\mathcal{N}_2}=\{N_1+1,\dots,N\}.
 \end{equation*}
 For  $n\in\Lambda^+={\Lambda}_1\cup {\Lambda}_2$,  it grows without bound as $t\to+\infty$, for $n\in\Lambda^-={\Lambda}_3\cup {\Lambda}_4$, it
approaches zero.

We introduce the notation on the contour $\mathbb{R}$
\begin{align*}
&	I_1=(-\infty,-\varkappa),\quad
	I_-=(-\varkappa,0),\quad
	I_+=(0,\varkappa),\quad I_2=(\varkappa,+\infty),\\
&{I}= \varnothing,\ \ {\rm for} \  \xi>0; \ \
		I=I_1\cup	I_2,\ \ {\rm for} \ \xi<0,
\end{align*}
and denote $
 {I}^{\omega}=\{\,\omega z\,:\,z\in{I}\,\},\quad
{I}^{\omega^2}=\{\,\omega^2 z\,:\,z\in{I}\,\}.
$

Consider the following scalar RH problem

\begin{problem}\label{RH22}
	Find a scalar function $\delta(z)$  analytical  for $z\in\mathbb{C}\setminus{I}$ with the following properties:
	\begin{enumerate}
		\item $\delta(z)\to1$ as $z\to\infty$.
		\item $\delta(z)$ has continuous boundary values $\delta_{\pm}(z)$ for $z\in\mathbb{C}\setminus{I}$
		and obey the jump relation
		\begin{equation*}
			\delta_+(z)=\begin{cases}
				(1-|r(z)|^2) \delta_-(z),\qquad\, z\in{I},\\
				\delta_-(z),\qquad\quad z\in\mathbb{R}\setminus{I}.
			\end{cases}
		\end{equation*}
	\end{enumerate}
\end{problem}
By using the  Plemelij formula, it can shown that for $r\in H^1(\mathbb{R})$,  the RH problem \ref{RH22} has a unique solution
		\begin{equation}\nonumber
			\delta(z)=
\exp\left\{ i\int_{{I}}\frac{\nu(s)}{s-z}ds\right\},
		\end{equation}	
		where   $\nu(z)$ is given by
	\begin{equation}\label{eq:63}
		\nu(z)=-\frac{1}{2\pi}\log(1-|r(z)|^2).
	\end{equation}

\begin{figure}[htp]
	\subfigure[ $e^{2it\theta(z)}\to0$ as $t\to \infty$ in  the purple region, $e^{-2it\theta(z)}\to0$ as $t\to \infty$ in  the white region]{\label{a}
		\begin{minipage}[t]{0.25\linewidth}
			\centering
			\includegraphics[width=1.5in]{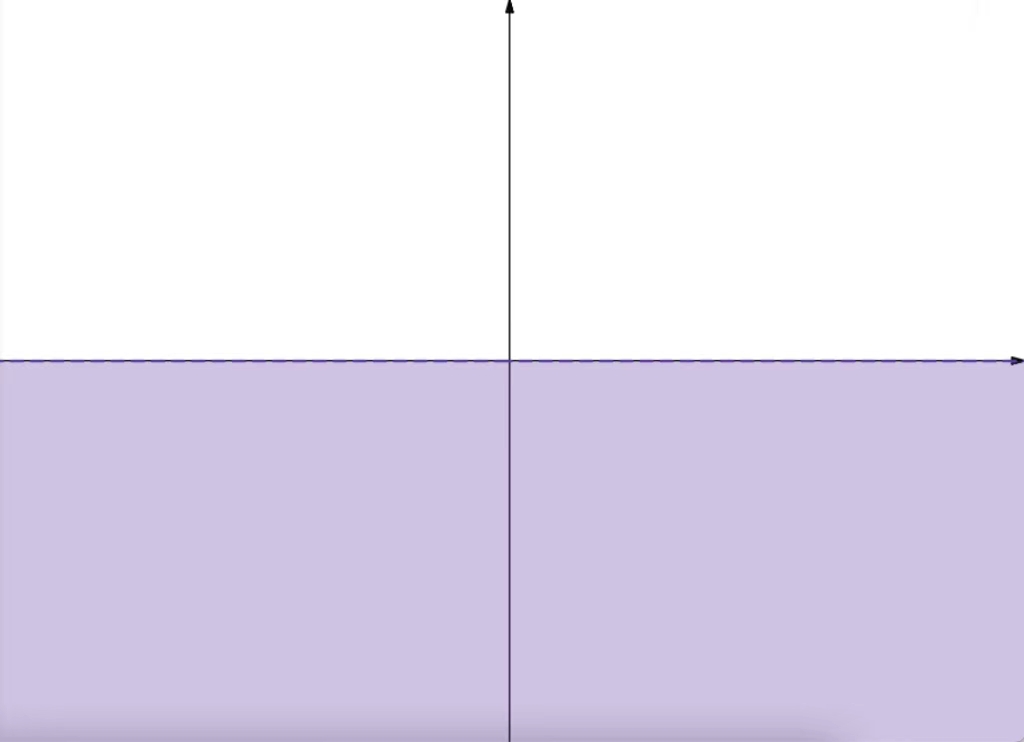}
		\end{minipage}
	}\qquad\,\,\,\,\,
	\subfigure[  $e^{2it\theta(\omega z)}\to0$ as $t\to \infty$ in  the purple region, $e^{-2it\theta(\omega z)}\to0$ as $t\to \infty$in  the white region]{\label{c}
		\begin{minipage}[t]{0.25\linewidth}
			\centering
			\includegraphics[width=1.5in]{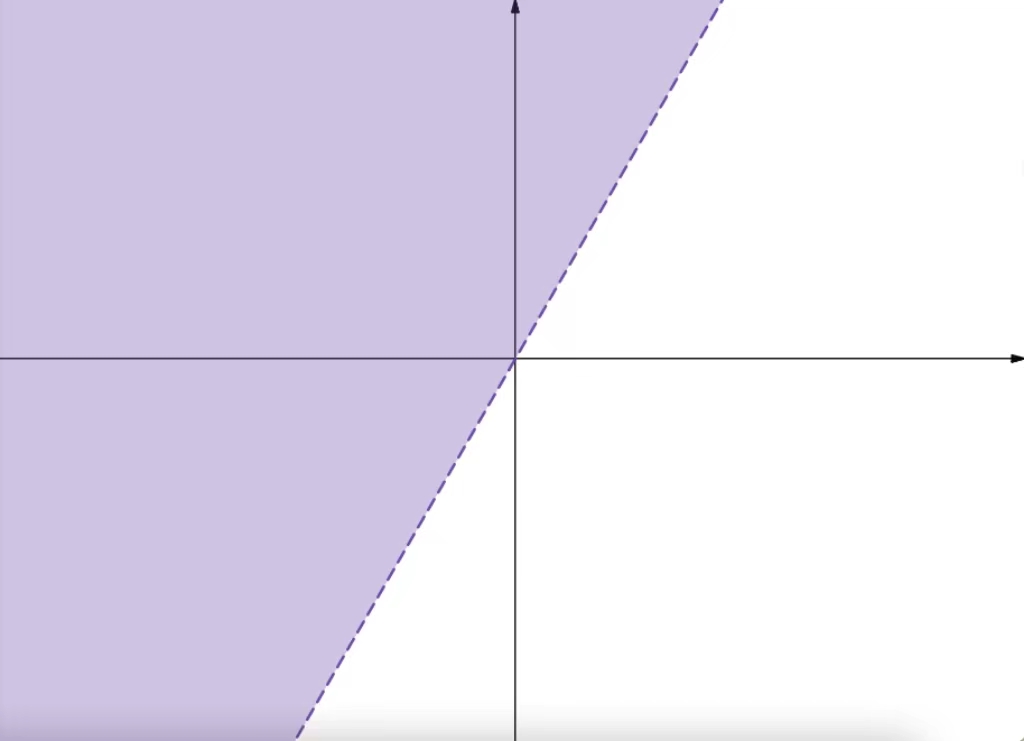}
		\end{minipage}
	}\qquad\,\,\,\,\,
	\subfigure[  $e^{2it\theta(\omega^2 z)}\to0$ as $t\to+\infty$ in  the purple  region, $e^{-2it\theta(\omega^2 z)}\to0$ as $t\to \infty$ in  the white region]{\label{c}
		\begin{minipage}[t]{0.25\linewidth}
			\centering
			\includegraphics[width=1.5in]{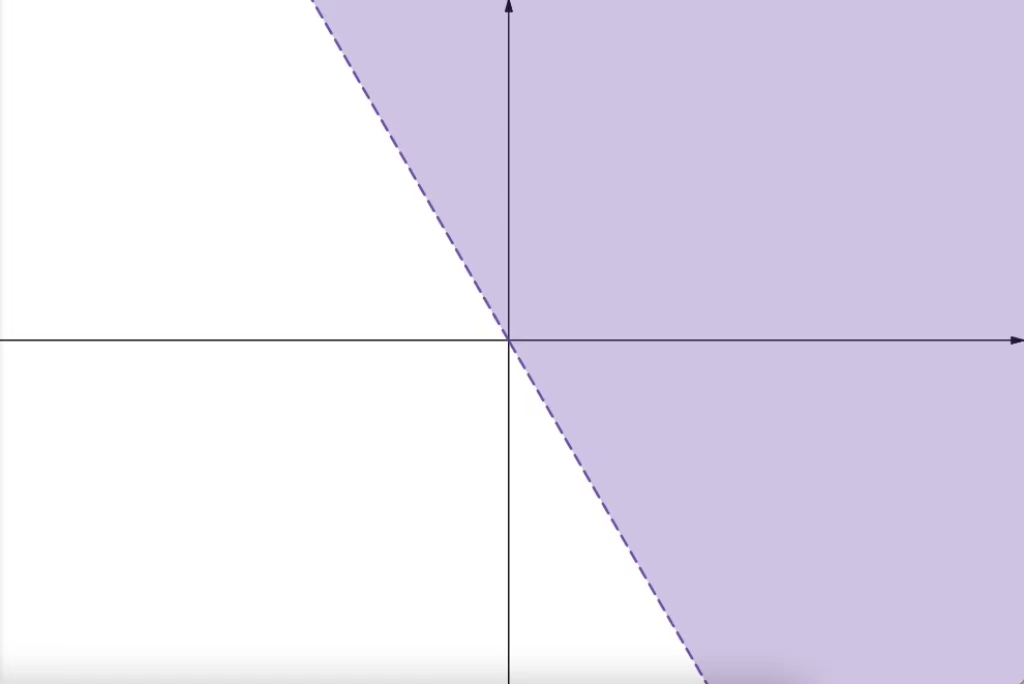}
		\end{minipage}
	}
	\caption{There is no  phase  point  and signature tables  for the case   $\xi>0$}
	\label{sig1}
\end{figure}

Further  we define
\begin{align}
&T(z)=\prod_{n\in{\Lambda}_1}\Big(\frac{z-\bar \xi_n}{z-\xi_n}\Big)\prod_{n\in{\Lambda}_2}\Big(\frac{z-\omega^2\bar\xi_n}{z- \omega\xi_n}\Big)
\delta(z),\label{eq:240}\\
	&{F}_1(z)=\frac{T(z)}{T(\omega^2z)},\,\,{F}_2(z)=F_1(\omega z),\,\,{F}_3(z)=F_1(\omega^2 z),\label{eq:243}\\
	&{F}_{ij}(z)=\frac{{F}_i(z)}{{F}_j(z)},\quad i,j=1,2,3,\label{eq:2246}
\end{align}

\begin{proposition}\label{pro24}  The functions defined by (\ref{eq:243})-(\ref{eq:2246})  admit the   properties
	
	\begin{enumerate}
		  	\item[{\rm (1)}]
${F}_1(z)$ is meromorphic in $\mathbb{C}\setminus{I}$. For each $n\in{\Lambda}^-$,  ${F}_1(z)$  has  simple poles at $\omega^2\xi_n,\omega^2\bar \xi_n$ and  simple zeros at $\omega\bar \xi_n,\xi_n$;  For each $n\in{\Lambda}^+$,  ${F}_1(z)$  has  simple poles at  $\omega\xi_n, \bar\xi_n$   and  simple zeros at $ \omega^2\bar\xi_n,\omega^2\xi_n$.
	
		\item[{\rm (2)}]	The function ${F}_1(z)$ fulfills the jump relation:
		\begin{align*}
			&{F}_{1+}(z)=(1-|r(z)|^2){F}_{1-}(z),\qquad z\in{I},\\
			&{F}_{1+}(z)=(1-|r(\omega^2z)|^2){F}_{1-}(z),\quad z\in{I}^\omega.
		\end{align*}
			\item[{\rm (3)}]
		when $z\to0$ , we have
		$${F}_3(z)={F}_3^0+{F}_3^1 z+\mathcal{O}(z^2),$$
		where
		\begin{align}\label{rr32}
			&F_3^0= \prod_{n\in{\Lambda}^-}\Big(\frac{\omega^2\bar \xi_n\cdot\omega \xi_n}{\xi_n\cdot\bar \xi_n}\Big)\prod_{n\in{\Lambda}^+}\Big(\frac{\xi_n\cdot\bar\xi_n}{\omega\bar\xi_n\cdot\omega^2\xi_n}\Big),\\\nonumber
			&{F}_3^1=\frac{\sqrt{3}F_3^0}{\pi}\int_\varkappa^\infty\frac{\log(1-|r(s)|^2)}{s^2}ds.
		\end{align}	
	\item[{\rm (4)}] Along any ray $l=\varkappa_{0,j}+e^{i\phi}\mathbb{R}^+, j=1,2$ with $|\phi|<\pi/6$,
		\begin{equation}\label{eq:2399}
			|{F}_{12}(z)-{F}^0_{12}(\varkappa_{0,j})(z-\varkappa_{0,j})^{i\nu(\varkappa_{0,j})}|\lesssim\|r\|_{H^1(\mathbb{R})}|z-\varkappa_{0,j}|^{1/2},
		\end{equation}		
		where ${F}^0_{12}(
	\varkappa_{0,j})$ is the complex unit:
		\begin{equation}
			{F}^0_{12}(
			\varkappa_{0,j}           )=f_1f_2e^{2i\beta(\varkappa_{0,j},\varkappa_{0,j})},
		\end{equation}
		and
		\begin{align*}
			&f_1=\prod_{n\in{\Lambda}_1}
			\Big(\frac{\varkappa_{0,j}-\xi_n}{\varkappa_{0,j}-\bar \xi_n}\Big)^2\frac{
				\varkappa_{0,j}-\omega\bar \xi_n}{\varkappa_{0,j} -\omega \xi_n}\cdot\frac{\varkappa_{0,j}-\omega^2\bar\xi_n}{
				\varkappa_{0,j}-\omega^2\xi_n},\\
			&f_2=\prod_{n\in{\Lambda}_2}	\Big(\frac{\varkappa_{0,j}  -\omega\xi_n}{\varkappa_{0,j}  -\omega^2\bar\xi_n}\Big)^2\frac{
			\varkappa_{0,j}  -\bar\xi_n}{\varkappa_{0,j} -\omega^2\xi_n}\cdot\frac{
			\varkappa_{0,j}   -\omega\bar\xi_n}{
			\varkappa_{0,j}   -\xi_n},
		\end{align*}
		and
		\begin{equation*}
			\beta(\varkappa_{0,j},z)=\int_{{I}}\frac{\nu(s)-\Xi(s)\nu(
			\varkappa_{0,j}  )}{s-z}ds-\nu(\varkappa_{0,j}
			)\log(z-\varkappa_{0,j} -1),
		\end{equation*}	
		where $\Xi(s)$ is characteristic functions on the interval $\varkappa-1< s<\varkappa.$
	\end{enumerate}
\end{proposition}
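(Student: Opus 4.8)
The plan is to read off each property directly from the explicit product-and-integral representation of $\delta(z)$ together with the definitions \eqref{eq:240}--\eqref{eq:2246}, using only elementary properties of the Cauchy kernel and of finite Blaschke-type products. First I would establish (1). By \eqref{eq:240}, $T(z)$ is a finite Blaschke-type product with simple zeros at $\xi_n$ ($n\in\Lambda_1$), $\omega\xi_n$ ($n\in\Lambda_2$) and simple poles at $\bar\xi_n$, $\omega^2\bar\xi_n$ respectively, multiplied by the nonvanishing analytic factor $\delta(z)$ (analytic and nonzero off $I$ since it is an exponential of a Cauchy integral). Then $F_1(z)=T(z)/T(\omega^2 z)$ inherits zeros from the numerator and from the \emph{poles} of $T(\omega^2 z)$, i.e.\ at $z=\omega\xi_n,\omega^2\xi_n$ (from $\Lambda^-$ resp.\ $\Lambda^+$ bookkeeping after applying the symmetry relabelling of the $\xi$'s listed in Section \ref{sec2}), and poles from the zeros of $T(\omega^2 z)$ and the poles of $T(z)$. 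Carefully tracking which index set $\Lambda^\pm$ a given $n$ lies in — recalling $\Lambda^+$ nodes blow up and $\Lambda^-$ nodes decay, which is exactly the distinction that forced the conjugation in Section \ref{sec2.8} — yields the stated pole/zero locations; the $\delta$ factors cancel in the ratio up to the jump they carry, leaving $F_1$ meromorphic on $\mathbb{C}\setminus I$.

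For (2) I would use that the Blaschke factors in $T$ are analytic across $I$, so the jump of $F_1$ across $I$ comes solely from $\delta(z)/\delta(\omega^2 z)$; since $\delta$ jumps by $1-|r(z)|^2$ on $I$ and is analytic on $I^\omega$ where $\omega^2 z\in I$, we get $F_{1+}/F_{1-}=1-|r(z)|^2$ on $I$ and $=1-|r(\omega^2 z)|^2$ on $I^\omega$, matching Problem \ref{RH22}'s jump transplanted through the rotation. For (3), $F_3(z)=F_1(\omega^2 z)$ is analytic and nonzero at $z=0$ (none of the poles/zeros sit at the origin), so it admits a convergent Taylor expansion there; evaluating $F_3(0)=F_1(0)$ gives the product in \eqref{rr32} after substituting the symmetry relations among $\{\xi_n\}$ to rewrite all six nodes per index in terms of $\xi_n,\bar\xi_n$, and the coefficient $F_3^1$ comes from logarithmically differentiating $F_1(\omega^2 z)$ at $0$: the Blaschke factors contribute rational terms that cancel by the symmetric pairing, while $\tfrac{d}{dz}\log\delta(\omega^2 z)|_{z=0}=\omega^2\!\int_I \nu(s)s^{-2}\,ds$ collapses — using $\nu=-\tfrac1{2\pi}\log(1-|r|^2)$, $I=I_1\cup I_2$ and the symmetry $|r(-s)|=|r(s)|$ on these intervals — to the stated $\tfrac{\sqrt3 F_3^0}{\pi}\int_\varkappa^\infty s^{-2}\log(1-|r(s)|^2)\,ds$.

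Property (4) is the one real estimate and the main obstacle. Here I would follow the standard local-model normalization near a stationary phase point: write $F_{12}(z)=F_1(z)/F_2(z)$, isolate the singular factor by pulling out $(z-\varkappa_{0,j})^{i\nu(\varkappa_{0,j})}$, and show the remainder is $\tfrac12$-Hölder along the ray $l$. Concretely, the Blaschke products are smooth near $\varkappa_{0,j}$ (no node sits on $\Sigma$), so they contribute the bounded factors $f_1,f_2$ plus an $O(|z-\varkappa_{0,j}|)$ error; the delicate part is the Cauchy integral $\int_I \nu(s)(s-z)^{-1}ds$, which one splits as $\int_I \tfrac{\nu(s)-\Xi(s)\nu(\varkappa_{0,j})}{s-z}ds$ plus the explicitly integrable piece $\nu(\varkappa_{0,j})\int_{\varkappa-1}^{\varkappa}(s-z)^{-1}ds$ that produces the $(z-\varkappa_{0,j})^{i\nu}$ behavior (this is the role of $\beta(\varkappa_{0,j},z)$ and the cutoff $\Xi$). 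Since $r\in H^1(\mathbb{R})\hookrightarrow C^{1/2}$, the numerator $\nu(s)-\Xi(s)\nu(\varkappa_{0,j})$ is $\tfrac12$-Hölder and vanishes appropriately at $\varkappa_{0,j}$, so the classical estimate for Cauchy integrals with Hölder density along a line not tangent to $I$ (using $|\phi|<\pi/6$ to keep $l$ transverse to the relevant rays) gives the $\|r\|_{H^1}|z-\varkappa_{0,j}|^{1/2}$ bound; combining with the smooth Blaschke error and exponentiating yields \eqref{eq:2399}. I expect the bookkeeping in (1) and the transversality/Hölder bookkeeping in (4) to absorb essentially all the work, the rest being routine.
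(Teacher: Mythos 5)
Your proposal is correct and follows essentially the same route as the paper: properties (1)--(3) by direct computation from the definitions \eqref{eq:240}--\eqref{eq:2246} and the Plemelj formula for $\delta$, and property (4) by factoring out the singular factor $(z-\varkappa_{0,j})^{i\nu(\varkappa_{0,j})}$ and bounding the remainder through a H\"older-$\tfrac12$ estimate on $\beta(\varkappa_{0,j},z)$. The only cosmetic difference is that you propose to derive that estimate from the classical H\"older theory of Cauchy integrals with $H^1\hookrightarrow C^{1/2}$ densities, whereas the paper simply cites Lemma 23.3 of Beals--Deift--Tomei (following the analogous argument in the focusing NLS paper), which is the same bound.
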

\begin{proof}
	Properties (1)-(3) can be obtain by simple calculation from the definition of equation \eqref{eq:243}-\eqref{eq:2246} and Plemelj formula. And for (4), analogously to \cite{MR18}. Via the fact that
	$$(z
-\varkappa_{0,j})^{i\nu(
		-\varkappa_{0,j}
		)}\leq e^{-\pi\nu(
		-\varkappa_{0,j}
		)}=\sqrt{1+|r(
	-\varkappa_{0,j}
		)|^2},$$
	an using Lemma 23.3 of \cite{BR88}
	$$|\beta(\varkappa_{0,j},z)-\beta(\varkappa_{0,j},
	\varkappa_{0,j})|\lesssim\|r\|_{H^1(\mathbb{R})}|z-\varkappa_{0,j}|^{1/2}.$$
	Then the results follow promptly.
\end{proof}

Let   $Y(z) ={\rm diag} \{ {F}_1(z),  {F}_2(z),  {F}_3(z)\}$,
then    function
\begin{equation}\label{eq:2.32}
	{M}^{(1)} (z) =M( z) Y(z)^{-1}
\end{equation}
 satisfies the following RH problem.

\begin{problem}\label{RH3.2}
	Find a analytical  function ${M}^{(1)} ( z): \mathbb{C}\setminus(\Sigma\cup\mathcal{Z})\to SL_3(\mathbb{C})$ with the following properties
	
	$\blacktriangleright$\emph{{ The normalization condition:}}
	$${M}^{(1)} (z) =I+\mathcal{O}(z^{-1})\quad z\to\infty.$$
	
	$\blacktriangleright$\emph{{ The symmetry proposition:}}	
	$${M}^{(1)} ( z)=\Gamma_1\overline{{M}^{(1)} ( \bar z)}\Gamma_1=\Gamma_2\overline{{M}^{(1)} ( \omega^2\bar z)}\Gamma_2=\Gamma_3\overline{{M}^{(1)} ( \omega \bar z)}\Gamma_3.$$

	$\blacktriangleright$\  ${M}^{(1)}(z)$   satisfies  the  jump relation:
	\begin{align*}
		 {M}^{(1)}_+ ( z)  = {M}^{(1)}_- ( z) \begin{cases}
			J^{(1)} ( z) ,\,\quad\hspace{1.4cm} z\in\mathbb{R},\\
			\Gamma_4^{-1}J^{(1)} (\omega^2z)\Gamma_4 ,\,\quad\, z\in\omega\mathbb{R},\\
			\Gamma_4J^{(1)} ( \omega z)\Gamma_4^{-1} ,\,\quad\,\hspace{0.2cm} z\in\omega^2\mathbb{R},
		\end{cases}
	\end{align*}
	where	
	\begin{align}\label{eq:2.48}
		&J^{(1)} ( z)\\\nonumber
		&=\begin{pmatrix}
			1&0&0\\
			-r(z){F}_{12}(z)e^{2it\theta(z)}&1&0\\
			0&0&1
		\end{pmatrix}\begin{pmatrix}
			1&\bar{r}(z){F}_{12}(z)^{-1}e^{-2it\theta(z)}&0\\
			0&1&0\\
			0&0&1
		\end{pmatrix},\,\,\qquad\quad\,\, z\in\mathbb{R}\setminus{I},\\\nonumber
		&=\begin{pmatrix}
			1&\frac{\bar r(z)}{1-|r(z)|^2}{F}_{12}(z)^{-1}e^{-2it\theta(z)}&0\\
			0&1&0\\
			0&0&1
		\end{pmatrix}\begin{pmatrix}
			1&0&0\\
			-\frac{r(z)}{1-|r(z)|^2}{F}_{12}(z)e^{2it\theta(z)}&1&0\\
			0&0&1
		\end{pmatrix},\,\,\,\, z\in{I}.
	\end{align}

	$\blacktriangleright$\
	${M}^{(1)} (z)$ has simple poles   $\xi_n\in\mathcal{Z}^+, \ \bar\xi_n\in \bar{\mathcal{Z}}^+, \ n=1,\dots 3N$  at which
	\begin{align}\nonumber
	&	\underset{z=\xi_n\phantom{+2n}}{\rm\,Res}\,M^{(1)}( z) =\underset{z\to \xi_n}{\lim}M^{(1)}(z){V}^{(1)}(z), \\\nonumber
	&\underset{z= \xi_{n+N}\phantom{n}
 }{\rm\,Res}\,M^{(1)}( z) =\underset{z\to \xi_{n+N}}{\lim}M^{(1)}(z)\Gamma_3\bar{V}^{(1)}(z)\Gamma_3, \\	\nonumber
	&\underset{z=\xi_{n+2N}}{\rm\,Res}\,M^{(1)}( z) =\underset{z\to \xi_{n+2N}}{\lim}M^{(1)}(z)\Gamma_1{V}^{(1)}(z)\Gamma_1, \\\nonumber
	&\underset{z=\bar\xi_n\phantom{+2n}}{\rm\,Res}\,M^{(1)}( z) =\underset{z\to\bar\xi_n\phantom{+2}}{\lim}M^{(1)}( z)\Gamma_4^{-1}\bar{{V}}^{(1)}(z)\Gamma_4,\\\nonumber
	&\underset{z=\bar\xi_{n+N}\phantom{n}
 }{\rm\,Res}\,M^{(1)}( z) =\underset{z\to\bar\xi_{n+N}}{\lim}M^{(1)}(z)\Gamma_4^{-1}\Gamma_3\bar{V}^{(1)}(z)\Gamma_3\Gamma_4,\\\nonumber
	&\underset{z=\bar\xi_{n+2N}}{\rm\,Res}\,M^{(1)}( z) =\underset{z\to \bar\xi_{n+2N}}{\lim}M^{(1)}(z)\Gamma_4^{-1}\Gamma_1\bar {V}^{(1)}(z)\Gamma_1\Gamma_4,
\end{align}
	where ${V}^{(1)}(\xi_n)$ is the nilpotent matrix. For each $n=1,\dots,N_1$, we have
	\begin{align*}
		&	{V}^{(1)}(\xi_n)=
			\begin{pmatrix}
				0&0&0\\
				\varrho^0_n(\xi_n)&0&0\\
				0&0&0
			\end{pmatrix},\,n\in\Lambda^-,\ \
			{V}^{(1)}(\xi_n)=\begin{pmatrix}
				0&\varrho^1_n(\xi_n)&0\\
				0&0&0\\
					0&0&0
			\end{pmatrix},\,\,n\in\Lambda^+,
	\end{align*}
	with
	\begin{align*}
		&\varrho^0_n(\xi_n)=c_n e^{-2it\theta(\xi_n)}{F}_{12}(\xi_n),\,\,\varrho^1_n(\xi_n)=(c_n e^{-2it\theta(\xi_n)})^{-1}(1/{F}_1)'(\xi_n)^{-1}(1/{F}_2)'(\xi_n),
	\end{align*}
	while for each $ n=N_1+1,\dots,N$, we acquire		
	\begin{align*}
		&{V}^{(1)}(\xi_n)=
			\begin{pmatrix}
				0&0&0\\
				0&0&\vartheta^0_n(\xi_n)\\
				0&0&0
			\end{pmatrix},\,n\in\Lambda^-,\ \ {V}^{(1)}(\xi_n)=\begin{pmatrix}
				0&0&0\\
				0&0&0\\
				0&\vartheta^1_n(\xi_n)&0
			\end{pmatrix},\ n\in\Lambda^+,
	\end{align*}
	where
	\begin{align*}
		&\vartheta^0_n(\xi_n)=c_n e^{-2it\theta(\omega \xi_n)}{F}_{23}(\xi_n),\,\,\vartheta^1_n(\xi_n)=(c_n e^{-2it\theta(\omega\xi_n)})^{-1}(1/{F}_2)'(\xi_n)^{-1}(1/{F}_3)'(\xi_n).
	\end{align*}
	
\end{problem}

\section{Soliton Resolution in the   Region  I: $x/t<0$}\label{sec33}

In this section, we give the soliton resolution for the   OV equation (\ref{eq:1.1})  in  the asymptotic    region $x/t<0$.

\subsection{Opening $\bar\partial$-lenses}\label{sec31}

  We  construct appropriate continuous extension functions,
which  deform  the oscillatory jump along the $\Sigma$ onto new contours
$$\Sigma^{(2)}=L\cup\omega L\cup\omega^2L,\quad L=\cup_{j=1}^4l^0_j  \cup l^0_{5\pm}\cup l^0_{6\pm},$$
along which the jumps are decaying, see Figure\,\ref{F4.1}. We designate the areas between $l^0_j$ and ${\rm Re}\,z$  in a clockwise manner as $\Omega^0_j$ for $j=1,\ldots,4$, and for $j=5,6$, we define $\Omega^0_{5_{\mp}}, \Omega^0_{6_{\pm}}$. Subsequently, we proceed to define $\Omega^1_j$ and $\Omega^2_j$ using a same way.

\begin{figure}[htp]
{
		\begin{minipage}{15cm}\centering                               			
			\includegraphics[scale=0.4]{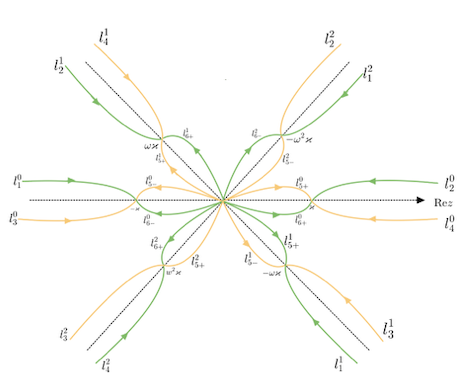}
	\end{minipage}}
	\caption{The contour $\Sigma^{(2)}$ for the RH problem \ref{RH4.1}.}
\label{F4.1}
\end{figure}

Denote
\begin{equation}\label{eq:311}
	\rho=\frac{1}{2} \{\underset{z\ne z'\in\mathcal{Z}}{\min}|z-z'|, \ \mbox{dist}(\mathcal{Z},\mathbb{R}), \ \mbox{dist}(\mathcal{Z}, \omega\mathbb{R}), \ \mbox{dist}(\mathcal{Z},\omega^2\mathbb{R})\},
\end{equation}
which is sufficiently small to ensure that the disks of support intersect neither each other nor the contour $\Sigma$.
Denote  $\check\chi(z)=1-\chi_{\mathcal{Z}}(z)$, where
$\chi_{\mathcal{Z}}\in C_0^\infty(\mathbb{C},[0,1])$  is  supported near the discrete spectrum such that
\begin{equation}\label{eq:3310}
	\chi_\mathcal{Z}(z)=\begin{cases}
		1\quad \mbox{dist}(z,\mathcal{Z})<\rho/3,\\
		0\quad \mbox{dist}(z,\mathcal{Z})> 2\rho/3.
	\end{cases}
\end{equation}
Further  denote
\begin{align}\label{eq:333}
	{G}_n(z)=\begin{cases}
		{F}_{12}(z),\quad n=0,\\
		{F}_{13}(z),\quad n=1,\\
		{F}_{32}(z),\quad n=2,
	\end{cases}
\end{align}
and
\begin{align*}
	&	p^0_1(z)=p^0_2(z)=\frac{r(z)}{1-|r(z)|^2},\quad p^0_{5-}(z)=p^0_{5+}(z)=-\bar r(z),\\
	&	p^0_3(z)=p^0_4(z)=\frac{\bar r(z)}{1-|r(z)|^2},\quad p^0_{6-}(z)=p^0_{6+}(z)=-r(z),
\end{align*}
for $j=1,2,3,4,5_\pm,6_\pm$, we have
\begin{equation}\label{eq:3334}
	p^1_j(z)=p^0_j(\omega^2z),\quad p^2_j(z)=p^0_j(\omega z).
\end{equation}

In order to deform the contour $\Sigma$ to the contour $\Sigma^{(2)}$,
we introduce  the following extension   functions.

\begin{lemma}\label{le4.1}
	It is possible to define functions $R^n_j:\bar\Omega^n_j\to\mathbb{C},\,n=0,1,2,j=1,2,3,4,5_\pm,6_\pm$ with boundary values satisfying
	\begin{align}\nonumber
		&R^n_{j}(z)=\begin{cases}
			p_j^n( z)G_n(z),\,\,\qquad\qquad\qquad\qquad\qquad\qquad\qquad\,\,\,\,\,\,\,\,\,\hspace{0.7cm}j=1,2,z\in\omega^nI_{j},\\[2pt]
			p_j^{n}((-1)^j\varkappa)G_{n}((-1)^j\omega^n\varkappa)(z-(-1)^{j}\omega^n\varkappa)^{2i\nu((-1)^j\omega^n\varkappa)}
			\check\chi(z),\hspace{0.4cm}z\in\omega^n l^n_j,
		\end{cases}\\[4pt] \nonumber
		&R^n_{j}(z)=\begin{cases}
			p_j^n( z){G}_{n}(z)^{-1},\,\,\,\,\,\,\,\quad\quad\qquad\qquad\qquad\qquad\,\,\,\,\hspace{2.2cm}j=3,4,z\in\omega^nI_{j},\\[2pt]
			p_j^{n}((-1)^j\varkappa){G}_{n}((-1)^j\omega^n\varkappa)^{-1}(z-(-1)^{j}\omega^n\varkappa)^{-2i\nu((-1)^j\omega^n\varkappa)}
			\check\chi(z),\hspace{0.1cm}z\in\omega^n l^n_j,
		\end{cases}\\[4pt] \nonumber
		&R^n_{5_\pm}(z)=\begin{cases}
			p^{n}_{5_\pm}(z)G_{n}(z)^{-1},\qquad\qquad\qquad\qquad\qquad\hspace{1.9cm} z\in\omega^nI_{\pm},\\[2pt]
			p^{n}_{5_\pm}(\pm\varkappa)G_{n}(\pm\omega^n\varkappa)^{-1}(z\mp\omega^n\varkappa)^{-2i\nu(\pm\omega^n\varkappa)}\check\chi(z),\hspace{0.3cm} z\in\omega^nl^n_{5_\pm},\\
		\end{cases}\\[4pt]\label{eq:3.9}
		&R^n_{6_\pm}(z)=\begin{cases}
			p^{n}_{6_\pm}(z)G_{n}(z), \hspace{6.3cm}z\in\omega^nI_{\pm},\\[2pt]
			p^{n}_{6_\pm}(\varkappa)G_{n}(\pm\omega^n\varkappa)(z\mp\omega^n\varkappa)^{2i\nu(\pm\omega^n\varkappa)}\check\chi(z)\qquad\,\,\,\,\,\,\,\,\,\,\,\,\,\,   z\in\omega^nl^n_{6_\pm},
		\end{cases}
	\end{align}
	such  that for $j=1,\dots4$, we get
	\begin{align}\label{eq:4.2}
		&	|\bar\partial R^n_j(z)|\lesssim|p_j'({\rm Re}(z)|+|z\pm\omega^n\varkappa|^{-1/2}+\bar\partial(\chi_{\mathcal{Z}}),
	\end{align}
	for $j=5_\pm,6_\pm$, we have
	\begin{align}\label{eq:3133}
		&	|\bar\partial R^n_j(z)|\lesssim |\varphi({\rm Re}\,z)|+|p_j'({\rm Re}(z)|+|z\pm\omega^n\varkappa|^{-1/2}+\bar\partial(\chi_{\mathcal{Z}}).
	\end{align}
where  $\varphi\in C_0^\infty(\mathbb{R}, [0,1])$  is a fixed cutoff function near small support near $z=0$.
	Moreover, when $z\to0$,
	\begin{equation}\label{eq:316}
		|\bar\partial R^n_j(z)|\lesssim |z|^2,\quad n=0,1,2.
	\end{equation}
	
\end{lemma}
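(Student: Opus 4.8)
The plan is to construct each extension $R^n_j$ explicitly by interpolating between its two prescribed boundary values — the value $p^n_j G_n^{\pm 1}$ on the interval $\omega^n I_j$ (or $\omega^n I_\pm$) lying on one of the three rays $\mathbb{R},\omega\mathbb{R},\omega^2\mathbb{R}$, and the value on the deformed ray $\omega^n l^n_j$, which has been chosen to carry the local parabolic-cylinder model behavior $(z-(-1)^j\omega^n\varkappa)^{\pm 2i\nu}$ near the phase point and to vanish near the discrete spectrum via the cutoff $\check\chi$. Since all three families are obtained from the $n=0$ family by the rotations $z\mapsto\omega z,\omega^2 z$ together with the symmetry relations for $F_i$ (equations \eqref{eq:243}-\eqref{eq:2246}) and for $p^n_j$ (equation \eqref{eq:3334}), it suffices to carry out the construction for $n=0$; the cases $n=1,2$ then follow by applying the substitution and noting that $\bar\partial$ transforms covariantly under these rotations. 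So I reduce to building $R^0_j$ on the sector $\Omega^0_j$ bounded by a piece of $\mathbb{R}$ and by $l^0_j$.

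On each such sector I use polar-type coordinates centered at the relevant phase point $\varkappa_{0,j}=(-1)^j\varkappa$ (or at $0$ for the wedges touching the origin, handled by the cutoff $\varphi$). Writing $z-\varkappa_{0,j}=\varrho e^{i\phi}$ with $\phi$ ranging over the angular width of the sector, I set
\begin{equation*}
	R^0_j(z) = \bigl[\text{boundary value on }\mathbb{R}\bigr]\,\cos\!\bigl(\tfrac{\phi}{\phi_0}\cdot\tfrac{\pi}{2}\bigr) + \bigl[\text{boundary value on }l^0_j\bigr]\,\sin\!\bigl(\tfrac{\phi}{\phi_0}\cdot\tfrac{\pi}{2}\bigr),
\end{equation*}
or a smooth variant thereof, so that $R^0_j$ matches the prescribed data on the two boundary rays and interpolates smoothly inside. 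Applying $\bar\partial = \tfrac12(\partial_x + i\partial_y)$ in these coordinates, the $\phi$-derivative of the trigonometric weights produces a factor $\varrho^{-1}$, which is compensated by the Hölder-$1/2$ control on the boundary data coming from Proposition \ref{pro24}(4) (the estimate \eqref{eq:2399} for $F_{12}$, and its analogues for $F_{13},F_{32}$), yielding the $|z\pm\omega^n\varkappa|^{-1/2}$ term; the $\varrho$-derivative hits the boundary data directly and produces the $|p_j'({\rm Re}\,z)|$ term via the chain rule; and $\bar\partial$ of the $\check\chi$ factor contributes the $\bar\partial(\chi_\mathcal{Z})$ term. For the wedges $j=5_\pm,6_\pm$ that reach $z=0$, the additional cutoff $\varphi$ must be inserted to localize away from the origin, contributing the extra $|\varphi({\rm Re}\,z)|$ term in \eqref{eq:3133}; near $z=0$ itself the extension is simply declared to vanish to second order (one may multiply by $z^2$ times a smooth bump, or note that $F_3$ is analytic at $0$ with $F_3(0)=F_3^0\ne0$ by Proposition \ref{pro24}(3) so $G_n$ is smooth and bounded there, and $r$ Schwartz forces rapid decay), which gives \eqref{eq:316}.

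The routine parts are the bookkeeping of the six rotated families and the verification that the trigonometric interpolation reproduces the stated boundary values. The main obstacle — and the place where the hypotheses must be used carefully — is the $\varrho^{-1}$ singularity generated by differentiating the angular weight: one must show it is genuinely absorbed by the Hölder-$\tfrac12$ decay of $F_{12}(z) - F^0_{12}(\varkappa_{0,j})(z-\varkappa_{0,j})^{i\nu}$ rather than merely by the boundedness of $F_{12}$, so the subtraction of the exact model factor $(z-\varkappa_{0,j})^{\pm 2i\nu}$ in the definition of $R^n_j$ on $l^n_j$ is essential, and one has to check that the interpolated quantity still has the $1/2$-Hölder modulus uniformly on the sector. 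A secondary technical point is matching, at the origin-touching wedges, the extension built near the phase point with the one built near $z=0$ so that the combined $R^n_j$ is $C^1$ on $\bar\Omega^n_j$ and the two local estimates \eqref{eq:3133} and \eqref{eq:316} hold simultaneously; this is arranged by a partition of unity subordinate to the cover of the sector by a neighborhood of $\varkappa_{0,j}$ and a neighborhood of $0$, with $\varphi$ playing the role of the second cutoff.
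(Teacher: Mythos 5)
Your overall strategy (explicit angular interpolation between the two prescribed boundary values, reduction to $n=0$ by the $\omega$-rotation symmetry, using the H\"older-$\tfrac12$ estimates of Proposition \ref{pro24}(4) to tame the $\varrho^{-1}$ from the angular derivative) is the same as the paper's, but your explicit formula does not deliver the claimed bounds, and you leave unresolved exactly the two points the paper's proof is built to handle. First, the interpolation $R^0_j=A\cos\bigl(\tfrac{\pi\phi}{2\phi_0}\bigr)+B\sin\bigl(\tfrac{\pi\phi}{2\phi_0}\bigr)$ (with $A$ the extension of the data on $\mathbb{R}$ and $B$ the model datum on $l^0_j$) has weights that do not sum to $1$, so $\tfrac{1}{\varrho}\partial_\phi R^0_j$ contains the term $\tfrac{\pi}{2\phi_0\varrho}\bigl(-A\sin(\cdot)+B\cos(\cdot)\bigr)$, whose bracket is $O(1)$ (not $O(\varrho^{1/2})$) as $z\to\varkappa_{0,j}$ at generic interior angles; hence $|\bar\partial R^0_j|\sim|z-\varkappa_{0,j}|^{-1}$ and \eqref{eq:4.2}--\eqref{eq:3133} fail, degrading the later estimates (Propositions \ref{pro6.1}, \ref{pro6.2}) that rely on the $-1/2$ power. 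The H\"older-$\tfrac12$ mechanism you invoke only works if the angular derivative hits the \emph{difference} of the two boundary extensions, i.e.\ if the weights form a partition of unity; this is exactly why the paper uses $\cos^2\alpha$ and $1-\cos^2\alpha$ in $R^{01}_{5-}(z)=(\varphi(\mathrm{Re}\,z)-1)\bar r(\mathrm{Re}\,z)G_0(z)^{-1}\cos^2\alpha+f_{5-}(z)(1-\cos^2\alpha)$. Your hedge ``or a smooth variant thereof'' does not repair this, since the defect is structural, not a matter of smoothness, and you explicitly defer (``one must show\dots one has to check\dots'') precisely the estimate that constitutes the lemma.

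Second, the bound \eqref{eq:316} near $z=0$ is not obtained by your proposal. You cannot ``declare'' the extension to vanish to second order or multiply it by $z^2$ times a bump: the boundary value $p^n_{5\pm}(z)G_n(z)^{\mp1}=-\bar r(\mathrm{Re}\,z)G_n(z)^{\mp1}$ is prescribed on $\omega^n I_\pm$ up to the origin and is generically nonzero there, and the requirement is on $\bar\partial R^n_j$, not on $R^n_j$. Neither does smoothness of $G_n$ at $0$ nor $r\in\mathcal{S}(\mathbb{R})$ force $\bar\partial$ of a generic interpolant to vanish quadratically at $0$: since the datum depends on $\mathrm{Re}\,z$, a first-order interpolation only gives $\bar\partial R=O(1)$ (or $O(\sin\alpha)$) near the real axis. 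The paper handles this with the correction term $R^{02}_{5-}$, a second-order angular Taylor expansion built from $g_{5-}$, $g'_{5-}$, $g''_{5-}$ together with the angular cutoff $\varphi(\alpha)$, designed so that $\bar\partial R^{02}_{5-}=O(\sin^2\alpha)\lesssim|z|^2$ near the origin while the $(\varphi(\mathrm{Re}\,z)-1)$ factor kills $R^{01}_{5-}$ there; some device of this kind is indispensable because the reconstruction formula \eqref{eq:2.39}--\eqref{eq:2.40} is evaluated at $z=0$ and the expansions \eqref{eq:366}--\eqref{eq:369} integrate against $s^{-1}$ and $s^{-2}$. To complete your argument you would need to (i) replace your weights by a partition of unity so the $\varrho^{-1}$ term carries the difference of boundary data, and (ii) add a second-order correction near the origin compatible with the prescribed boundary values.
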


\begin{proof}
We give the details firstly for
	$z =-\varkappa+ se^{i\alpha}\in\Omega^0_{5-}
	$. The other cases are easily inferred.
	Define the function $R^0_{5-}(z)$ as follows:
	\begin{equation*}
		R^0_{5-}(z)= R^{01}_{5-}(z)+ R^{02}_{5-}(z),
	\end{equation*}	
	where
	\begin{align*}
		&R^{01}_{5-}(z)=(\varphi({\rm Re}\,z)-1)\bar r({\rm Re}\,z){G}_{0}(z)^{-1}\cos^2 \alpha +f_{5-}(z)(1-\cos^2  \alpha ),\\
		&R^{02}_{5-}(z)=g_{5-}( {\rm Re}\,z)\cos^2 \alpha +\frac{i}{2}s^{-i\alpha}\sin(2\alpha)\cos(2\alpha)\varphi(\alpha)g'_{5-}({\rm Re}\,z)+\sin^2 \alpha \\
		&\cdot \varphi(\alpha)g_{5-}({\rm Re}\,z)-\frac{1}{4}se^{-i\alpha}\sin^2 \alpha g'_{5-}({\rm Re}\,z)-\frac{1}{8}s^2e^{-2i\alpha}\sin^2 \alpha g''_{5-}({\rm Re}\,z),
	\end{align*}
	where	we take the cut-off function $\varphi(z)\in\mathbb{C}_0^\infty(\mathbb{R}[0,1])$ satisfying
	\begin{equation}
		\varphi(z)=\begin{cases}
			1,\quad |z|<\min\{\rho,1\}/8,\\
			0,\quad |z|> \min\{\rho,1\}/4,
		\end{cases}
	\end{equation}
	where $\rho$ defined as \eqref{eq:311} and
	\begin{align*}
		&g_{5-}({\rm Re} \,z)=-\varphi({\rm Re}\,z)\bar r({\rm Re}\,z)G_{0}(z)^{-1},\\
		&f_{5-}(z)=	-\bar r(\varkappa)G_0(\varkappa)^{-1}(z\varkappa)^{-2i\nu(\varkappa)}(1-\chi_{\mathcal{Z}}(z)).
	\end{align*}

	We first deal with $R^{01}_{5-}(z)$, by using $\bar\partial=\frac{1}{2}e^{i\alpha}\big(\frac{\partial}{\partial s}+\frac{i}{s}\frac{\partial}{\partial\alpha}\big)$. Given that $\bar r'({\rm Re}\,z),s^{-1}$ is bounded in the domain of $\varphi({\rm Re}\,z)$.
	Due to the bound  of function $G_0({\rm Re}\,z)^{-1}$ then we get
\begin{align}
|	\bar\partial R^{01}_{5-}(z)|\lesssim \varphi({\rm Re} \,z)+|r'({\rm Re}\,z)|+|z-\varkappa|^{-1/2}.\label{po1}
\end{align}
	Similarly, as for $	\bar\partial R^{02}_{5-}(z)$, we have
	$$|	\bar\partial R^{02}_{5-}(z)|\lesssim \varphi({\rm Re}\,z).$$
	
	As $z\to0$, we have $\alpha\to0$ and within a small neighborhood of $0$, $\varphi(\alpha)=1,\varphi'(\alpha)=0$, thus
	\begin{align}
		&	|\bar\partial R^{02}_{5-}(z)|\lesssim|\sin^2\alpha|+|(g_{5-})'({\rm Re}\,z)\varphi'(\alpha)\sin\alpha|\lesssim|z|^2, \label{po2}
	\end{align}
which together with (\ref{po1}) gives the estimate (\ref{eq:3133}) for the case $z  \in\Omega^0_{5-}
	$.	
\end{proof}

We now define continuous extension function
\begin{align}\nonumber
	\mathcal{R}^{(2)}=\begin{cases}
		\begin{pmatrix}
			1&0&0\\
			R^0_je^{2it\theta(z)}&1&0\\
			0&0&1
		\end{pmatrix},\,\,\,\,\,\,z\in\Omega^r_{j},\,\,\,\,\,
		\begin{pmatrix}
			1&R^0_{j}e^{-2it\theta(z)}&0\\
			0&1&0\\
			0&0&1
		\end{pmatrix},\,\,\,\,z\in\Omega^e_j,\\[15pt]
		\begin{pmatrix}
			1&0&0\\
			0&1&0\\
			R^1_je^{2it\theta(\omega^2z)}&1&0\\
		\end{pmatrix},\,z\in\omega\Omega^r_{j},
		\begin{pmatrix}
			1&0&R^1_{j}e^{-2it\theta(\omega^2z)}\\
			0&1&0\\
			0&0&1
		\end{pmatrix},\,z\in\omega\Omega^e_j,\\[15pt]
		\begin{pmatrix}
			1&0&0\\
			0&1&	R^2_je^{2it\theta(\omega z)}\\
			0&0&1
		\end{pmatrix},\,\,\,\,z\in\omega^2\Omega^r_{j},
		\begin{pmatrix}
			1&0&0\\
			0&1&0\\
			0&R^2_{j}e^{-2it\theta(\omega z)}&1
		\end{pmatrix},\,z\in\omega^2\Omega^e_j,\\[15pt]
		I\qquad\quad\hspace{2.9cm} z\in\,elsewhere.
	\end{cases}
\end{align}
where we denote $\Omega^{r}_{j}=\Omega^0_{j}|_{j=1}^2\cup\Omega^0_{6\pm}$ and $\Omega^{e}_j=\Omega^0_j|_{j=3}^4\cup\Omega^0_{5\pm}$.
Then  the  new known function
\begin{equation}
	{M}^{(2)} (z)={M}^{(1)} (z)\mathcal{R}^{(2)}(z). \label{eq:312}
\end{equation}
satisfies the following  mixed $\bar\partial$-RH problem.

\begin{problem}\label{RH4.1}
	Find a continuous with sectionally continuous first partial derivatives  function ${M}^{(2)} ( z):\mathbb{C}\setminus(\Sigma^{(2)}\cup\mathcal{Z})\to$SL$_3(\mathbb{C})$ with the following properties:
	
	$\blacktriangleright$\emph{{The normalization condition:}}
	$${M}^{(2)} (z)=I+\mathcal{O}(z^{-1})\quad z\to\infty.$$
	
	$\blacktriangleright$\emph{{The symmetry proposition:}}	
	$${M}^{(2)} ( z)=\Gamma_1\overline{{M}^{(2)} ( \bar z)}\Gamma_1=\Gamma_2\overline{{M}^{(2)} ( \omega^2\bar z)}\Gamma_2=\Gamma_3\overline{{M}^{(2)} ( \omega \bar z)}\Gamma_3.$$
	
	$\blacktriangleright$
	${M}^{(2)} ( z)$  satisfies  the  jump relation:
	\begin{align}\label{eq:44}
		{M}_+ ^{(2)} ( z)={M}_-^{(2)} ( z)\begin{cases}
			J^{(2)}( z),\,\quad\,\,\hspace{1.3cm} z\in L,\\
			\Gamma_4^{-1}J^{(2)}( \omega^2z)\Gamma_4,\,\quad\, z\in\omega L,\\
			\Gamma_4J^{(2)}( \omega z)\Gamma_4^{-1} ,\,\quad\,\,\, z\in\omega^2L,
		\end{cases}
	\end{align}
	where
		\begin{align*}
		&J^{(2)} ( z)=\begin{cases}
			\begin{pmatrix}
				1&0&0\\
				R^0_j(z)e^{2it\theta(z)}&1&0\\
				0&0&1
			\end{pmatrix},\, z\in l^0_{j=1,6+},\,\,\,\begin{pmatrix}
				1&R^0_{j}(z)e^{-2it\theta(z)}&0\\
				0&1&0\\
				0&0&1
			\end{pmatrix}^{-1},\, z\in l^0_{j=4,5-}, \\[15pt]
		\begin{pmatrix}
				1&R^0_{j}(z)e^{-2it\theta(z)}&0\\
				0&1&0\\
				0&0&1
			\end{pmatrix},\, z\in l^0_{j=3,5+},\,	\begin{pmatrix}
			1&0&0\\
			R^0_j(z)e^{2it\theta(z)}&1&0\\
			0&0&1
			\end{pmatrix}^{-1},\, z\in l^0_{j=2,6-},
		\end{cases}
	\end{align*}
	
	$\blacktriangleright$
	${M}^{(2)} ( z)$ satisfies  the same  residue conditions with ${M}^{(1)}( z)$.

	$\blacktriangleright$	For each $z\in\mathbb{C}\setminus(\Sigma^{(2)}\cup\mathcal{Z})$, we have
	  $\bar\partial$-equation
	\begin{align}\label{eq:4.4}
		\bar\partial{M}^{(2)} ( z) ={M}^{(2)} ( z) \bar\partial \mathcal{R}^{(2)}(z).
	\end{align}
\end{problem}

We further decompose the mixed $M^{(2)}( z)$ into the form
\begin{equation}\label{eq:3.177}
	{M}^{(2)}( z)=	M^{(3)}( z)M^{rhp} ( z),
\end{equation}
where    ${M}^{rhp}(z)$ is the solution of the pure RH problem \ref{RH3.4} and
 ${M}^{(3)}(z)$ is a continuously   function that satisfies the  $\bar\partial$-RH problem \ref{RH5.1}.

\subsection{Analysis on the pure  RH problem}\label{sec32}

We now consider the following pure  RH problem for    $M^{rhp}(z)$.
\begin{problem}\label{RH3.4}
	Find an analytic function $M^{rhp}(z):\mathbb{C}\setminus\Sigma^{(2)}\to SL_3(\mathbb{C})$ such that

	$\blacktriangleright$\emph{{The normalization condition:}}
	$$M^{rhp}(z)=I+\mathcal{O}(z^{-1})\quad z\to\infty.$$
	
	$\blacktriangleright$\emph{{The symmetry proposition:}}	
	$$M^{rhp}(z)=\Gamma_1\overline{M^{rhp}( \bar z)}\Gamma_1=\Gamma_2\overline{M^{rhp}( \omega^2\bar z)}\Gamma_2=\Gamma_3\overline{M^{rhp}( \omega \bar z)}\Gamma_3.$$

	$\blacktriangleright$
	$M^{rhp}(z)$  satisfies  the same  jump relations  and residue conditions with ${M}^{(2)}( z)$.

\end{problem}

Noticing that if
we define a open neighborhood
$$\mathcal{U}=  \bigcup_{n=0}^1 \bigcup_{j=0}^2 \mathcal{U}_{\varkappa_{nj}},     $$
where $\mathcal{U}_{\varkappa_{nj}}=\{z:|z-\varkappa_{nj}|<\rho/2\}, n=0,1; j=0,1,2$, then  within this vicinity,
the solution $M^{rhp}(z)$ is free of poles.   Therefore we  construct the solution $M^{rhp}(z)$ by the  decomposition
\begin{equation}\label{eq:320}
	M^{rhp}(z)=\begin{cases}
		\mathcal{E}(z){M}^{out}(z){M}^{loc}(z),\quad z\in \mathcal{U},\\
		\mathcal{E}(z){M}^{out}(z), \hspace{1.6cm} z \in  \mathbb{C}\setminus\mathcal{U},
	\end{cases}
\end{equation}
where each component  is constructed as follows:
\begin{enumerate}
	
	\item[$\bullet$]  ${M}^{out}(z)$ exactly solves the pure soliton   RH problem \ref{RH39} obtained by disregarding the jump matrix.

	\item[$\bullet$] ${M}^{loc}(z)$ matches parabolic cylinder  RH problem \ref{RH5.2}   in the vicinity of the saddle points $\varkappa_{nj} $.
	
	\item[$\bullet$]  $\mathcal{E}(z)$  is residual error which given by  the  small-norm RH problem \ref{RH5.7}.
\end{enumerate}

\subsubsection{A outer model about the soliton component}\label{3.3.2}
We  now solve the following the pure soliton   RH problem \ref{RH39}.
\begin{problem}\label{RH39}
	Find a 3$\times$3 analytic matrix-valued function ${M}^{out}(z)$\,:\,$\mathbb{C}\setminus\mathcal{Z}\to SL_3(\mathbb{C})$, with the following properties.

	$\blacktriangleright$\emph{{The normalization condition:}}
	$${M}^{out}(z)=I+\mathcal{O}(z^{-1})\quad z\to\infty.$$
	
	$\blacktriangleright$\emph{{The symmetry proposition:}}	
	$${M}^{out}(z)=\Gamma_1\overline{{M}^{out}(\bar z)}\Gamma_1=\Gamma_2\overline{{M}_{\dagger}(\omega^2\bar z)}\Gamma_2=\Gamma_3\overline{{M}^{out}(\omega \bar z)}\Gamma_3.$$

$\blacktriangleright$  ${M}^{out}(z)$ satisfies  the same  residue conditions with ${M}^{(2)}( z)$.
	
\end{problem}

For above RH problem, we can show that
\begin{proposition}   The RH problem \ref{RH39} admits a  unique solution   given by
	\begin{equation}\label{eq:33329}
		{M}^{out}(z)={M}^{sol}(z;  {\mathcal{D}}),
	\end{equation}
	where ${M}^{sol}(z;\hat{\mathcal{D}}) $ is the solution of RH problem \ref{RHB.1} corresponding to the reflectionless scattering data $\hat{\mathcal{D}}=\{r(z)\equiv0,(\xi_n,\hat C_n)_{n=1}^{6N}\}$.
	
	As $z\to0$, ${M}^{out}(z)$ admits the expansion
	\begin{equation}\label{eq:3.29}
		{M}^{out}(z)={M}_0^{out}+{M}_1^{out} z+\mathcal{O}(z^2).
	\end{equation}
	Then according to equation \eqref{eq:2.39}, we have
	\begin{align}
		u_{sol}(x,t;  {\mathcal{D}} ) = x_t,
	\end{align}
	where
	\begin{subequations}
		\begin{align*}
			& x(y,t)=y+ \frac{\sum_{j=1}^3[{M}_1^{out}]_{3j}}{\sum_{j=1}^3[{M}_0^{out}]_{3j}}.
		\end{align*}
		\begin{proof}
			   Existence and uniqueness of
       RH problem \ref{RH39}
      follows from Proposition \ref{prob1}, we give the reconstruction formula of $	
      {M}^{out}(z)$ from  \eqref{eq:2.39}.
		\end{proof}
	\end{subequations}
\end{proposition}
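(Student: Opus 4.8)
The plan is to establish the three assertions in order: first existence and uniqueness of RH problem \ref{RH39}, then its identification with the renormalized reflectionless problem \ref{RHB.1}, and finally the reconstruction of $u_{sol}$ from the local behaviour of ${M}^{out}$ at $z=0$.

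First I would settle existence and uniqueness. RH problem \ref{RH39} has no jump contour, only the $6N$ residue relations inherited through \eqref{eq:2.32} from RH problem \ref{RH3.2}. One writes the meromorphic ansatz in which ${M}^{out}(z)-I$ is a finite sum of simple-pole terms supported on $\mathcal{Z}$, with numerators governed by the nilpotent matrices ${V}^{(1)}(\xi_n)$ and their images under the reductions by $\Gamma_1,\Gamma_3,\Gamma_4$; imposing the residue conditions closes this into a linear algebraic system for the columns of ${M}^{out}$ at the poles. Uniqueness, hence invertibility of the system, follows from the usual vanishing lemma: a solution normalized to vanish at infinity produces, via the three $\Gamma$-symmetries, scalar combinations that are entire and decay at infinity, hence vanish, which forces ${M}^{out}$ itself to vanish; together with $\det{M}^{out}\equiv1$ this gives a unique solution. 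I expect this step --- the $3\times3$ vanishing lemma, with three families of poles on the rays $i\mathbb{R}$, $e^{i\pi/6}\mathbb{R}$, $e^{-i\pi/6}\mathbb{R}$ and three symmetry reductions to keep aligned --- to be the main obstacle.

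Next I would identify the solution with ${M}^{sol}(z;\hat{\mathcal{D}})$. The residue data of RH problem \ref{RH39} differ from those of the bare reflectionless problem \ref{RHB.1} only by the diagonal conjugation $Y(z)={\rm diag}({F}_1(z),{F}_2(z),{F}_3(z))$, equivalently by the scalar factor $T(z)$ of \eqref{eq:240}. Evaluating $T$, and the relevant derivatives of $1/{F}_j$, at each pole $\xi_n$ converts $c_n$ into a renormalized constant: the Blaschke products over ${\Lambda}_1,{\Lambda}_2$ contribute unimodular factors, while the $\delta(z)$ part contributes $\exp\!\big(\tfrac{i}{\pi}\int_{{I}}\tfrac{\log(1-|r(s)|^2)}{s-\xi_n}\,ds\big)$ by the Plemelj representation of $\delta$. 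This is exactly the factor $\hat c_n/c_n$ of Theorem \ref{th1.1}, so the solution of the previous step is ${M}^{sol}(z;\hat{\mathcal{D}})$, which is \eqref{eq:33329}. The delicate point here is carrying all the $\omega$- and sign-factors correctly through the six symmetry orbits of the pairs $(\xi_n,c_n)$.

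Finally I would read off the expansion at the origin and apply the reconstruction formula. Since $\mathcal{Z}$ is a finite subset of $\mathbb{C}\setminus\{0\}$, ${M}^{out}$ is holomorphic near $z=0$ and admits the Taylor expansion \eqref{eq:3.29}. Because ${M}^{out}$ satisfies exactly the residue conditions of the original problem and is its leading term, the reconstruction formula \eqref{eq:2.39}--\eqref{eq:2.40} applies with $M$ replaced by ${M}^{out}$ and $M(0)$, $M'(0)$ replaced by ${M}_0^{out}$, ${M}_1^{out}$; using the symmetry of ${M}^{out}$ to pass from the third column to the third row then gives $x(y,t)=y+\frac{\sum_{j=1}^3[{M}_1^{out}]_{3j}}{\sum_{j=1}^3[{M}_0^{out}]_{3j}}$ and $u_{sol}(x,t;\mathcal{D})=x_t$. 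This last step is purely computational once the previous two are in place.
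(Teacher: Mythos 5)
Your overall architecture (solvability of the residue-only problem, identification of the renormalized norming constants $\hat c_n$ through $T(\xi_n)$, then Taylor expansion at $z=0$ and the reconstruction formula \eqref{eq:2.39}--\eqref{eq:2.40}) matches the paper, which simply cites Proposition \ref{prob1} for existence and uniqueness and reads off the expansion. Your steps two and three are fine, indeed more explicit than the paper about how the $\delta$-part of $T$ produces the factor $\hat c_n/c_n$ (and the passage from the third column in \eqref{eq:2.40} to the third row in the stated formula is a notational wrinkle already present in the paper itself).

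The genuine gap is in your first step, which is exactly the point you yourself flag as the main obstacle. You reduce to a finite linear algebraic system (correct, and this is what the paper does in Appendix B, cf. \eqref{eq:B14}), but you then assert that its invertibility ``follows from the usual vanishing lemma'' because the three $\Gamma$-symmetries produce ``scalar combinations that are entire and decay at infinity, hence vanish.'' As stated this does not work: a putative homogeneous solution still has the simple poles at $\mathcal{Z}$, and applying $\Gamma_1,\Gamma_2,\Gamma_3$ only permutes/conjugates its entries at reflected points, so no entire function is produced and Liouville gives nothing. A vanishing lemma for a pure-soliton (no-jump) problem requires a positivity input, not just symmetry; for $3\times3$ problems with poles on the three rays and the two families of nilpotent residue matrices this is precisely the nontrivial content. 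The paper supplies it by writing the system in the block form \eqref{eq:B14} and showing the coefficient matrix is nonsingular because $A\bar A$ is positive definite (which uses the specific symmetry relations among the $\xi_n$ and $c_n$); in your proposal this ingredient is missing, so existence and uniqueness --- the only thing the paper's proof actually needs to establish --- is not proved. Either import Proposition \ref{prob1} as the paper does, or carry out an honest positivity argument (e.g.\ a Hermitian quadratic-form estimate on the residue system) in place of the claimed vanishing lemma.
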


\subsubsection{A local model near the phase  points}\label{sec3.2.2}

 By utilizing the jump relations \eqref{eq:44}, the spectral bound \eqref{eq:311} and the phase functions $\theta(z)$, we can demonstrate that

\begin{equation}\label{eq:321}
	|J^{(2)} (z)-I|\lesssim\mathcal{O}(e^{-\frac{\sqrt{3}\rho}{2}t}), \ z\in\Sigma^{(2)}\setminus \mathcal{U},
\end{equation}
which is exponentially small outside of  $\mathcal{U}$. Based on this estimation, we can construct a model   in the region outside $\mathcal{U}$ that completely ignores the jumps.  Thus we  solve the following  local   RH problem   with contour $\Sigma^{loc}= \Sigma^{(2)}\cap \mathcal{U}$.

\begin{problem}\label{RH5.2}
	Find a 3$\times$3 analytic matrix-valued function ${M}^{loc} (z)$\,:\, $\mathbb{C}\setminus\Sigma^{loc}\to SL_3(\mathbb{C})$, with the following properties:

	$\blacktriangleright$\emph{{The normalization condition:}}
	$${M}^{loc} (z)=I+\mathcal{O}(z^{-1}),\quad z\to\infty.$$
	
	$\blacktriangleright$\emph{{The symmetry proposition:}}	
	$${M}^{loc} (z)=\Gamma_1\overline{{M}^{loc} ( \bar z)}\Gamma_1=\Gamma_2\overline{{M}^{loc} ( \omega^2\bar z)}\Gamma_2=\Gamma_3\overline{{M}^{loc} ( \omega \bar z)}\Gamma_3.$$	
	
	$\blacktriangleright$\emph{{The jump condition:}}
	${M}^{loc} (z)$ satisfies  the  jump relation:
	\begin{align}
		{M}^{loc}_+ (z) ={M}^{loc}_- (z)\begin{cases}
			J^{(2)} (z),\hspace{1.9cm} z\in L\cap \mathcal{U},\\
			\Gamma_4^{-1}J^{(2)} ( \omega^2z)\Gamma_4 ,\,\quad\, z\in   \omega L\cap \mathcal{U},\\
			\Gamma_4J^{(2)} ( \omega z)\Gamma_4^{-1} ,\,\quad\,\,\,\, z\in   \omega^2 L\cap \mathcal{U},
		\end{cases}	\end{align}
	where  $J^{(2)} (z)$ is given in \eqref{eq:44}.	
\end{problem}

In   Appendix \ref{secA},     we  take
\begin{align}
	&\zeta =\sqrt{c_{nj} t}(z-\varkappa_{nj}),\quad c_{nj}=\frac{2\sqrt{3}}{\varkappa_{nj}^3},\ n=0,1,2;j=0,1,\label{eq:6.18}\\
	&r_0=
	-\bar r(\varkappa_{nj})G_0(\varkappa_{nj})^{-1}e^{-2i\nu(\varkappa_{nj})\log(\sqrt{c_{nj}t})}
	\Big(\frac{8\sqrt{3}}{\varkappa_{nj}}t\Big)^{-i\nu}:=r_{\varkappa_{nj}},\label{r0s}
\end{align}
then the local RH problem \ref{RH5.2}  at each small cross centered at phase point $\varkappa_{nj}$
exactly matches parabolic cylinder model problem \ref{RH5.3}. Further we denote
\begin{align}
&M_1^{pc}(\varkappa_{0j}) =M_1^{pc}(r_0)\big|_{r_0=r_{\varkappa_{0j}}}, \ \ M_1^{pc}(\varkappa_{1j}) =\Gamma_4^{-1}  M_1^{pc}(\varkappa_{0j})\Gamma_4, \\
& M_1^{pc}(\varkappa_{2j}) =\Gamma_4  M_1^{pc}(\varkappa_{0j})\Gamma_4^{-1},\ \ j=1,2,
\end{align}
 where $M_1^{pc}(r_0) $ is given  by (\ref{pc1}) and $\varkappa_{0j} $ is defined by (\ref{r0s}) with $n=0$.   Then the solution  ${M}^{loc} (z)$ is consists of the sum of solutions
of the  RH problem  centered at $\varkappa_{nj}$, that is,
\begin{align}
	{M}^{loc}  (z)=  I&+	\frac{1}{\sqrt{t}}\sum_{n=0}^2  \sum_{j=1}^2  \frac{M_1^{pc}( \varkappa_{nj})}{\sqrt{c_{nj}}(  z-\varkappa_{nj})} +\mathcal{O}(\zeta^{-2}).\nonumber
\end{align}

\subsubsection{A small-norm RH problem}
In this section, we consider
the $\mathcal{E}(z)$ which is analytic in $\mathbb{C}\setminus\Sigma^{ \mathcal{E} }$, where
$$\Sigma^{ \mathcal{E} }=\partial \mathcal{U} \cup( \Sigma^{(2)}\setminus \mathcal{U}).$$
 It is straightforward to show that $\mathcal{E}(z)$ must satisfy the following RH problem.
\begin{problem}\label{RH5.7}
	Find a   matrix-valued function $\mathcal{E}(z)$ with the following properties:

	$\blacktriangleright$\emph{{The normalization condition:}}
	$$\mathcal{E}(z)=I+\mathcal{O}(z^{-1}),\quad z\to\infty.$$
	
	$\blacktriangleright$\
	$\mathcal{E}( z)$  satisfies  the  jump relation:
	\begin{equation}
		 \mathcal{E}_+ ( z) = \mathcal{E}_-( z) {V}^{\mathcal{E}}(z)
	\end{equation}
	where
	\begin{equation}\label{eq:329}
		{V}^{\mathcal{E}}(z)=\begin{cases}
			{M}^{out}( z)J^{(2)} ( z){M}^{out}( z)^{-1},\quad z\in\Sigma^{(2)}\setminus\mathcal{U},\\[1.5pt]
			{M}^{out}( z){M}^{loc} ( z){M}^{out}( z)^{-1},\quad\, z\in\partial\mathcal{U}.
		\end{cases}	
	\end{equation}
\end{problem}
Starting from  \eqref{eq:329} and \eqref{eq:321}, using the boundedness of ${M}^{out}(z)$ for $z\in\mathcal{U}$, one finds that
\begin{equation}\label{eq:332}
	|{V}^{\mathcal{E}}(z)-I|=
	\begin{cases}
		\mathcal{O}(e^{-\frac{\sqrt{3} \rho}{ 2} t }), \quad z\in\Sigma^{(2)}\setminus\mathcal{U},\\
		\mathcal{O}(t^{-1/2}),\quad\quad  z\in\partial\,\mathcal{U},
	\end{cases}
\end{equation}
and it follows that
\begin{equation}\label{eq?OE?332}
	\|\left\langle \cdot \right\rangle^k({V}^{\mathcal{E}}(z)-I)\|_{L^p(\Sigma^{ \mathcal{E} })}=\mathcal{O}(t^{-1/2}),\quad p\in[1,\infty],\,\,k\ge0.
\end{equation}

This uniformly vanishing bounded on ${V}^{\mathcal{E}}(z)-I$ establishes RH problem \ref{RH5.7} as a small-norm RH problem,
 for which there is a well known existence and uniqueness theorem. In fact, we may write
\begin{equation}\label{eq:5.47}
	\mathcal{\mathcal{E}}(z) =I+\frac{1}{2\pi i}\int_{\Sigma^{ \mathcal{E} }}\frac{(I+\eta(s))({V}^{ \mathcal{E} }(s)-I)}{s-z}ds,
\end{equation}
where $\eta\in L^2(\Sigma^{ \mathcal{E} })$ is the unique solution of the equation
\begin{equation}\label{eq:334}
	(I-\mathcal{C} )\eta=\mathcal{C} I,
\end{equation}
where the integral operator $\mathcal{C}: L^2 \to L^2 $ is given  by
\begin{align}\label{eq:5.49}
	&\mathcal{C} f=\mathcal{P}^-(f( {{V}^{ \mathcal{E} }}-I)),
\end{align}
where $\mathcal{P}^-$ is the   Cauchy projection operator defined by
 \begin{align}\label{eq:5.49}
	& \mathcal{P}^-f(z)=\underset{\epsilon \to 0 }{\lim}\frac{1}{2\pi i}\int_{\Sigma^{ \mathcal{E}  }}\frac{f(s)}{s-(z-i\epsilon)}ds.
\end{align}

It then follows from equation \eqref{eq?OE?332} and \eqref{eq:5.49} that
\begin{equation}
	\|\mathcal{C}\|_{L^2(\Sigma^{ \mathcal{E} })}\lesssim\|\mathcal{P}^-\|_{L^2(\Sigma^{\mathcal{E}})}\|{V}^{\mathcal{E}}-I\|_{L^\infty(  \Sigma^{\mathcal{E}}  )} \lesssim\mathcal{O}(t^{-1/2}),
\end{equation}
which guarantees the existence of the resolvent operator $(I-\mathcal{C})^{-1}$ and thus of both $\eta$ and $\mathcal{E}(z)$.

We expand   $\mathcal{E}(z)$    at  $z= 0$
\begin{align}
	\mathcal{E}(z)&
	 =I+\mathcal{E}_0+\mathcal{E}_1z+\mathcal{O}(z^2),\nonumber
\end{align}
where
\begin{align}
		&\mathcal{E}_k=\frac{1}{2\pi i}\oint_{ \partial \mathcal{U}}\frac{  {V}^{\mathcal{E}}(s)-I }{s^k}+O(t^{-1})
 = \frac{A_k}{\sqrt{t}} +O(t^{-1}),\,\,k=0,1.\label{eq:346}
\end{align}
where
\begin{align}A_k:=\sum_{n=0}^2  \sum_{j=1}^2  \frac{
M^{out}( \varkappa_{nj})M_1^{pc}( \varkappa_{nj}) M^{out}( \varkappa_{nj})^{-1}}{\sqrt{c_{nj} } \varkappa_{nj}^k }, \ \label{eqAK}
\end{align}

\subsection{Analysis on the remaining $\bar\partial$-problem}\label{sec3.3}
\begin{problem}\label{RH5.1}
	Find a continuous with sectionally continuous first partial derivatives function ${M}^{(3)}(z):\mathbb{C}\to SL_3(\mathbb{C})$ with the following properties:
	
	$\blacktriangleright$\emph{{The normalization condition: }}
	$${M}^{(3)}( z)=I+\mathcal{O}(z^{-1}),\quad z\to\infty.$$
	
	$\blacktriangleright$\emph{{The symmetry proposition}}	
	$${M}^{(3)}( z)=\Gamma_1\overline{{M}^{(3)}( \bar z)}\Gamma_1=\Gamma_2\overline{{M}^{(3)}( \omega^2\bar z)}\Gamma_2=\Gamma_3\overline{{M}^{(3)}( \omega\bar z)}\Gamma_3.$$	
	
	$\blacktriangleright$\emph{{The $\bar\partial$ condition:}}
	For $z\in\mathbb{C}$, we have
	\begin{equation}
		\bar\partial{M}^{(3)}( z)={M}^{(3)}( z){W}^{(3)}(z),
	\end{equation}
	where ${W}^{(3)}(z):=M^{rhp}(z) (z)\bar\partial J^{(2)} (z)M^{rhp}(z)^{-1}$.
\end{problem}
$\bar\partial$-RH problem \ref{RH5.1} is equivalent to the integral equation
\begin{equation}\label{eq:6.1}
	{M}^{(3)}(z) =I-\frac{1}{\pi}\iint_{\mathbb{C}}\frac{{M}^{(3)}(s){W}^{(3)}(s)}{s-z}dA(s),
\end{equation}
where $dA(s)$ is Lebesgue measure on the plane, equation (\ref{eq:6.1}) can be written using operator notation as
\begin{equation}
	(I-S){M}^{(3)}(z)=I,
\end{equation}
where $S$ is the solid Cauchy operator
\begin{equation}\label{eq:6.3}
	S(f)(z)=-\frac{1}{\pi}\iint_{\mathbb{C}}\frac{f(s){W}^{(3)}(s)}{s-z}dA(s).
\end{equation}

The subsequent Proposition \ref{pro6.1} demonstrates that for $t\to+\infty$, the operator $S$ is small-norm, so that the resolvent operator $(I-S)^{-1}$ exists and can be expressed as Neumann series.
\begin{proposition}\label{pro6.1}
	There exist a constant $c$ such that   the operator (\ref{eq:6.3}) satisfies the estimate
	\begin{equation}
		\|S\|_{L^\infty\to L^\infty}\leq ct^{-1/4},\quad t\to+\infty. 	\label{oor}
		\end{equation}
\end{proposition}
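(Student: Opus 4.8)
The plan is to reduce \eqref{oor} to a scalar $\iint$-estimate and then extract the decay from the sign table of $\theta$ together with the pointwise control of $\bar\partial\mathcal{R}^{(2)}$ furnished by Lemma \ref{le4.1}. Since $M^{rhp}(z)$ and $M^{rhp}(z)^{-1}$ are bounded uniformly in $z\in\mathbb{C}$ and in large $t$ — on $\mathbb{C}\setminus\mathcal{U}$ by the uniform boundedness of $M^{out}$ away from $\mathcal{Z}$ and the small-norm bound \eqref{eq:332} for $\mathcal{E}$, and on $\mathcal{U}$ additionally by the explicit parabolic-cylinder formula for $M^{loc}$ — one has $|W^{(3)}(s)|\lesssim|\bar\partial\mathcal{R}^{(2)}(s)|$, hence
\[
\|Sf\|_{L^\infty}\le\|f\|_{L^\infty}\,\sup_{z\in\mathbb{C}}\frac1\pi\iint_{\mathbb{C}}\frac{|\bar\partial\mathcal{R}^{(2)}(s)|}{|s-z|}\,dA(s),
\]
so it suffices to bound the supremum by $Ct^{-1/4}$. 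By the $\Gamma_1,\Gamma_2,\Gamma_3$-symmetries and the $\omega$-rotational structure of $\mathcal{R}^{(2)}$, the $\omega$- and $\omega^2$-rotated sectors contribute exactly as the unrotated ones, so it is enough to estimate $\iint_{\Omega^0_j}\frac{|\bar\partial R^0_j(s)|}{|s-z|}\,dA(s)$, summed over the finitely many sectors $\Omega^0_j$ attached to the real phase points $\pm\varkappa$.

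On such a sector write $s=\pm\varkappa+u+iv$. From $\theta(z)=\theta(\pm\varkappa)+\tfrac{\sqrt3}{2\varkappa^3}(z\mp\varkappa)^2+\mathcal{O}((z\mp\varkappa)^3)$ and the geometry of the signature table (Figure \ref{sig2}) there is $c=c(\varkappa)>0$ with $\operatorname{Re}(\pm2it\theta(s))\le -c\,t|u||v|$ on the sector, hence $|e^{\pm2it\theta(s)}|\le e^{-ct|u||v|}$, the sign being the one that appears in $J^{(2)}$ there; moreover the lens is opened at an angle less than $\pi/4$, so $|v|\lesssim|u|$ and $|u||v|\gtrsim v^2$. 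Combining this with Lemma \ref{le4.1}, $|\bar\partial R^0_j(s)|\lesssim|p_j'(\operatorname{Re}s)|+|s\mp\varkappa|^{-1/2}+|\bar\partial\chi_{\mathcal{Z}}(s)|$, and with the improvement \eqref{eq:316} near $s=0$, we split the sectorial integral as $I_1+I_2+I_3$ according to these three summands.

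The term $I_3$ is supported in small disks around $\mathcal{Z}$, which by \eqref{eq:311} lie a fixed distance from $\Sigma^{(2)}$; there $|e^{\pm2it\theta}|=\mathcal{O}(e^{-ct})$, so $I_3=\mathcal{O}(e^{-ct})$ uniformly in $z$. For $I_1$, since $r\in H^1(\mathbb{R})$ gives $p_j'\in L^2(\mathbb{R})$, a Cauchy–Schwarz estimate in $u$ exploiting the Gaussian factor $e^{-ct|u||v|}\le e^{-ctv^2}$ yields $I_1=\mathcal{O}(t^{-3/4})$, uniformly in $z$. The decisive term is $I_2=\iint_{\Omega^0_j}\frac{|s\mp\varkappa|^{-1/2}e^{-ct|u||v|}}{|s-z|}\,dA(s)$: on the near-vertex piece $\{|s\mp\varkappa|<t^{-1/2}\}$ one drops the exponential and uses Hölder with conjugate exponents $3$ on $|s\mp\varkappa|^{-1/2}$ and $3/2$ on $|s-z|^{-1}$, giving a bound $\lesssim(t^{-1/2})^{1/6}(t^{-1/2})^{1/3}=t^{-1/4}$ uniformly in $z$; on the far piece $\{|s\mp\varkappa|\ge t^{-1/2}\}$ one bounds $|s\mp\varkappa|^{-1/2}e^{-ct|u||v|}\le|u|^{-1/2}e^{-ctv^2}$ and applies Hölder in $u$ with the same exponents to reduce to $\lesssim(t^{-1/2})^{-1/6}\int|v-\operatorname{Im}(z\mp\varkappa)|^{-1/3}e^{-ctv^2}\,dv\lesssim t^{1/12}\cdot t^{-1/3}=t^{-1/4}$, again uniformly in $z$; finally near $s=0$, where $\theta$ has its pole and the exponential is merely $\mathcal{O}(1)$, the vanishing \eqref{eq:316} makes the integrand $\lesssim|s|^2/|s-z|$, which is harmless. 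Summing the sectorial estimates gives the claimed $\mathcal{O}(t^{-1/4})$ bound, and hence \eqref{oor} and the invertibility of $I-S$ via Neumann series.

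The heart of the argument — and the step demanding the most care — is the estimate of $I_2$: one must control the product of the singularity $|s\mp\varkappa|^{-1/2}$, arising from the non-smoothness of the extension functions at the stationary phase points, with the Cauchy kernel singularity $|s-z|^{-1}$, while keeping all constants uniform over $z\in\mathbb{C}$. This interplay is exactly what degrades the bound from the $\mathcal{O}(t^{-3/4})$ one would obtain for an $L^2$ density down to $\mathcal{O}(t^{-1/4})$, and it is handled by the near/far splitting at scale $t^{-1/2}$ together with the choice of Hölder exponents indicated above.
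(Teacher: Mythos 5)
Your argument is correct and follows essentially the same route as the paper: reduce \eqref{oor} to a uniform-in-$z$ bound on $\iint_{\mathbb{C}}|\bar\partial\mathcal{R}^{(2)}(s)|\,|s-z|^{-1}\,dA(s)$ via the boundedness of $M^{rhp}$ and $M^{rhp}{}^{-1}$, split the integrand according to Lemma \ref{le4.1}, and exploit the quadratic decay of ${\rm Re}(\pm2it\theta)$ along horizontal slices together with Cauchy--Schwarz/H\"older. Your near/far splitting at scale $t^{-1/2}$ for the $|s\mp\varkappa|^{-1/2}$ term is a cosmetic variant of the paper's one-shot H\"older estimate with an arbitrary $p>2$, and your observation that the $\bar\partial\chi_{\mathcal{Z}}$ contribution is exponentially small (its support lies a fixed distance from $\mathbb{R}$, while in the sector $|u\mp\varkappa|\gtrsim|v|$) is fine, in fact slightly sharper than the paper's treatment of that piece.

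One correction: your claim that the $p_j'$ term satisfies $I_1=\mathcal{O}(t^{-3/4})$ uniformly in $z$ is not right. Cauchy--Schwarz in $u$ gives $\|p_j'\|_{L^2}\,|v-\beta|^{-1/2}$ for the slice integral, and $\int_0^\infty e^{-ctv^2}|v-\beta|^{-1/2}\,dv$ is only $\mathcal{O}(t^{-1/4})$ when $\beta$ is near zero; this is exactly the paper's own computation \eqref{p2}. The rate $t^{-3/4}$ is attainable only when the Cauchy kernel is evaluated away from the support of $\bar\partial\mathcal{R}^{(2)}$, which is the situation of Proposition \ref{pro6.2}, not of the operator norm here. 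Consequently your closing remark that the $|s\mp\varkappa|^{-1/2}$ singularity alone degrades the bound to $t^{-1/4}$ is inaccurate --- every term carrying the Cauchy singularity saturates $t^{-1/4}$ --- but since $t^{-1/4}$ is all that Proposition \ref{pro6.1} asserts, this misstatement does not affect the validity of your proof.
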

\begin{proof}
	We provide a detailed proof for (\ref{oor})  in the region $\Omega^0_{5+}$, and the same reasoning applies to other regions as well.
	
	Let $f\in L^\infty(\Omega^0_{5+})$, $s=u+iv$ and $z=\alpha+i\beta$.
	By using the boundness of  $\left\|M^{rhp} (z)  \right\|_{L^\infty }$ and
	the estimate  \eqref{eq:3133}, we can deduce that
	\begin{align}
		&|S(f)(z)|\leq\iint_{\Omega^0_{5+}}\frac{|f(s)M^{rhp} (s)\bar\partial J^{(2)} (s) M^{rhp} (s)^{-1}}{|s-z|}dA(s)\nonumber\\
		&\lesssim
		\|M^{rhp} (z)\|_{L^\infty }\|M^{rhp} (z)^{-1}\|_{L^\infty }\|f (z) \|_{L^\infty }
		\iint_{\Omega^0_{5+}}\frac{|\bar\partial  J^{(2)} (s)}{|s-z|}dA(s)\nonumber\\
		&\leq c(I_1+I_2+I_3)\|f (z) \|_{L^\infty }, \label{eq:6.5}
	\end{align}
	where  	
	\begin{align}\label{eq:6.7}
		&I_1=\iint_{\Omega^0_{5+}}\frac{|\bar\partial \chi_{\mathcal{Z}} +\varphi({\rm Re}\,z)|e^{-\sqrt{3}tv(u-\varkappa)}}{|s-z|}dA(s),\\\label{eq:6.8} &I_2=\iint_{\Omega^0_{5+}}\frac{|\bar r'(z)|e^{-\sqrt{3}tv(u-\varkappa)}}{|s-z|}dA(s),\\\label{eq:6.9}
		&I_3=\iint_{\Omega^0_{5+}}\frac{|z-\varkappa|^{-1/2}e^{-\sqrt{3}tv(u-\varkappa)}}{|s-z|}dA(s).
	\end{align}
Throughout we use the elementary fact
	\begin{align}\nonumber
		\Big\|\frac{1}{s-z}\Big\|_{L^2(v+\varkappa)}^2&=\Big(\int_{v+\xi}^\infty\frac{1}{(u-\alpha)^2+(v-\beta)^2}du\Big)^{1/2}\leq \frac{\pi}{|v-\beta|}
	\end{align} to shown that
	\begin{align}
		|I_1|\lesssim \int_0^\infty\frac{e^{-\sqrt{3}tv^2}}{|v-\beta|^{1/2}}dv\lesssim t^{-1/4}\int_{\mathbb{R}}\frac{e^{-\sqrt{3}(w+t\beta)^2}}{|w|^{1/2}}dw\lesssim t^{-1/4}. \label{p1}
	\end{align}
	The bound for $I_2$ is similar to $I_1$ that is
	\begin{align}
		|I_2|\lesssim\|\bar r'(z)\|_{L^2(\mathbb{R})}\int_0^\infty e^{-\sqrt{3}tv^2}\Big\|\frac{1}{s-z}\Big\|_{L^2(-\infty,v-\varkappa)}dv\lesssim t^{-1/4}. \label{p2}
	\end{align}
	For $I_3$ choose $p>2$ and $q$ H\"older conjugate to $p$, then
	\begin{align}
		|I_3|\lesssim\int_0^\infty e^{-\sqrt{3}tv^2}v^{1/p-1/2}|v-\beta|^{1/q-1}dv\lesssim t^{-1/4}. \label{p3}
	\end{align}
	The result is confirmed.
\end{proof}

To study the long time asymptotic behavior of $u(x,t)$,  it becomes essential to study the asymptotic expansion
of ${M}^{(3)}(z) $ as $z\to 0$,
\begin{align}\nonumber
	{M}^{(3)}(z) &=  I+{M}^{(3)}_0+{M}_1^{(3)}z+\mathcal{O}(z^2),
\end{align}
where
\begin{align}\label{eq:366}
	& {M}_0^{(3)}=-\frac{1}{\pi}\iint_{\mathbb{C}}\frac{ {M}^{(3)}(s) {W}^{(3)}(s)}{s}dA(s),\\\label{eq:369}
	&{M}^{(3)}_1=-\frac{1}{\pi}\iint_{\mathbb{C}}\frac{ {M}^{(3)}(s) {W}^{(3)}(s)}{s^2}dA(s).
\end{align}

\begin{proposition}\label{pro6.2}
There exists a constant $c$ such that
	\begin{equation}\label{eq:3522}
		|{M}_0^{(3)}|<c t^{-\frac{3}{4}}, \quad 	|{M}_1^{(3)}|<c t^{-\frac{3}{4}}.
	\end{equation}
\end{proposition}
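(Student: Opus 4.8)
The plan is to estimate the two integrals \eqref{eq:366} and \eqref{eq:369} directly, using the small-norm bound of Proposition \ref{pro6.1} together with the explicit structure of $W^{(3)}(z)=M^{rhp}(z)\,\bar\partial J^{(2)}(z)\,M^{rhp}(z)^{-1}$. First I would observe that from Proposition \ref{pro6.1} and the Neumann series $(I-S)^{-1}=\sum_{k\ge0}S^k$ we have $\|M^{(3)}\|_{L^\infty}\lesssim 1$ for $t$ large, so in both \eqref{eq:366} and \eqref{eq:369} the factor $M^{(3)}(s)$ may be bounded by a constant and pulled out. Likewise $M^{rhp}(z)$ and its inverse are uniformly bounded away from the discrete spectrum (and on the support of $\bar\partial\mathcal{R}^{(2)}$ the relevant distances are controlled by $\rho$), so the task reduces to bounding
\begin{equation}\nonumber
	\iint_{\mathbb{C}}\frac{|\bar\partial\mathcal{R}^{(2)}(s)|}{|s|^k}\,dA(s),\qquad k=0,1,
\end{equation}
region by region over the pieces of $\Sigma^{(2)}$, exactly as in the proof of Proposition \ref{pro6.1}.

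The key point that improves the exponent from $t^{-1/4}$ to $t^{-3/4}$ is that here there is no Cauchy kernel $|s-z|^{-1}$ to worry about — instead there is the weight $|s|^{-k}$, and near $z=0$ the estimate \eqref{eq:316} gives $|\bar\partial R^n_j(s)|\lesssim |s|^2$, which kills the singularity of $|s|^{-k}$ for $k=0,1$ and in fact leaves a factor $|s|$ or $|s|^2$ to spare. So I would split each regional integral into a piece near $z=0$ (a fixed small disk $|s|<\varepsilon$), where one uses \eqref{eq:316} and the exponential decay $e^{-c t\,\mathrm{Im}(\theta)}$ of the jump factors; and a piece away from $z=0$, where $|s|^{-k}$ is bounded and one repeats the three-integral splitting $I_1,I_2,I_3$ of Proposition \ref{pro6.1}. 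On the near-$0$ piece, the phase $\theta(z)=-\tfrac{\sqrt3}{2}(\xi z-1/z)$ has its dominant $1/z$ part blowing up, so $e^{2it\theta}$ is superexponentially small there (for the appropriate sign region); combined with $|\bar\partial R^n_j|\lesssim|s|^2$ this contributes $\mathcal{O}(t^{-\infty})$. On the away-from-$0$ piece, the gain comes from the same change of variables $v\mapsto w=tv$ (as in \eqref{p1}–\eqref{p3}) applied now without the extra $\|(s-z)^{-1}\|_{L^2}$ loss: scaling out $t$ from the Gaussian $e^{-\sqrt3 t v^2}$ in the transverse variable along a cross centered at a phase point $\varkappa_{nj}$ produces an overall factor $t^{-1/2}$ from the area element times a further $t^{-1/4}$ from the half-integrable singularity $|v-\beta|^{-1/2}$ already present in $I_1$–$I_3$, giving $t^{-3/4}$. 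One then sums the finitely many regions and the finitely many phase points.

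The main obstacle, and the step requiring the most care, is the bookkeeping on the away-from-$0$ pieces: one must verify that in each of the regions $\Omega^n_j$ ($n=0,1,2$; $j=1,\dots,4,5_\pm,6_\pm$) the relevant phase $\theta(\omega^n z)$ has the correct sign so that $e^{\pm 2it\theta(\omega^n z)}$ decays along the lens boundary (this is what Figures \ref{sig2} and the signature tables encode), and then redo the $L^p$–$L^q$ Hölder estimates of \eqref{eq:6.7}–\eqref{p3} with the Cauchy kernel replaced by the bounded weight $|s|^{-k}$. Because that weight is bounded on the away-from-$0$ piece, each of the three sub-integrals is strictly better than in Proposition \ref{pro6.1}; the extra $t^{-1/2}$ over $t^{-1/4}$ is exactly the area-element scaling that was "spent" on the Cauchy kernel in the previous proof and is now free. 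Collecting the near-$0$ contribution $\mathcal{O}(t^{-\infty})$ and the away-from-$0$ contribution $\mathcal{O}(t^{-3/4})$ yields \eqref{eq:3522} for both $M_0^{(3)}$ and $M_1^{(3)}$.
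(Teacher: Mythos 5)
Your overall plan coincides with the paper's proof: bound $M^{(3)}$ and $M^{rhp}$, $M^{rhp\,-1}$ by constants, reduce \eqref{eq:366}--\eqref{eq:369} to weighted area integrals of $|\bar\partial\mathcal{R}^{(2)}(s)|\,|s|^{-k}$, split each region into a small disk about the origin (where \eqref{eq:316} cancels $|s|^{-k}$) and its complement, and adapt the estimates of Proposition \ref{pro6.1} region by region. However, both of the quantitative steps that are supposed to produce the exponent $3/4$ are flawed as you state them. For the near-zero piece, the claim of $\mathcal{O}(t^{-\infty})$ is false: in $\Omega^0_{5+}$, for instance, with $s=u+iv$ one has $|e^{-2it\theta(s)}|=\exp\bigl(-\sqrt3\,t\,v\,(\varkappa^2-|s|^2)/(|s|^2\varkappa^2)\bigr)$, and this is not uniformly superexponentially small near $z=0$ because the region touches the real axis, where $v\to0$ and the exponential is $O(1)$. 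The decay in $t$ appears only after integrating in $v$; using \eqref{eq:316} to kill $|s|^{-k}$ and then the exponential decay in $v$ gives a power of $t$ (at worst $t^{-3/4}$, which is what the paper records for $I_4$), not $t^{-\infty}$. This overclaim is harmless for the statement, but the mechanism you invoke does not exist.

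The more serious problem is the away-from-zero piece. Once the Cauchy kernel $|s-z|^{-1}$ is replaced by the bounded weight $|s|^{-k}$, the factor $|v-\beta|^{-1/2}$ appearing in \eqref{p1}--\eqref{p3} disappears entirely, since it came from $\|(s-z)^{-1}\|_{L^2_u}$; so there is no ``further $t^{-1/4}$ from the half-integrable singularity'' available to multiply with, and in any case that singular factor \emph{weakened} the $v$-integral in Proposition \ref{pro6.1} from $t^{-1/2}$ to $t^{-1/4}$ rather than contributing an extra decaying factor. If you merely bound $|s|^{-k}$ by a constant and rescale the transverse variable $v\mapsto t^{-1/2}w$, you obtain only $t^{-1/2}$, which does not prove \eqref{eq:3522}. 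The missing ingredient, which is exactly what the paper's proof uses, is to spend the freed-up norm in the $u$-direction on the exponential: estimate the $u$-integral of the terms $|r'|$ and $|\bar\partial\chi_{\mathcal{Z}}|$ from \eqref{eq:4.2}--\eqref{eq:3133} by Cauchy--Schwarz against $e^{-\sqrt3\,tv(u-\varkappa)}$, and the term $|s\mp\varkappa|^{-1/2}$ by H\"older with $2<p<4$, which produces an extra factor $(tv)^{-1/2}$, resp.\ $(tv)^{-1/q}$; the subsequent $v$-integration against the Gaussian then yields precisely $t^{-3/4}$. Without this two-directional estimate your argument, as written, only gives $|M^{(3)}_0|,|M^{(3)}_1|\lesssim t^{-1/2}$.
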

\begin{proof} We take the case as $z\in\Omega^0_{5+}$ and $ j=0$  as illustrative  example. Consider the estimate \eqref{eq:4.2} and \eqref{eq:316}, we divide the integral region $\Omega^0_{5+}$ into $\Omega^c_{5+}=\Omega^0_{5+}\cap B(0)$ and  $\Omega^s_{5+}=\Omega^0_{5+}\setminus B(0)$
	where
	$B(0)=\{z\,|\,|z|\leq\rho/6\}.$ Let $s=u+iv$, $z=\alpha+i\beta,$ we acquire
	\begin{align*}
		| {M}_1^{(3)}|\lesssim\Big\{\iint_{\Omega^c_{5+}}	+\iint_{\Omega^s_{5+}}\Big\}\Big|\frac{\bar\partial R^0_2(s)e^{-2it\theta}}{s}\Big|dA(s)\leq c(I_4+I_5),
	\end{align*}
	where
	\begin{align}
		&I_4=\iint_{\Omega^c_{5+}}	\frac{|\bar\partial R^0_2(s)|e^{-\sqrt{3}tv(u-\varkappa)}}{|s|}dA(s),\\
		&I_5=\iint_{\Omega^s_{5+}}\frac{\bar\partial| R^0_2(s)|e^{-\sqrt{3}tv(u-\varkappa)}}{|s|}dA(s).
	\end{align}
	Then, using   \eqref{eq:316} we acquire
	\begin{align*}
		|I_4|&\leq\iint_{\Omega^c_{5+}}	\frac{|\bar\partial R^0_2(s)|e^{-\sqrt{3}tv(u-\varkappa)}}{|s|}dudv\lesssim \iint_{\Omega^c_{5+}}	e^{-\sqrt{3}tv^2}dudv
		\lesssim t^{-3/4}.
	\end{align*}
	To obtain the bound for $I_5$, we can express it as follows:
	\begin{align}
		I_5&=\iint_{\Omega^s_{5+}}\frac{|\bar\partial R^0_2(s)|e^{-\sqrt{3}tv(u-\varkappa)}}{|s|}dA(s)\leq  I^1_5+ I^2_5,
	\end{align}
	where
	\begin{align*}
		& I^1_5= \iint_{\Omega^s_{5+}}\frac{(|\bar r'(z)|+\bar\partial(\chi_{\mathcal{Z}}))e^{-\sqrt{3}tv(u-\varkappa)}}{|s|}dA(s),\\
		& I^2_5=\iint_{\Omega^s_{5+}}\frac{|z-\varkappa|^{-1/2}e^{-\sqrt{3}tv(u-\varkappa)}}{|s|}dA(s).
	\end{align*}
	For $ I^1_5$, we can utilize the estimate \eqref{eq:316} and observe that the function $|s|^{-1}$ is bounded, we have
	\begin{align*}
		| I^1_5|&\lesssim\iint_{\Omega^s_{5+}}|\bar r'(z)+\bar\partial(\chi_{\mathcal{Z}})|e^{-\sqrt{3}tv(u-\varkappa)}dudv\lesssim t^{-3/4}.
	\end{align*}
	For $I_5^2$ choose $2<p<4$, by   H\"older  inequality,
	\begin{align*}
		|I^2_5|\lesssim\int_0^\infty \||z-\varkappa|^{-1/2}\|_{L^p}\|e^{-\sqrt{3}tv^2}\|_{L^q}dv
		\lesssim t^{-3/4}. \end{align*}
	  The result is confirmed.
\end{proof}

\subsection{The proof of Theorem \ref{th1.1} for the  region I}\label{sec34}

In this subsection, we give the proof of Theorem \ref{th1.1} for the  region I.

From a series of transformations (\ref{eq:2.32}), (\ref{eq:312}), \eqref{eq:3.177} and (\ref{eq:320}),  we find  the solution of RH problem \ref{RH2.1} is given by
\begin{equation}
	M( z)= {M}^{(3)}( z)	\mathcal{E}(z) {M}^{out}( z) \mathcal{R}^{(2)}(z)^{-1} Y(z),
\end{equation}
Taking $z\to0$ along  $\pi/4$ so that $\mathcal{R}^{(2)}(z)=I$,
then we have
\begin{align*}
	M( z)&=(I+{M}^{(3)}_0+{M}_1^{(3)}z+\cdots)\cdot(I+\mathcal{E}_0+\mathcal{E}_1z+\cdots)\\
	&\times ({M}_0^{out}+{M}_1^{out} z+\cdots)\cdot(Y_0+Y_1z+\cdots).
\end{align*}

Further by using reconstruction formula (\ref{eq:2.39})-(\ref{eq:2.40}), we obtain
the soliton resolution for the OV equation as follows
\begin{align}
& u(x,t) = {u}_{sol}(y,t)+t^{-1/2}f_t(y,t)+\mathcal{O}(t^{-\frac{3}{4}}),\nonumber\\
&x =y+g(y,t)+t^{-1/2}f(y,t)+\mathcal{O}(t^{-3/4}),\nonumber
			\end{align}
where
\begin{align}\label{eqg}
	&
	g(y,t)=\frac{F_3^1}{F_3^0}+\frac{\sum_{j=1}^3[M_1^{out}]_{j3}}{\sum_{j=1}^3[M_0^{out}]_{j3}},\\\label{eqf}
&	f(y,t)= \frac{F_3^1}{F_3^0}+\frac{\sum_{j=1}^3[A_1M_0^{out}]_{j3}}{\sum_{j=1}^3[M_0^{out}]_{j3}}+\frac{\sum_{j=1}^3[A_0M_1^{out}]_{j3}}{\sum_{j=1}^3[M_0^{out}]_{j3}},
\end{align}
 with  $F_3^0$ and $F_3^1$  being  given by  \eqref{rr32}, whale $A_0$ and $A_1$ being given by  \eqref{eqAK}.

\section{Soliton Resolution  in the  Region  II: $x/t>0$ }\label{sec44}

In this section, we give the soliton resolution for the       OV  equation  (\ref{eq:1.1})  in  the asymptotic    region II.

\subsection{Opening $\bar\partial$-lenses}\label{sec41}

We open the contours at $z=0$ due to the absence of a phase  point in region $x/t>\epsilon$. Define
$$\Sigma^{(2)}=L\cup\omega L\cup\omega^2L,\quad L=\cup_{j=1}^4l^0_j.  $$
We define the regions between $l^0_j$ and ${\rm Re}\,z$ in a counterclockwise direction, as $\Omega^0_j$  for each $j=1,2,3,4$. Following this, we similarly define  $\Omega^1_j$ and $\Omega^2_j$
using the same method, see Figure\,\ref{F411}.
\begin{figure}[htp]
	\begin{center}
		\begin{tikzpicture}[scale=0.4]
			\draw[thick][YellowOrange!80](4.4,-0.6)--(7,-1);
			\draw[-latex,thick][YellowOrange!80](0,0)--(4.4,-0.6);
			\draw[thick][YellowGreen!90](-4.4,0.6)--(-7,1);
			\draw[-latex,thick][YellowGreen!90](0,0)--(-4.4,0.6);
			
			\draw[thick][YellowOrange!80](-4.4,-0.6)--(-7,-1);
			\draw[-latex,thick][YellowOrange!80](0,0)--(-4.4,-0.6);
			\draw[thick][YellowGreen!90](4.4,0.6)--(7,1);
			\draw[-latex,thick][YellowGreen!90](0,0)--(4.4,0.6);
			\draw[dashed ](-7,0)--(7,0);
			\draw[dashed](-5,5)--(5,-5);
			\draw[dashed](5,5)--(-5,-5);
			
			\draw[thick][YellowGreen!90](-2.1,-2.7)--(-3.8,-5);
			\draw[-latex,thick][YellowGreen!90](0,0)--(-2.1,-2.7);

			\draw[thick][YellowOrange!80](2.1,2.7)--(3.8,5);
			\draw[-latex,thick][YellowOrange!80](0,0)--(2.1,2.7);

			\draw[thick][YellowOrange!80](-2.1,2.7)--(-3.8,5);
			\draw[-latex,thick][YellowOrange!80](0,0)--(-2.1,2.7);

			\draw[thick][YellowGreen!90](2.1,-2.7)--(3.8,-5);
			\draw[-latex,thick][YellowGreen!90](0,0)--(2.1,-2.7);
			
			\draw[thick][YellowOrange!80](-3,-2.4)--(-6.3,-5);
			\draw[-latex,thick][YellowOrange!80](0,0)--(-3,-2.4);
			
			\draw[thick][YellowGreen!90](3,2.4)--(6.3,5);
			\draw[-latex,thick][YellowGreen!90](0,0)--(3,2.4);

			\draw[thick][YellowGreen!90](-3,2.4)--(-6.3,5);
			\draw[-latex,thick][YellowGreen!90](0,0)--(-3,2.4);
			
			\draw[thick][YellowOrange!80](3,-2.4)--(6.3,-5);
			\draw[-latex,thick][YellowOrange!80](0,0)--(3,-2.4);
			
			\draw(0,0)node[below]{\scalebox{0.8}{$0$}};
			\draw(7.8,1.7)node[below]{\scalebox{0.8}{$l^0_{1}$}};
			\draw(7.8,-0.5)node[below] {\scalebox{0.8}{$l^0_{4}$}};
			\draw(-7.8,1.7)node[below]{\scalebox{0.8}{$l^0_{2}$}};
			\draw(-7.8,-0.5)node[below]{\scalebox{0.8}{$l^0_{3}$}};
			\draw(7.1,6) node[below]{\scalebox{0.8}{$l^2_{4}$}};
			\draw(3.8,6.3)node[below]{\scalebox{0.8}{$l^2_{1}$}};
			\draw(-7.1,6) node[below]{\scalebox{0.8}{$l^1_{1}$}};
			\draw(-3.8,6.3)node[below] {\scalebox{0.8}{$l^1_{4}$}};
			\draw(-7.1,-4.6) node[below] {\scalebox{0.8}{$l^2_{2}$}};
			\draw(-3.8,-5)node[below]{\scalebox{0.8}{$l^2_{3}$}};
			\draw(7.1,-4.6) node[below]{\scalebox{0.8}{$l^1_{3}$}};
			\draw(3.8,-5)node[below]{\scalebox{0.8}{$l^1_{2}$}};
			
		\end{tikzpicture}
	\end{center}
	\caption{The opened contour $\Sigma^{(2)}$  and  regions $\Omega^n_j, n=0,1,2; j=1,2,3,4$.}
	\label{F411}
\end{figure}
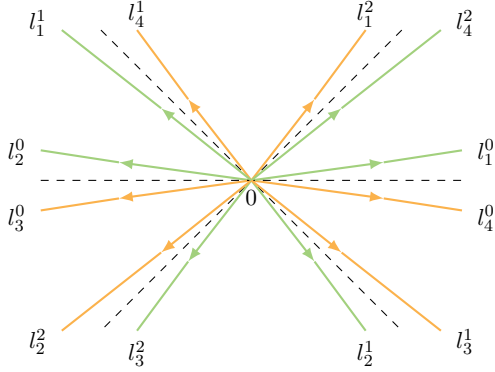

The next key step is to extend the jump matrix $J^{(1)}(z)$ continuously from the jump contours $\Sigma$ to the new contour $\Sigma^{(2)}$, along which the jump matrices decay.
The extension function $R^n_j(z),\ n=0,1,2;j=1,2,3,4$  are defined  as follows.

\begin{proposition} \label{pro12}
	There  exist functions  $  R^n_j(z):\bar \Omega_j^n \to\mathbb{C}, n=0,1,2; j=1,2,3,4$ satisfying
	\begin{align*}\nonumber
		&  R^n_1(z)=\begin{cases}
		p^n_1(z)G_n(z)^{-1},\, z\in \omega^n\mathbb{R}^+,\\
			0,\hspace{2.2cm} z\in l^n_{1},
		\end{cases}  R^n_2(z)=\begin{cases}
			p^n_2(z)G_n(z)^{-1},\,z\in\omega^n\mathbb{R}^-,\\
			0,\hspace{2.2cm} z\in l^0_{2},
		\end{cases}\\
		&	  R^n_3(z)=\begin{cases}
			p^n_3(z)G_n(z),\,\,\,\,\,\,\,\,\,z\in\omega^n\mathbb{R}^-\\
			0,\hspace{2.3cm}   z\in l^0_{3},
		\end{cases}	  R^n_4(z)=\begin{cases}
			p^n_4(z)G_n(z),\,\,\, \,\,\,\,\,\,\,z\in \omega^n\mathbb{R}^+,\\
			0,\hspace{2.3cm} z\in l^0_{4},
		\end{cases}
	\end{align*}
where $p_{n,j}(z)$ and $G_n(z)$ are defined in \eqref{eq:333} and \eqref{eq:3334}, respectively,
	such that  the  functions $  R^n_j(z)$ have the following properties:
	\begin{align}\label{eq:433}
		&	|\bar\partial   R^n_j(z)|\lesssim |\varphi({\rm Re}\,z)|+| r'({\rm Re}\,z)|+|z|^{-1/2}+|\bar\partial\chi_{\mathcal{Z}}(z)|,\quad j=1,2,3,4,
	\end{align}
	where  $\varphi\in C_0^\infty(\mathbb{R}, [0,1])$  is a fixed cutoff function near small support near $z=0$.	
	Moreover, for $z\to0$, we have
	\begin{equation}\label{eq:43}
		|\bar\partial   R^n_j(z)|\lesssim |z|^2.
	\end{equation}
	
\end{proposition}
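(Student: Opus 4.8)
The plan is to carry out the same kind of continuous deformation as in Lemma \ref{le4.1}, which in the present range $\xi>0$ is substantially simpler: the contour $\Sigma$ carries no stationary phase point, so the only distinguished point is $z=0$, where $\theta(z)=-\tfrac{\sqrt 3}{2}(\xi z-z^{-1})$ has a pole. Accordingly the new rays $l^n_j$ are drawn out of the origin, the prescribed value of $R^n_j$ on $l^n_j$ is simply $0$, and the one genuinely new requirement is the fast vanishing \eqref{eq:43} near $0$, which is exactly what the region-II analogues of Propositions \ref{pro6.1}--\ref{pro6.2} will need for the convergence of the solid-Cauchy integrals. First I would reduce to a single representative. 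By the $\omega$-rotational structure of the data — $p^1_j(z)=p^0_j(\omega^2z)$, $p^2_j(z)=p^0_j(\omega z)$, and $G_1,G_2$ the corresponding rotations of $G_0$ coming from \eqref{eq:333} — together with the $\Gamma_1$-conjugation symmetry that exchanges the upper-triangular sectors $j=1,2$ with the lower-triangular ones $j=3,4$, it suffices to build $R^0_1$ on $\bar\Omega^0_1$ with $R^0_1=p^0_1 G_0^{-1}$ on $\mathbb{R}^+$ and $R^0_1=0$ on $l^0_1$; the remaining functions follow by applying these symmetries, under which the bounds \eqref{eq:433}--\eqref{eq:43} are invariant.

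To construct $R^0_1$ I would pass to polar coordinates $z=se^{i\alpha}$ centered at $0$, with $\alpha$ ranging over the opening angle $\alpha_0$ of $\Omega^0_1$, and split $R^0_1=R^{01}_1+R^{02}_1$, mirroring the $R^{02}_{5-}$ construction in the proof of Lemma \ref{le4.1}. Here $R^{01}_1$ is the bulk piece: $\big(\varphi({\rm Re}\,z)-1\big)\,p^0_1({\rm Re}\,z)\,G_0(z)^{-1}$ times an angular interpolant equal to $1$ at $\alpha=0$ and to $0$ at $\alpha=\alpha_0$ (for instance $\cos^2(\pi\alpha/2\alpha_0)$), which is therefore supported away from the origin since $\varphi\equiv1$ there and vanishes on $l^0_1$; and $R^{02}_1$ is a near-origin correction assembled from the first three Taylor coefficients in ${\rm Re}\,z$ of $\varphi({\rm Re}\,z)\,p^0_1({\rm Re}\,z)\,G_0(z)^{-1}$, weighted respectively by $1$, $s e^{-i\alpha}\sin^2\alpha$ and $s^2 e^{-2i\alpha}\sin^2\alpha$ together with an angular cutoff $\varphi(\alpha)$. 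The signs are arranged so that on $\mathbb{R}^+$ the $\varphi-1$ and $\varphi$ contributions telescope to the exact value $p^0_1 G_0^{-1}$, on $l^0_1$ the interpolants vanish leaving $0$, and near $0$, where $\varphi(\alpha)\equiv1$ and $\varphi'(\alpha)\equiv0$, only the $\sin^2\alpha$-weighted Taylor remainders survive, each of which is already $\mathcal{O}(|z|^2)$.

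The estimates then come from $\bar\partial=\tfrac12 e^{i\alpha}\big(\partial_s+\tfrac{i}{s}\partial_\alpha\big)$. Differentiating $p^0_1({\rm Re}\,z)=p^0_1(s\cos\alpha)$ in $s$ produces the term $\lesssim|r'({\rm Re}\,z)|$ — here $r$ is smooth because $u_0\in\mathcal{S}(\mathbb{R})$, and $G_0$ is analytic and non-vanishing near $0$ since in region II $I=\varnothing$, so $\delta\equiv1$ and $G_0$ is a finite rational factor; the $\tfrac1s\partial_\alpha$-action on the angular factors, carrying the prefactors $\varphi({\rm Re}\,z)-1$ and $\sin^2\alpha$, is bounded by $|\varphi({\rm Re}\,z)|$ away from the origin together with a harmless $|z|^{-1/2}$ playing the role that $|z-\varkappa|^{-1/2}$ played in \eqref{eq:3133}; and the cutoff $\chi_{\mathcal{Z}}$ enters through the $\check\chi$-type factors to give the $|\bar\partial\chi_{\mathcal{Z}}|$ term. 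Collecting these yields \eqref{eq:433}, and \eqref{eq:43} is the built-in consequence of the second-order Taylor truncation in $R^{02}_1$. The step I expect to be the main obstacle is exactly this near-origin bookkeeping: one must simultaneously keep the boundary values on $\mathbb{R}^\pm$ and on the rays $l^n_j$ exact, suppress the $\tfrac1s$ blow-up of the angular derivative, and gain the two extra powers of $|z|$ demanded by \eqref{eq:43}. This is what the Taylor-truncated correction $R^{02}_j$ is designed to do; the remainder is routine, if lengthy, term-by-term differentiation, and the passage from $R^0_1$ to the other $R^n_j$ is immediate from the $\omega$- and $\Gamma_1$-symmetries.
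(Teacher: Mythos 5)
Your proposal is correct and follows essentially the same route as the paper: the paper's own proof of this proposition is a one-line reference to Lemma \ref{le4.1}, and your construction (polar coordinates about $z=0$, a $\cos^2$-type angular interpolation between the prescribed boundary value on $\omega^n\mathbb{R}^\pm$ and $0$ on $l^n_j$, plus a Taylor-truncated near-origin correction to force the $|z|^2$ vanishing, then $\bar\partial=\tfrac12 e^{i\alpha}(\partial_s+\tfrac{i}{s}\partial_\alpha)$ term-by-term estimates) is exactly the Lemma \ref{le4.1} argument adapted to the region-II geometry. The symmetry reduction to a single sector and the observation that $I=\varnothing$ makes $G_0$ a bounded rational factor are consistent with the paper's setup, so no gap.
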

\begin{proof}
This proof is similar to Lemma \ref{le4.1}.
\end{proof}

Now we  define  a matrix function

\begin{align}\nonumber
	\mathcal{R}^{(2)}(z)=\begin{cases}
		\begin{pmatrix}
			1&0&0\\
			R^0_je^{2it\theta(z)}&1&0\\
			0&0&1
		\end{pmatrix},\,\,z\in\Omega^0_{j=3,4},\,\,\,\,\,
		\begin{pmatrix}
			1&R^0_{j}e^{-2it\theta(z)}&0\\
			0&1&0\\
			0&0&1
		\end{pmatrix},\,\,\,\,z\in\Omega^0_{j=1,2},\\[15pt]
		\begin{pmatrix}
			1&0&0\\
			0&1&0\\
			R^1_je^{2it\theta(\omega^2z)}&1&0\\
		\end{pmatrix},\,z\in\Omega^1_{j=3,4},
		\begin{pmatrix}
			1&0&R^1_{j}e^{-2it\theta(\omega^2z)}\\
			0&1&0\\
			0&0&1
		\end{pmatrix},\,z\in\Omega^1_{j=1,2},\\[15pt]
		\begin{pmatrix}
			1&0&0\\
			0&1&	R^2_je^{2it\theta(\omega z)}\\
			0&0&1
		\end{pmatrix},\,\,\,\,z\in\Omega^2_{j=3,4},
		\begin{pmatrix}
			1&0&0\\
			0&1&0\\
			0&R^2_{j}e^{-2it\theta(\omega z)}&1
		\end{pmatrix},\,z\in\Omega^0_{j=1,2},\\[15pt]
		I\qquad\hspace{3cm} z\in\,elsewhere.
	\end{cases}
\end{align}
then new function
\begin{equation}\label{eq44.3}
	 {{M}}^{(2)} (z)=M(z)
	\mathcal{R}^{(2)}(z)
\end{equation}
satisfies the following  mixed $\bar\partial$-RH problem.
\begin{problem}\label{RH411}
	Find a analytic function $M^{(2)}(z):\mathbb{C}\setminus \mathcal{Z} \to SL_3(\mathbb{C})$ with the following properties:
	
	$\blacktriangleright$\emph{{The normalization condition:}}
	$$ {{M}}^{(2)}( z)=I+\mathcal{O}(z^{-1}),\quad z\to\infty.$$

	$\blacktriangleright$ ${{M}}^{(2)}(z)$ satisfies  the same  residue conditions with ${M}^{(1)}( z)$.

	$\blacktriangleright$ 	
	For each $z\in\mathbb{C}\setminus \mathcal{Z}$, we have
	\begin{align}\label{eq:444}
		\bar\partial {{M}}^{(2)}( z) = {{M}}^{(2)}(z)\bar\partial\mathcal{R}^{(2)}(z).
			\end{align}
\end{problem}

We perform the following factorization
\begin{equation}\label{eq:3.17}
	{M}^{(2)}( z) =	M^{(3)}( z)M^{out}(z),
\end{equation}
where    $M^{out}(z)$  is the solution  of  the pure soliton RH problem \ref{RH39}
and ${M}^{(3)}(z)$ is a continuously   function that satisfies the  $\bar\partial$-RH problem \ref{RH41}.

\subsection{Analysis on the remaining $\bar\partial$-problem}\label{sec4.3}

\begin{problem}\label{RH41}
	Find a continuous   function $	M^{(3)}(z): \mathbb{C}\to SL_3(\mathbb{C})$ with the following properties.
	
	$\blacktriangleright$\emph{{The normalization condition:}}
	$$	M^{(3)}( z)=I+\mathcal{O}(z^{-1})\quad z\to\infty.$$
	
	$\blacktriangleright$ 	 $M^{(3)}( z) $  satisfies the  $\bar\partial$-equation:
	\begin{equation}\label{eq:4266}
		\bar\partial 	M^{(3)}( z)=	M^{(3)}( z)	W^{(3)}( z), \  z\in\mathbb{C},
	\end{equation}
	where $	W^{(3)}( z):=M^{out} ( z)\bar\partial R^{(2)}(z)M^{out} ( z)^{-1}.$
\end{problem}

$\bar\partial$-RH problem \ref{RH41} is equivalent to the integral equation
\begin{equation}\label{eq:61}
	M^{(3)}(z) =I-\frac{1}{\pi}\iint_{\mathbb{C}}\frac{M^{(3)}(s){W}^{(3)}(s)}{s-z}dA(s),
\end{equation}
which can be  written   as  an operator  equation
\begin{equation}
	(I-S)M^{(3)}(s)=I,
\end{equation}
where $S$ is the solid Cauchy operator
\begin{equation}\label{eq:4416}
	S(f)(z)=-\frac{1}{\pi}\iint_{\mathbb{C}}\frac{f(s){W}^{(3)}(s)}{s-z}dA(s).
\end{equation}

We show   that for $t\to+\infty$, the operator $	{S}$ is small-norm, so that the resolvent operator
 $(I-	S)^{-1}$ exists.

\begin{proposition}\label{pro44}
	There exist a constant $c$ such that  the operator (\ref{eq:4416}) satisfies the estimate
	\begin{equation}
		\|S\|_{L^\infty\to L^\infty}\leq ct^{-1/2},\quad t\to+\infty.
	\end{equation}
\end{proposition}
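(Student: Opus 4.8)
The plan is to run the argument of Proposition~\ref{pro6.1} essentially verbatim; the one decisive difference is that in Region~II there is no stationary phase point on $\Sigma$ — the critical points $z=\pm i/\sqrt{\xi}$ of $\theta$ lie off the contour — so that along the opened lenses the weight $|e^{\pm 2it\theta(z)}|$ decays \emph{linearly} in the transverse variable rather than at the Gaussian rate encountered near a saddle. That linear decay, after a one–dimensional rescaling, produces $t^{-1/2}$ where the saddle-point analysis of Region~I produced only $t^{-1/4}$.

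By the $\Gamma_3,\Gamma_4$-symmetry of the problem together with the substitutions $\theta(z)\mapsto\theta(\omega z),\theta(\omega^2z)$, it is enough to treat one sector, say $z\in\Omega^0_3$. First I would note that $M^{out}$ and $(M^{out})^{-1}$ are bounded on $\mathrm{supp}\,\bar\partial\mathcal{R}^{(2)}$: by the choice of $\rho$ in \eqref{eq:311} that support avoids $\mathcal{Z}$, while $M^{out}$ solves the meromorphic RH problem~\ref{RH39}. Hence, for $f\in L^\infty$,
\[
|S(f)(z)|\;\lesssim\;\|f\|_{L^\infty}\iint_{\Omega^0_3}\frac{|\bar\partial R^0_3(s)|\,|e^{2it\theta(s)}|}{|s-z|}\,dA(s).
\]
On $\Omega^0_3$ one has ${\rm Im}\,\theta(s)=-\tfrac{\sqrt3}{2}\,{\rm Im}\,s\,(\xi+|s|^{-2})$ of a fixed sign and of size $\gtrsim|{\rm Im}\,s|$, so $|e^{2it\theta(s)}|\le e^{-c\,t|{\rm Im}\,s|}$ with $c=c(\xi)>0$ uniformly on the sector.

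Next I would insert the pointwise bound \eqref{eq:433},
\[
|\bar\partial R^0_3(s)|\;\lesssim\;|\varphi({\rm Re}\,s)|+|r'({\rm Re}\,s)|+|s|^{-1/2}+|\bar\partial\chi_{\mathcal{Z}}(s)|,
\]
and split the area integral into the four corresponding contributions. Writing $s=u+iv$, $z=\alpha+i\beta$: the $\bar\partial\chi_{\mathcal{Z}}$ term is supported in fixed compact annuli where $|e^{2it\theta}|$ is exponentially small, so it contributes $\mathcal{O}(e^{-ct})$. For the $\varphi$ and $r'$ terms I would bound the inner ($u$-)integral by Cauchy--Schwarz using $\|\varphi\|_{L^2}+\|r'\|_{L^2}<\infty$ together with the elementary bound $\|(s-z)^{-1}\|_{L^2(du)}\lesssim|v-\beta|^{-1/2}$, and then $\int_0^\infty|v-\beta|^{-1/2}e^{-ct|v|}\,d|v|\lesssim t^{-1/2}$ after rescaling $|v|\mapsto w/t$. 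For the $|s|^{-1/2}$ term a H\"older split with $2<p<4$ and conjugate exponent $q$ gives an inner bound $\lesssim|v|^{1/p-1/2}|v-\beta|^{-1/p}$, whence $\int_0^\infty|v|^{1/p-1/2}|v-\beta|^{-1/p}e^{-ct|v|}\,d|v|\lesssim t^{-1/2}$, uniformly in $(\alpha,\beta)$.

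I expect the main obstacle to be precisely the $|s|^{-1/2}$ singularity at $s=0$, where all the opened contours meet and where the sector $\Omega^0_3$ degenerates: one must verify that the linear-in-$|{\rm Im}\,s|$ decay of $|e^{2it\theta}|$ — reinforced, if necessary, by the $|s|^{-2}$ factor coming from the $1/z$ pole of $\theta$ — dominates this singularity uniformly in $t$ and in the target point $z$ (including $z$ near the origin), so that no $t^{-1/4}$ or logarithmic loss reappears. Once that uniformity is secured the H\"older bookkeeping above is routine and yields the claimed estimate $\|S\|_{L^\infty\to L^\infty}\lesssim t^{-1/2}$ as $t\to+\infty$.
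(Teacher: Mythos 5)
Your proposal is correct and follows essentially the same route as the paper: bound $M^{out}$ and $(M^{out})^{-1}$, insert the pointwise estimate \eqref{eq:433} on $\bar\partial R^n_j$, split the area integral into the cutoff, $r'$, and $|s|^{-1/2}$ pieces, and exploit that without phase points ${\rm Im}\,\theta$ gives decay $e^{-c t|{\rm Im}\,s|}$ linear in the transverse variable, so Cauchy--Schwarz/H\"older in $u$ followed by the rescaling $v\mapsto w/t$ yields $t^{-1/2}$ uniformly in $z$. The only (harmless) deviations are cosmetic: you work in $\Omega^0_3$ instead of $\Omega^0_2$ and dispose of the $\bar\partial\chi_{\mathcal Z}$ contribution as $\mathcal{O}(e^{-ct})$ using ${\rm dist}(\mathcal{Z},\mathbb{R})\gtrsim\rho$, which is slightly sharper than the paper's $t^{-1/2}$ bookkeeping for that term.
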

\begin{proof}
	We show the case  in the region $\Omega^0_2$,     the other regions  follow  similarly. Let $f\in L^\infty(\Omega^0_2)$,  $s=u+iv$ and $z=\alpha+i\beta$, then it follows  \eqref{eq:433} and \eqref{eq:4266}, we get
	\begin{align}\label{eq:655}
		&|S(f)(s)|\leq\iint_{\Omega^0_2}\frac{|f(s)M^{out} (s)\bar\partial R^{(2)}(z)M^{out} (s)|^{-1}}{|s-z|}dA(s)\\\nonumber
		&\lesssim
		 (  I_1+  I_2+  I_3) \|f(z)\|_{L^\infty (\Omega^0_2)},
	\end{align}
where
	\begin{align}\label{eq:420}
		&  {I}_1=\iint_{\Omega^0_2}\frac{|\chi_{\mathcal{Z}}(s)+\varphi({\rm Re}\,s)|e^{-\sqrt{3}tv(\xi+\frac{1}{u^2+v^2})}}{|s-z|}dA(s),\\ &  {I}_2=\iint_{\Omega^0_2}\frac{|r'({\rm Re}\,s)|e^{-\sqrt{3}tv(\xi+\frac{1}{u^2+v^2})}}{|s-z|}dA(s),\\\label{eq:4232}
		&  {I}_3=\iint_{\Omega^0_2}\frac{|s-1|^{-1/2}e^{-\sqrt{3}tv(\xi+\frac{1}{u^2+v^2})}}{|s-z|}dA(s).
	\end{align}
Similar to (\ref{p1})- (\ref{p3}), direct calculation shows  that
	\begin{align*}
	&	|  {I}_1|\lesssim t^{-1/2}\int_{\mathbb{R}}\frac{e^{-\sqrt{3}(w+t\beta)}}{|w|^{1/2}}dw\lesssim t^{-1/2}\int_{\mathbb{R}}\frac{e^{-\sqrt{3}w}}{|w|^{1/2}}dw\lesssim t^{-1/2},\\
		&	|  {I}_2|\lesssim\int_{-\infty}^{-v+1} e^{-\sqrt{3}tv}\int^{-v+1}_{-\infty}\frac{|r'({\rm Re}\,s)|}{|s-z|}dudv\lesssim t^{-1/2},\\
		& 	|  I_3|\lesssim\int_0^\infty e^{-\sqrt{3}t\xi v}v^{1/p-1/2}|v-\beta|^{1/q-1}dv\lesssim t^{-1/2}.
	\end{align*}
	The result is confirmed.
\end{proof}
To recover the long time asymptotic behavior of $u(x,t)$, it is necessary to determine the asymptotic expansion of
 $M^{(3)}(z)$ as $z\to0$,
\begin{equation}
	M^{(3)}(z)=I+	M^{(3)}_0+M_1^{(3)}z+\mathcal{O}(z^2),
\end{equation}
where
\begin{align}
	&M^{(3)}_0=-\frac{1}{\pi}\iint_{\mathbb{C}}\frac{ M^{(3)}(s)W^{(3)}(s)}{s}dA(s),\\
	&M^{(3)}_1=-\frac{1}{\pi}\iint_{\mathbb{C}}\frac{ M^{(3)}(s) W^{(3)}(s)}{s^2}dA(s).
\end{align}

\begin{proposition}
	There exists a constant $c$ such that
	\begin{equation}
		| M^{(3)}_0|< ct^{-1},\quad 	| {{M}}^{(3)}_1|< ct^{-1}.
	\end{equation}
\end{proposition}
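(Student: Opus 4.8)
The plan is to mirror the proof of Proposition~\ref{pro6.2}, the only structural difference being that in region~II the deformed contour $\Sigma^{(2)}$ carries no stationary phase point, so the relevant exponential weight decays linearly rather than quadratically in the imaginary direction and one gains a full inverse power of $t$, producing $t^{-1}$ in place of $t^{-3/4}$.

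First I would reduce both estimates to a $\bar\partial$-integral bound. By Proposition~\ref{pro44} the operator $S$ is small-norm, so $(I-S)^{-1}$ exists on $L^\infty$, $M^{(3)}\in I+L^\infty$ is bounded on $\mathbb{C}$, and combined with the uniform boundedness of $M^{out}(z)$ and $M^{out}(z)^{-1}$ away from the spectral disks this gives $|M^{(3)}(s)W^{(3)}(s)|\lesssim|\bar\partial\mathcal{R}^{(2)}(s)|$. From the integral representations of $M^{(3)}_0$ and $M^{(3)}_1$ it then suffices to bound $\iint_{\mathbb{C}}|s|^{-1}|\bar\partial\mathcal{R}^{(2)}(s)|\,dA(s)$ and $\iint_{\mathbb{C}}|s|^{-2}|\bar\partial\mathcal{R}^{(2)}(s)|\,dA(s)$. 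Since $\bar\partial\mathcal{R}^{(2)}$ is supported in the finitely many lens sectors $\Omega^n_j$, it is enough, by the evident symmetry, to estimate the contribution of one of them, say $\Omega^0_2$; there $\mathcal{R}^{(2)}$ inserts $R^0_2(s)e^{-2it\theta(s)}$ and, exactly as in the proof of Proposition~\ref{pro44}, $|e^{-2it\theta(s)}|=e^{-\sqrt{3}\,tv(\xi+\frac{1}{u^2+v^2})}\le e^{-\sqrt{3}\,t\epsilon v}$ for $s=u+iv\in\Omega^0_2$ with $\xi=y/t\ge\epsilon>0$.

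Next I would split $\Omega^0_2=\Omega^c_2\cup\Omega^s_2$ with $\Omega^c_2=\Omega^0_2\cap\{|z|\le\rho/6\}$ and $\Omega^s_2$ its complement. On $\Omega^c_2$ the near-origin bound \eqref{eq:43} gives $|\bar\partial R^0_2(s)|\lesssim|s|^2$, so $|s|^{-1}|\bar\partial R^0_2(s)|$ and $|s|^{-2}|\bar\partial R^0_2(s)|$ are bounded and $\iint_{\Omega^c_2}e^{-\sqrt{3}\,t\epsilon v}\,du\,dv\lesssim\int_0^{\rho/6}e^{-\sqrt{3}\,t\epsilon v}\,dv\lesssim t^{-1}$. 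On $\Omega^s_2$ the weights $|s|^{-1},|s|^{-2}$ are bounded, and I would insert \eqref{eq:433}: the cutoff terms $|\varphi({\rm Re}\,s)|$ and $|\bar\partial\chi_{\mathcal{Z}}(s)|$ are bounded with compact support in ${\rm Re}\,s$, the term $|r'({\rm Re}\,s)|$ is integrable in ${\rm Re}\,s$ because $u_0\in\mathcal{S}(\mathbb{R})$, and the term $|s|^{-1/2}$ is controlled by the same H\"older split in ${\rm Re}\,s$ as for $I_3$ in the proof of Proposition~\ref{pro44}; in each case the surviving $v$-integral is dominated by $\int_0^\infty e^{-\sqrt{3}\,t\epsilon v}\,dv\lesssim t^{-1}$. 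Summing over the finitely many symmetric sectors yields $|M^{(3)}_0|,|M^{(3)}_1|\lesssim t^{-1}$.

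The main obstacle is the estimate on the unbounded sector $\Omega^s_2$: since $\Omega^0_2$ is a cone and $e^{-\sqrt{3}\,t\epsilon v}$ supplies no decay in the ${\rm Re}\,s$-direction, one must force the ${\rm Re}\,s$-integration to converge from the rapid decay of $r$ and the compact support of the cutoffs alone --- precisely the Cauchy--Schwarz/H\"older bookkeeping already performed in Proposition~\ref{pro44}. The one simplification is that the Cauchy kernel $(s-z)^{-1}$ occurring there is now replaced by the bounded weight $s^{-1}$ or $s^{-2}$ on $\Omega^s_2$ (and is controlled on $\Omega^c_2$ by the $|s|^2$-vanishing of $\bar\partial R^0_2$), which removes the $|v-\beta|^{-1/2}$ loss and thereby improves the decay rate from $t^{-1/2}$ to the claimed $t^{-1}$.
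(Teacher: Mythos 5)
Your proposal is correct and follows essentially the same route as the paper: the identical splitting $\Omega^0_2=\Omega^c_2\cup\Omega^s_2$ with $B(0)=\{|z|\le\rho/6\}$, the bound $|\bar\partial R^0_2(s)|\lesssim|s|^2$ from \eqref{eq:43} near the origin, the decomposition of the exterior integral according to the terms of \eqref{eq:433} (cutoffs, $|r'|$, $|s|^{-1/2}$ via H\"older), and the observation that the absence of phase points leaves a linearly decaying exponential whose $v$-integration produces the full $t^{-1}$. The only cosmetic difference is that you treat the $|s|^{-1}$ and $|s|^{-2}$ weights for $M^{(3)}_0$ and $M^{(3)}_1$ simultaneously, whereas the paper writes out the weight $|s|^{-1}$ and leaves the other case implicit.
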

\begin{proof}

	Let $s=u+iv$ and $z=\alpha+i\beta$.    we divide the integral region $\Omega^0_{2}$ into $\Omega^c_{2}=\Omega^0_{2}\cap B(0)$ and $\Omega^s_{2}=\Omega^0_2\setminus B(0)$
	where
	$B(0)=\{z\,|\,|z|\leq\rho/6\}$, we have	
	\begin{align}\label{eq:4220}
		| {{M}}^{(3)}_0|&\leq\Big|\frac{1}{\pi}\iint_{\Omega^0_2}\frac{M^{(3)}(s) {{W}}^{(3)}(y,t;s)}{|s|}dA(s)\Big| < I_4+ I_5,
	\end{align}
	where
	\begin{align*}
		 & I_4=\iint_{\Omega^c_2}\frac{|\bar\partial R_2(s)|e^{-\sqrt{3}tv(\xi+\frac{1}{u^2+v^2})}}{|s|}dA(s),\\
   &  I_5= \iint_{\Omega^s_2}\frac{|\bar\partial R_2(s)|e^{-\sqrt{3}tv(\xi+\frac{1}{u^2+v^2})}}{|s|}dA(s). \end{align*}	
	Taking into account the estimate \eqref{eq:43}, we derive
	\begin{equation*}
		  I_4=\iint_{\Omega^c_2}\frac{|\bar\partial R^0_+(s)|e^{-\sqrt{3}tv(\xi+\frac{1}{u^2+v^2})}}{|s|}dA(s)\lesssim \iint_{\Omega^c_2}e^{-\sqrt{3}tv}\lesssim t^{-1}.    \end{equation*}
	Consider the estimate  \eqref{eq:433}, we obtain
	\begin{align*}
		  I_5&= \iint_{ \Omega^s_2}\frac{|\bar\partial R^0_+(s)|e^{-\sqrt{3}tv(\xi+\frac{1}{u^2+v^2})}}{|s|}dA(s) \leq   I^1_5+  I^2_5+  I^3_5,
	\end{align*}where
	\begin{align*}
		&  I^1_5=\iint_{ \Omega^s_2}\frac{|\chi_{\mathcal{Z}}(s)|e^{-\sqrt{3}tv(\xi+\frac{1}{u^2+v^2})}}{|s|}dA(s),\\ &  I^2_5=\iint_{ \Omega^s_2}\frac{|r'({\rm Re}\,s)|e^{-\sqrt{3}tv(\xi+\frac{1}{u^2+v^2})}}{|s|}dA(s),\\\label{eq:422}
		&  I^3_5=\iint_{ \Omega^s_2}\frac{|s-1|^{-1/2}e^{-\sqrt{3}tv(\xi+\frac{1}{u^2+v^2})}}{|s|}dA(s).
	\end{align*}
	Taking into consideration the bounded $|s|^{-1}$ within the domain $\Omega^s_2$, we obtain
	\begin{align*}
	&	| I^1_5|\lesssim\int_0^\infty e^{-\sqrt{3}tv}\int_{-\infty}^{-v+1}|\chi_\mathcal{Z}(s)|dudv
	\lesssim t^{-1},\\
&		|  I^2_5|\lesssim\int_0^\infty e^{-\sqrt{3}tv}\int_{-\infty}^{-v+1}\|r'({\rm Re}\,s)\|_{L^2(-\infty,-v+1)}dudv\lesssim t^{-1},\\
		& 	|  I^3_5|\lesssim\int_0^\infty v^{1/p-1/2}\left(\int_{-\infty}^{-v+1}e^{-\sqrt{3}tqv}du\right)^{1/q}dv
		\lesssim t^{-1}.
	\end{align*}
	The result is confirmed.
\end{proof}

\subsection{The proof of Theorem \ref{th1.1} for the  region II}\label{sec4.4}
In the region II, inverting the sequence of transformations (\ref{eq:2.32}), \eqref{eq44.3} and (\ref{eq:3.17})the solution of RH problem \ref{RH2.1} is given by
\begin{equation}
	M(z)=M^{(3)}(z)	  {M}^{out}(z)R^{(2)}(z)^{-1}Y(z).
\end{equation}
Taking $z\to0$ from $\pi/4$ so that $R^{(2)}(z)=I$,   we have
\begin{align}\nonumber
	M(z)&=(I+M^{(3)}_0+M_1^{(3)}z+\cdots)\cdot(I+ {\mathcal{E}}_0+ {\mathcal{E}}_1z+\cdots)\\
	&\times({M}^{out}_0+{M}^{out}_1z+\cdots)\cdot(Y_0+Y_1z+\cdots),
\end{align}

By using reconstruction formula (\ref{eq:2.39})-\eqref{eq:2.40},   we obtain the soliton resolution for
the OV equation
\begin{align*}
&				u(x,t) =u_{sol}(y,t;  {\mathcal{D}}   )+\mathcal{O}(t^{-1}),\\
&x =y+g(y,t)+\mathcal{O}(t^{-1}), \label{porre}
			\end{align*}
where  expressions for $g(y,t)$is provided in   \eqref{eqg}.

\begin{appendices}
	\section{The parabolic cylinder model problem}\label{secA}
	\renewcommand\thesection{A}
	\renewcommand{\thefigure}{A.\arabic{figure}}
	\renewcommand{\theequation}{A.\arabic{equation}}
	Here we describe the solution of the parabolic cylinder  RH model  problem which is appears frequently in the literature of long-time asymptotic calculations for integrable nonlinear waves
	\cite{MR18,RP18,DP93}.
	
	For   $r_0\in \mathbb{C}$ and $|r_0|<1$, let $\nu=-\frac{1}{2\pi} \log (1-|r_0|^2)$. Define  $ L^{pc}= \sum_{j=1}^4 L^{pc}_j $, see Figure\,\ref{FA.1},
	where  $L^{pc}_j $ denotes the complex contour
	$$L^{pc}_j =\left\{\zeta\in \mathbb{C}: \arg \zeta =\frac{(2j-1)\pi}{4}\right\}, \ j=1,2,3,4. $$

	\begin{figure}[htp]
		\begin{center}
			\begin{tikzpicture}[scale=0.5]
				\draw[dashed,-latex ](-5,0)--(5,0);		
				\node at (6,0){ Re\,$\zeta$};
				\draw[YellowGreen!90,thick][-latex](-4,3)--(-1.5,1.1);
				\draw[YellowGreen!90,thick](-1.5,1.1)--(1.5,-1.1);
				\draw[YellowGreen!90,thick][-latex](4,-3)--(1.5,-1.1);
				\draw[YellowOrange!90,thick][-latex](-4,-3)--(-1.5,-1.1);
				\draw[YellowOrange!90,thick](-1.5,-1.1)--(1.5,1.1);
				\draw[YellowOrange!90,thick][-latex](4,3)--(1.5,1.1);
				\node  [below]  at (0,-0.4) {$0$};
				\draw(-4.4,3)node[below] {$L_1^{pc}$};
				\draw(-4,-3)node[below] {$L_3^{pc}$};
				\draw(4.3,3)node[below] {$L_2^{pc}$};
				\draw(4.3,-3)node[below] {$L_4^{pc}$};
			\end{tikzpicture}
		\end{center}
		\caption{The contour $   \Sigma^{pc}=\cup_{j=1}^4L_j^{pc}$.}
		\label{FA.1}
	\end{figure}
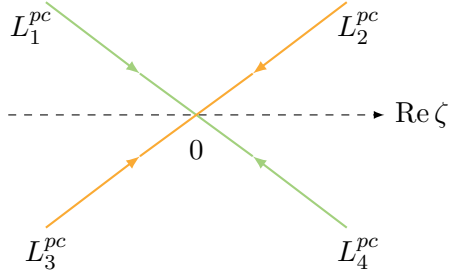
	\begin{problem} \label{RH5.3}
		Find a 3$\times$3 analytic matrix-valued function ${M}^{pc}( \zeta ):={M}^{pc}(r_0, \zeta )$\,:\, $\mathbb{C}\setminus  \Sigma^{pc}\to SL_3(\mathbb{C})$  with the following properties:
		
		$\blacktriangleright$ \emph{{ the normalization condition: }}
		$${M}^{pc}( \zeta)=I+\mathcal{O}(\zeta^{-1})\quad \zeta\to\infty.$$
		
		$\blacktriangleright$ \emph{{ the jump condition: }} ${M}^{pc}( \zeta)$ is continuous boundary value ${M}^{pc}_\pm( \zeta)$ on
		$  \Sigma^{pc}$, which satisfy the  jump relation
		$${M}^{pc}_+( \zeta) = {M}^{pc}_-( \zeta) {V}^{pc} ( \zeta),$$
		where
		\begin{align}
			{V}^{pc} (r_0,\zeta)=
			\begin{cases}
				\begin{aligned}
					& \begin{pmatrix}
						1&r_0\zeta^{2i\nu}e^{-i\zeta^2/2}&0\\
						0&1&0\\
						0&0&1
					\end{pmatrix},\zeta\in L_1^{pc},\quad
					\begin{pmatrix}
						1&0&0\\
						\frac{ \bar r_0\zeta^{-2i\nu}e^{i\zeta^2/2}}{ 1-|  r_0|^2 }&1&0\\
						0&0&1
					\end{pmatrix}, \zeta\in L_2^{pc},\\[2pt]
					&	\begin{pmatrix}
						1&0&0\\
						-\bar r_0\zeta^{-2i\nu}e^{i\zeta^2/2}&1&0\\
						0&0&1
					\end{pmatrix},\zeta\in L_3^{pc},\,\,
					\begin{pmatrix}
						1&-\frac{r_0\zeta^{2i\nu}e^{-i\zeta^2/2}}{ 1-| r_0|^2 }&0\\
						0&1&0\\
						0&0&1
					\end{pmatrix},\zeta\in L_4^{pc}.
				\end{aligned}
			\end{cases}\nonumber
		\end{align}
	\end{problem}
	
	The   RH problem \ref{RH5.3}  admits a unique   solution    the following asymptotics
	
	\begin{equation}
		{M}^{pc}(  \zeta)=I+\frac{ {{M}}^{pc}_1(r_0)   }{\zeta}+\mathcal{O}(\zeta^{-2}),
	\end{equation}
	where
	\begin{align}
{{M}}^{pc}_1  =\begin{pmatrix}
		0&-i\beta_{12}&0\\
		i\beta_{21}&0&0\\
		0&0&0
	\end{pmatrix}, \label{pc1}
\end{align}
	with
	\begin{equation}\label{eq:a5}
		\beta_{12}=\frac{\sqrt{2\pi}e^{i\pi/4}e^{-\pi\nu/2}}
		{r_0\Gamma(-i\nu)},\,\,\,\beta_{21}=\frac{-\sqrt{2\pi}e^{-i\pi/4}e^{-\pi\nu/2}}
		{\bar r_0\Gamma(i\nu)},
	\end{equation}
	where $\Gamma$ is the Gamma function.
\end{appendices}
\begin{appendices}
	\renewcommand\thesection{B}
	\section{Solutions  for reflectionless scattering data}\label{secB}
	
	\renewcommand{\thefigure}{B.\arabic{figure}}
	\renewcommand{\theequation}{B.\arabic{equation}}
	Here we are considering the solutions to the RH problem \ref{RH2.1} associated with the OV equation \eqref{eq:1.1} with no reflections
	\begin{align}
		\hat{\mathcal{D}}=\{ r(z)\equiv0,(\xi_n, \hat c_n)_{n=1}^{6N}\},\label{dsf}
	\end{align}
	with
	\begin{equation}
		\hat c_n= c_n\exp\left( \frac{i}{\pi}\int_{{I}}\frac{\log(1-|r(s)|^2)}{s-\xi_n}ds\right).
	\end{equation}
	The resulting solution known as  reflectionless  solutions, denoted by $u_{sol}(x,t;\hat{\mathcal{D}})$, are $N$-soliton solutions. When $N=1$, the single loop soliton solution of the OV  equation  \eqref{eq:1.1} is given by \cite{AD15}
	
	\begin{align}\label{eq:1.10}
		\mathcal{Q}_{sol} (x,t)=	\hat{\mathcal{Q}}_{sol} (y(x,t),t)=\frac{12}{\rho^2}\frac{\hat{e}(\cos(\phi+\frac{\pi}{3})-\hat{e}+\cos(\phi+\frac{\pi}{3})\hat{e}^2)}{(1-4\cos(\phi+\frac{\pi}{3})\hat{e}+\hat{e}^2)^2},
	\end{align}
	where
	\begin{align*}
		&\hat{e}(y,t)=\frac{c}{2\sqrt{3}\rho}e^{-\sqrt{3}\rho(y+\frac{t}{\rho^2}-\frac{1}{\sqrt{3}\rho}\log\frac{\hat c}{2\sqrt{3}\rho})},	\\
		&x(y,t) =y+\frac{2\sqrt{3}}{\rho}\frac{-2\cos(\phi+\frac{\pi}{3})\hat e+\hat e^2}{1-4\cos(\phi+\frac{\pi}{3})\hat e+\hat e^2}.
	\end{align*}
When $N>1$, the solution formulas become ungainly. However, we usually expect that for $t\to+\infty$, the solution will consist of $N$ independent $1$-soliton. For this reason, we now denote the solution of the reflectionless RH problem \ref{RHB.1} associated with equation  \eqref{eq:1.1}.

	\begin{problem}\label{RHB.1}
		Given   scattering data (\ref{dsf}),
		find an analytic function ${M}^{sol}(z;\hat{\mathcal{D}}):\mathbb{C}\setminus\mathcal{Z} \to SL_3(\mathbb{C})$ such that
		
		$\blacktriangleright$ \emph{{The normalization condition: }}
		$${M}^{sol}(z;\hat{\mathcal{D}})=I+\mathcal{O}(z^{-1})\quad z\to\infty.$$
		
		$\blacktriangleright$\emph{{ The residue condition:}}
		${M}^{sol}(z;\hat{\mathcal{D}})$ satisfies  the same  residue conditions with ${M}^{(1)}( z)$.
	
	\end{problem}

	Further we show that
	
	\begin{proposition}\label{prob1}
		Given   scattering data (\ref{dsf}) such that $z_j\ne z_k$ for $j\ne k$, there is a unique solution to the RH problem \ref{RHB.1} for each $(x,t)\in\mathbb{R}\times\mathbb{R}^+$.
	\end{proposition}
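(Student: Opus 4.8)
The plan is to convert the reflectionless problem into a finite‑dimensional linear‑algebra problem and then to prove the associated vanishing lemma. Since the data \eqref{dsf} has $r\equiv0$, Problem \ref{RHB.1} has an empty jump contour, so any solution $M^{sol}(z;\hat{\mathcal D})$ is a meromorphic $3\times3$ matrix function on $\mathbb C$, bounded and equal to $I$ at $z=\infty$, with at most simple poles at the $6N$ distinct points of $\mathcal Z\cup\bar{\mathcal Z}$. By the nilpotent residue conditions carried over from Problem \ref{RH3.2}, at each such pole the residue matrix is rank one with a prescribed coordinate vector as its range and a left factor given by a fixed column of the regular part of $M^{sol}$ there, multiplied by the norming quantity $\hat c_n e^{\pm 2it\theta(\xi_n)}$. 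Substituting the partial‑fraction ansatz $M^{sol}(z)=I+\sum_{k=1}^{6N}B_k/(z-\xi_k)$ into all $6N$ residue relations, and using the $\mathbb Z_6$‑symmetry of Proposition \ref{pro2.2} to tie the six families $\{\xi_n,\omega\bar\xi_n,\omega\xi_n,\dots\}$ together, produces a closed square linear system $L(x,t)\,\mathbf b=\mathbf c$ for the scalar unknowns, with $L(x,t)$ built from the Cauchy kernels $(\xi_j-\xi_k)^{-1}$ and the quantities $\hat c_n e^{\pm 2it\theta(\xi_n)}$. Conversely, any solution of this system yields, through the same formula, a solution of Problem \ref{RHB.1}. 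Hence existence and uniqueness for every $(x,t)\in\mathbb R\times\mathbb R^+$ are equivalent to $\det L(x,t)\neq0$, i.e. to the statement that the homogeneous problem — same pole and residue structure but with $M(z)\to0$ as $z\to\infty$ — has only the trivial solution $M\equiv0$.

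To prove this vanishing lemma I would use the symmetries together with the positivity hypothesis $1-u_{0xx}>0$. Given a homogeneous solution $\mathcal M$, form $\mathcal A(z)=\mathcal M(z)\,\overline{\mathcal M(\bar z)}^{\mathsf T}$ (possibly conjugated by a fixed involution $\Gamma_i$ to make it compatible with all three symmetries), which is meromorphic in $z$ with potential simple or double poles at $\mathcal Z\cup\bar{\mathcal Z}$. The residue conditions of Problem \ref{RH3.2} are written precisely so that conjugation by the relevant $\Gamma_i$ sends the nilpotent residue matrices to their Hermitian transposes; consequently the singular parts of the two factors of $\mathcal A$ cancel at every pole, so $\mathcal A$ is entire. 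Since $\mathcal M(z)\to0$ as $z\to\infty$, also $\mathcal A(z)\to0$, and Liouville gives $\mathcal A\equiv0$. Restricting to the real axis, where $\mathcal A(z)=\mathcal M(z)\mathcal M(z)^{*}$ is Hermitian and positive semidefinite, $\mathcal A\equiv0$ forces $\mathcal M\equiv0$ on $\mathbb R$, hence on $\mathbb C$; the sign of the reflectionless norming constants $\hat c_n$ — inherited from that of $c_n$, which comes from $q=(1-u_{0xx})^{1/3}>0$ through the scattering relations of Section \ref{sec2} — is what makes the pole cancellations and the positivity go through. Uniqueness then also yields the symmetry relations for $M^{sol}(z;\hat{\mathcal D})$ automatically, since $\Gamma_1\overline{M^{sol}(\bar z)}\Gamma_1$ solves the same problem.

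The real obstacle is the bookkeeping in the previous paragraph, not the Liouville step: one must track all six families of poles and, crucially, the two structurally different pole types in Problem \ref{RH3.2} (the $z_n$, whose residue sits in the $(2,1)$ slot, versus the $\ell_n$, whose residue sits in the $(3,2)$ slot), verify that a single Hermitian combination $\mathcal A$ is simultaneously compatible with all three conjugation symmetries $z\mapsto\bar z$, $z\mapsto\omega^2\bar z$, $z\mapsto\omega\bar z$, and pin down the exact positivity of the $\hat c_n$ across all six sectors — this is where the assumption $1-u_{0xx}>0$ is genuinely used. Since \eqref{dsf} is exactly the reflectionless specialization of the scattering data treated in \cite{AD15} (see also the analogous $3\times3$ construction in \cite{AD2}), one may alternatively invoke that analysis verbatim to obtain both existence and uniqueness.
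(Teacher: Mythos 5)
Your first step agrees with the paper: insert the partial--fraction ansatz (cf.\ \eqref{eq:B.5}) into the residue conditions and reduce Proposition \ref{prob1} to the invertibility of a finite square linear system, which is equivalent to the triviality of the homogeneous problem. The gap is in your vanishing lemma. Its engine is the claim that conjugation by a suitable $\Gamma_i$ sends the residue nilpotents to their Hermitian transposes, so that the singular parts of $\mathcal{A}(z)=\mathcal{M}(z)\overline{\mathcal{M}(\bar z)}^{\mathsf T}$ cancel. The residue structure of RH problem \ref{RH2.1} does not deliver this: for the first pole type the nilpotent at $\xi_n$ is proportional to $E_{21}$ (only nonzero entry in position $(2,1)$), while the effective nilpotent at $\bar\xi_n$ is $\Gamma_4^{-1}\overline{V(\xi_n)}\Gamma_4$, which is proportional to $E_{13}$ --- not to the Hermitian--transpose position $(1,2)$. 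Writing $\mathcal{M}(z)=\frac{R_n}{z-\xi_n}+a_n+\cdots$ near $\xi_n$ and $\mathcal{M}(z)=\frac{R_n'}{z-\bar\xi_n}+b_n+\cdots$ near $\bar\xi_n$, with $R_n=a_nN_n$, $R_n'=b_nN_n'$, the double--pole part of $\mathcal{A}$ at $\xi_n$ does vanish, but its residue equals $a_n\bigl(N_n+\overline{N_n'}^{\mathsf T}\bigr)\bar b_n^{\mathsf T}$, and here $N_n\propto E_{21}$ while $\overline{N_n'}^{\mathsf T}\propto E_{31}$ with the same nonzero scalar, so the two contributions occupy different matrix entries and cannot cancel; $\mathcal{A}$ is not entire and Liouville does not apply. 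You would need to exhibit one fixed pairing that works simultaneously for both pole types ($(2,1)/(1,3)$ and the $(3,2)$-type) and all six $\omega$-rotated families, and this is precisely what you defer as ``bookkeeping''. Moreover, the positivity you invoke is not available: the paper takes $c_n\in\mathbb{C}^*$ arbitrary, nothing in Section \ref{sec2} derives a sign for $\hat c_n$ from $1-u_{0xx}>0$, and Proposition \ref{prob1} assumes only that the poles are distinct.

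For comparison, the paper proves invertibility without any Liouville argument: after renormalizing the unknowns it writes the system in the block form \eqref{eq:B14}, $\bigl(\begin{smallmatrix} I & iA\\ -iA^{H}& C\end{smallmatrix}\bigr)$, and reduces solvability to $\det(I_{N_1}+A\bar A)=\prod_{k}(1+\lambda_k)>0$, using that $A\bar A$ is positive definite (a Cauchy--matrix positivity argument in the spirit of the reflectionless analysis in \cite{MR18}); the remaining unknowns $\gamma^j_n$ are then obtained from an analogous system. So to complete your route you must either identify and verify the correct Hermitian pairing adapted to this $3\times3$ residue structure, or switch to the paper's direct positive--definiteness argument for the explicit coefficient matrix; the closing suggestion to ``invoke \cite{AD15} verbatim'' is a hedge, not a proof of this proposition.
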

	\begin{proof}
		
		We will now only consider the case  for $\xi_n\in\Omega_1$, $n=1,\dots,N_1$.
		It follows that  the  solution of the RH problem \ref{RHB.1}   admits   the  expansion
		\begin{equation}\label{eq:B.5}
			{M}^{sol}(z;\hat{\mathcal{D}})=I+\sum_{n=1}^{N_1}\Big(\frac{\mathcal{Q}_1}{z-\xi_n}+\frac{\mathcal{Q}_2}
			{z-\omega\bar\xi_n}+\frac{\mathcal{Q}_3}{z-\omega\xi_n}+\frac{\Gamma_1\bar{\mathcal{Q}}_1\Gamma_1}
			{z-\bar\xi_n}+\frac{\Gamma_1\bar{\mathcal{Q}}_2\Gamma_1}{z-\omega^2\xi_n}+\frac{\Gamma_1\bar{\mathcal{Q}}_3\Gamma_1}{z-\omega^2\bar\xi_n}\Big),
		\end{equation}
		where
		\begin{align*}
			&\mathcal{Q}_1=\begin{pmatrix}
				0&\alpha^1_n&0 \\
				0&\beta^1_n&0 \\
				0&\gamma^1_n&0 \\
			\end{pmatrix},\mathcal{Q}_2=\begin{pmatrix}
				0&0&\alpha^2_n \\
				0&0&\beta^2_n \\
				0&0&\gamma^2_n \\
			\end{pmatrix} ,\mathcal{Q}_3=\begin{pmatrix}
				\alpha^3_n&0&0 \\
				\beta^3_n &0&0\\
			\gamma^3_n &0&0\\
			\end{pmatrix} ,
		\end{align*}
		in which  the coefficients $\alpha^j_n $, $\beta^j_n(x,t)$  and $\gamma^j_n(x,t)$,   $j=1,2,3$; $n=1,\dots,N_1$ are  to be determined.
		
		Inserting the partial fraction expansion \eqref{eq:B.5} into the residue conditions in RH problem \ref{RHB.1} leads to  the following linear system of equations
		\begin{align}
			&	\hat\alpha^1_j-\sum_{n=1}^{N_1}\Big(\frac{\sqrt{\varrho^1_j\bar{\varrho}^2_n}}{\xi_j-\omega^2\xi_n}\hat{\bar\beta}^2_n
			-\frac{\sqrt{\varrho^1_j\bar\varrho^3_n}}{\xi_j-\omega^2\bar\xi_n}\hat{\bar\beta}^3_n\Big)=\sqrt{\varrho^1_j},\label{eq:B5}\\
			&	\hat\alpha^2_j-\sum_{n=1}^{N_1}\Big(\frac{\sqrt{\varrho^2_j\varrho^1_n}}{\xi_j-\xi_n}\hat{\alpha}^1_n
			-\frac{\sqrt{\varrho^2_j\bar\varrho^1_n}}{\xi_j-\bar\xi_n}\hat{\bar\beta}^1_n\Big)=0,\\
			&
			\hat{\alpha}^3_j-\sum_{n=1}^{N_1}\Big(\frac{\sqrt{\varrho^3_j\bar\varrho^2_n}}{\xi_j-\omega^2\xi_n}\hat{\bar\beta}^2_n-
			\frac{\sqrt{\varrho^3_j\bar\varrho_n^3}}{\xi_j-\omega^2\bar\xi_n}\hat{\bar\beta}^3_n\Big)=\sqrt{\varrho_j^3},\\	 	
			&	\hat{\bar\beta}^1_j-\sum_{n=1}^{N_1}\Big(\frac{\sqrt{\bar\varrho^1_j\varrho^2_n}}{\bar\xi_j-\omega\bar \xi_n}\hat{\alpha}^2_n -\frac{\sqrt{\bar\varrho^1_j\bar\varrho^3_n}}{\bar\xi_j-\omega\xi_n}\hat{\alpha}^3_n\Big)=0,\\
			&\hat{\bar\beta}^2_j-\sum_{n=1}^{N_1}\Big(\frac{\sqrt{\bar\varrho^2_j\bar\varrho^1_n}}{\xi_j-\xi_n}\hat{\bar\beta}^1_n
			-\frac{\sqrt{\bar\varrho^2_j\varrho^1_n}}{\bar\xi_j-\xi_n}\hat{\alpha}^1_n\Big)=0,\\
			&\hat{\bar\beta}^3_j-\sum_{n=1}^{N_1}\Big(\frac{\sqrt{\bar\varrho^3_j\varrho^2_n}}{ \xi_j-\omega\bar\xi_n}\hat{\alpha}^2_n-\frac{\sqrt{\bar\varrho^3_j\varrho_n^3}}{ \xi_j-\omega\xi_n}\hat{\alpha}^3_n\Big)=0, \ j=1,\dots,N_1,\label{eq:B10}
		\end{align}		
		where we have used the renormalized parameters
		\begin{align*}
			\hat\alpha^i_n=\alpha^i_n/\sqrt{\varrho_n^i},\quad\hat{\bar\beta}^i_n=\bar\beta^i_n/\sqrt{\bar\varrho_n^i},\quad\hat{\gamma}^i_n=\gamma^i_n/\sqrt{\varrho_n^i},\quad\hat{\bar\gamma}^i_n
			=\bar\gamma^i_n/\sqrt{\bar\varrho_n^i}.
		\end{align*}
		Further let
		$$ \vec{\alpha}=(\hat\alpha_1,\hat\alpha_2,\hat\alpha_3)^T, \ \
		\vec{\beta}  =(\hat{\bar\beta}_1,\hat{\bar\beta}_2,\hat{\bar\beta}_3)^T, \ \  \vec{\varrho} =(\sqrt{\varrho_1},0,\sqrt{\varrho_3})^T,
		$$
		where
		\begin{align*}
			&		
			\hat\alpha_j=\begin{pmatrix}
				\hat\alpha^j_1,\dots,\hat\alpha_{N_1}^j
			\end{pmatrix}^T	, \ \ \hat{\bar\beta}_j=\begin{pmatrix}
				\hat{\bar\beta}^j_1,\dots,\hat{\bar\beta}^j_{N_1}
			\end{pmatrix}^T	, \ \ \sqrt{\varrho}_j=\begin{pmatrix}
				\sqrt{\varrho^j_1},\dots,\sqrt{\varrho^j_{N_1}}
			\end{pmatrix}^T,
		\end{align*}	
		with $ j=1,2,3$, then system \eqref{eq:B5}-\eqref{eq:B10} can be written in  the block matrix form
		\begin{equation}\label{eq:B14}
			\begin{pmatrix}
				I&iA\\
				-iA^H&C\\
			\end{pmatrix}\begin{pmatrix}
				\vec{\alpha}\\
				\vec{ \beta}
			\end{pmatrix}=\begin{pmatrix}
				\vec{\varrho} \\
				0
			\end{pmatrix},
		\end{equation}
		where $A, B$ and $ C$  are  $3N_1\times3N_1$ matrices
		\begin{align}
			A=\begin{pmatrix}
				0&*&*\\
				*&0&0\\
				0&*&*
			\end{pmatrix},\  B=\begin{pmatrix}
				*&*&0\\
				*&0&0\\
				0&*&*
			\end{pmatrix}, \ C= \begin{pmatrix}
				I_{N_1}&0&0\\
				*&I_{N_1}&0\\
				0&0&I_{N_1}
			\end{pmatrix},
		\end{align}
		with the signal  $*$ stands for a $N_1\times N_1$ Hermitian  matrix.
		It can be shown that the matrix  $A\bar A $ is  positive definite
		and its eigenvalues   $\{\lambda_k\}_{j=1}^{N_1}\subset\mathbb{R}_+,$
		which implies  that
		\begin{equation*}
			\det(I_{N_1}+A\bar A)=\prod_{k=1}^{N_1}(1+\lambda_k)>0,
		\end{equation*}
		Hence    the system  \eqref{eq:B14} has  a unique solution.
		
		On the other hand,   $\gamma^j_n $ and $\bar\gamma^j_n $  can be uniquely obtained from  a  similar  system with \eqref{eq:B5}-\eqref{eq:B10}.
	\end{proof}
\end{appendices}
\vspace{3mm}	

\noindent{Acknowledgements}

This work is supported by  the National Natural Science
Foundation of China (Grant No. 12271104,  51879045).\vspace{2mm}

\noindent{Data Availability Statements}

The data that supports the findings of this study are available within the article.\vspace{2mm}

\noindent{\bf Conflict of Interest}

The authors have no conflicts to disclose.

\end{document}